\theoremstyle{definition}
\newtheorem{definition}{Definition}
\newtheorem{example}[definition]{Example}
\newtheorem{construction}{Construction}
\theoremstyle{plain}
\newtheorem{theorem}{Theorem}
\newtheorem{proposition}[definition]{Proposition}
\newtheorem{lemma}[definition]{Lemma}
\newtheorem{remark}[definition]{Remark}
\newtheorem{corollary}[definition]{Corollary}
\newcommand{\w}{{\rm w}}
\newcommand{\dd}{{\rm d}}
\newcommand{\C}{\mathcal{C}}
\title{Linear codes in the folded Hamming distance and\\the quasi MDS property}
\author{Umberto Mart{\'i}nez-Pe\~{n}as\thanks{umberto.martinez@uva.es} }
\author{Rub{\'e}n Rodr{\'i}guez-Ballesteros\thanks{ruben.rodriguez22@estudiantes.uva.es; rubenrb@iesgalileo.es}}
\affil{IMUVa-Mathematics Research Institute,\\University of Valladolid, Spain}
\date{}
\begin{document}

\maketitle

\begin{abstract}
In this work, we study linear codes with the folded Hamming distance, or equivalently, codes with the classical Hamming distance that are linear over a subfield. This includes additive codes. We study MDS codes in this setting and define quasi MDS (QMDS) codes and dually QMDS codes, which attain a more relaxed variant of the classical Singleton bound. We provide several general results concerning these codes, including restriction, shortening, weight distributions, existence, density, geometric description and bounds on their lengths relative to their field sizes. We provide explicit examples and a binary construction with optimal lengths relative to their field sizes, which beats any MDS code. 

\textbf{Keywords:} Additive codes, finite geometry, folded Hamming distance, MDS codes, polynomial ideal codes, weight distributions.

\end{abstract}

\section{Introduction} \label{sec intro}

In this manuscript, we study linear codes in the folded Hamming distance or, equivalently, codes in the classical Hamming distance which are linear over a subfield (this includes additive codes). 

We define quasi MDS codes (or QMDS codes), which lie in between classical MDS codes and almost MDS codes \cite{faldum} (their dimensions are larger than those of almost MDS codes for a given minimum distance). We show that their duals are not QMDS in general and then define dually QMDS codes (both themselves and their duals are QMDS), which lie between classical MDS codes and near MDS codes \cite{dodunekov2}. As we show in Section \ref{sec long qmds}, there exist dually QMDS codes whose lengths relative to their field size beat the MDS conjecture \cite{ball}. This makes them an interesting family of codes to study, since they have better parameters than almost and near MDS codes and at the same time can be longer than any classical MDS code. 

Linear codes in the folded Hamming distance have been studied in the context of byte error correction \cite{etzion}, low density MDS codes \cite{blaum-lowest, xu} and recently in relation to quantum codes \cite{ball-additive}. Notice also that linear codes in the folded Hamming distance can be seen as array or matrix codes with the Hamming distance defined column-wise \cite{blaum-lowest, xu}. In particular, they are a special case of linear codes in the sum-rank distance \cite{fnt}. However, most properties and constructions discussed in this manuscript simply do not hold for the sum-rank distance in general \cite{alberto-fundamental}.

Interestingly, most capacity-achieving and efficient list-decodable codes turn out to be QMDS or dually QMDS codes in the folded Hamming distance. Such codes fall under the umbrella family of polynomial ideal codes \cite{bhandari, noga}, which include in particular folded Reed--Solomon codes and (univariate) multiplicity codes (see \cite{bhandari}). We show their dually QMDS property in Section \ref{sec density}.

The contributions of this manuscript are as follows. In Section \ref{sec block distance}, we provide some preliminary definitions and results on the folded Hamming distance, mainly concerning duality and code equivalence. In Section \ref{sec qmds codes}, we introduce QMDS and dually QMDS codes, characterize their minimum distances and dimensions and study their restricted and shortened codes. In Section \ref{sec density}, we explicitly show the existence of dually QMDS for all parameters for large field sizes (using polynomial ideal codes) and then show that the family of dually QMDS codes is dense. In Section \ref{sec weight dist}, we study their weight distributions, which make use of the previous results on restriction and shortening, and which will be later used for bounds on code lengths. In Section \ref{sec long qmds}, we study how long (dually) QMDS codes can be. We provide two upper bounds on the code length relative to their field sizes, some examples of QMDS codes and a general binary construction of dually QMDS codes with optimal lengths, longer than any MDS code. Finally, in Section \ref{sec pseudo arcs}, we provide a 1-1 correspondence between equivalence classes of linear codes in the folded Hamming distance and equivalence classes of pseudo arcs, which are useful for constructing MSRD codes and PMDS codes \cite{liu, generalMSRD}.

\section{The folded Hamming distance} \label{sec block distance}

In this section, we introduce the folded Hamming distance, define duality and characterize its linear isometries. In the following, $ \mathbb{F}_q $ denotes the finite field with $ q $ elements. We will also denote $ [n] = \{ 1,2, \ldots, n \} $ and $ [m,n] = \{m,m+1, \ldots, n \} $ for integers $ m \leq n $.

\begin{definition}
For $\mathbf{c} = (\mathbf{c}_1,\ldots,\mathbf{c}_n) \in \mathbb{F}_{q}^{rn} $,  where $ \mathbf{c}_i \in \mathbb{F}_q^r $ for $ i \in [n] $, its folded Hamming weight is defined as $\w_F(\mathbf{c})=\lvert\{i\in[n]:\mathbf{c}_i\neq0\}\rvert$. We define the folded Hamming distance between $ \mathbf{c},\mathbf{d} \in \mathbb{F}_q^{rn} $ as $ \dd_F(\mathbf{c},\mathbf{d}) = \w_F(\mathbf{c}- \mathbf{d}) $. In general, a code is a subset $ \mathcal{C} \subseteq \mathbb{F}_q^{rn} $. We define the minimum folded Hamming distance of $ \mathcal{C} $ as $ \dd(\mathcal{C}) = \min \{ \dd_F(\mathbf{c},\mathbf{d}) : \mathbf{c},\mathbf{d} \in \mathcal{C}, \mathbf{c} \neq \mathbf{d} \} $. 

If $ \mathcal{C} \subseteq \mathbb{F}_q^{rn} $ is $ \mathbb{F}_q $-linear, notice that $ d = \dd(\mathcal{C}) = \min \{ \w_F(\mathbf{c}) : \mathbf{c} \in \mathcal{C} \setminus \{ 0 \} \} $. If $ k $ is its dimension, then we say that $ \mathcal{C} $ is a code of type $ [n,r,k,d] $.
\end{definition}

Clearly, $ \mathbb{F}_q $-linear codes in $ \mathbb{F}_q^{rn} $ with the folded Hamming distance are the same as $ \mathbb{F}_q $-linear codes in $ \mathbb{F}_{q^r}^n $ with the classical Hamming distance, due to the following. Let $ \boldsymbol\beta = (\beta_1, \ldots, \beta_r) \in \mathbb{F}_{q^r}^r $ be an ordered basis of $ \mathbb{F}_{q^r} $ over $ \mathbb{F}_q $. Define the expansion map $ \varepsilon_{\boldsymbol\beta} : \mathbb{F}_{q^r} \longrightarrow \mathbb{F}_q^r $ by $ \varepsilon_{\boldsymbol\beta} (c_1 \beta_1 + \cdots + c_r \beta_r) = (c_1, \ldots, c_r) $, for $ c_1, \ldots, c_r \in \mathbb{F}_q $. If we extend it componentwise, it is obvious that $ \varepsilon_{\boldsymbol\beta} : \mathbb{F}_{q^r}^n \longrightarrow \mathbb{F}_q^{rn} $ is an $ \mathbb{F}_q $-linear isometry considering the classical Hamming distance in $ \mathbb{F}_{q^r}^n $ and the folded Hamming distance in $ \mathbb{F}_q^{rn} $. Note, however, that over infinite fields (such as $ \mathbb{R} $) there may be no field extension of degree $ r $ for every positive integer $ r $. 

Notice that additive codes with the classical Hamming distance in $ \mathbb{F}_q^n $, where $ q = p^r $ and $ p $ is prime, are thus equivalent to $ \mathbb{F}_p $-linear codes with the folded Hamming distance in $ \mathbb{F}_p^{rn} $.

We will focus on duality based on the usual inner product in $ \mathbb{F}_q^{rn} $, given as follows.

\begin{definition}
We define the inner product between $ \mathbf{c} , \mathbf{d} \in \mathbb{F}_q^{rn} $ as $ \mathbf{c} \cdot \mathbf{d} = c_1d_1 + \cdots + c_{rn} d_{rn} $, where $ \mathbf{c} = (c_1, \ldots, c_{rn}) $ and $ \mathbf{d} = (d_1, \ldots, d_{rn}) $. Given an $ \mathbb{F}_q $-linear code $ \mathcal{C} \subseteq \mathbb{F}_q^{rn} $, we define its dual as $ \mathcal{C}^\perp = \{ \mathbf{d} \in \mathbb{F}_q^{rn} : \mathbf{c} \cdot \mathbf{d} = 0, \textrm{ for all } \mathbf{c} \in \mathcal{C} \} $.
\end{definition}

Notice that the usual inner product in $ \mathbb{F}_q^{rn} $ is not $ \mathbb{F}_{q^r} $-bilinear when considered in $ \mathbb{F}_{q^r}^n $ via the maps $ \varepsilon_{\boldsymbol\beta} $. However, when considering $ \mathbb{F}_{q^r} $-linear codes, their duals with respect to the usual inner products in $ \mathbb{F}_{q^r}^n $ and $ \mathbb{F}_q^{rn} $ coincide if we use appropriate expansion maps. 

\begin{proposition}
Let $ \mathcal{C} \subseteq \mathbb{F}_{q^r}^n $ be an $ \mathbb{F}_{q^r} $-linear code and let $ \mathcal{C}^\perp \subseteq \mathbb{F}_{q^r}^n $ denote its dual with respect to the usual inner product in $ \mathbb{F}_{q^r}^n $. If $ \boldsymbol\beta = (\beta_1, \ldots, \beta_r) $ and $ \boldsymbol\alpha = (\alpha_1, \ldots, \alpha_r) $ are dual ordered bases of $ \mathbb{F}_{q^r} $ over $ \mathbb{F}_q $ (i.e., $ {\rm Tr}(\beta_i\alpha_j) = \delta_{i,j} $, where $ {\rm Tr} $ is the trace of $ \mathbb{F}_{q^r} $ over $ \mathbb{F}_q $), then 
$$ \varepsilon_{\boldsymbol\beta}\left( \mathcal{C}^\perp \right) = \varepsilon_{\boldsymbol\alpha}(\mathcal{C})^\perp. $$
(Dual bases of $ \mathbb{F}_{q^r} $ over $ \mathbb{F}_q $ always exist, see \cite[p. 54]{lidl}.)
\end{proposition}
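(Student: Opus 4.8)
The plan is to reduce the whole statement to a single trace identity relating the two inner products, and then to close the argument by a dimension count. First I would record the key computation: for all $ \mathbf{c}, \mathbf{d} \in \mathbb{F}_{q^r}^n $, I claim that
$$ \varepsilon_{\boldsymbol\alpha}(\mathbf{c}) \cdot \varepsilon_{\boldsymbol\beta}(\mathbf{d}) = {\rm Tr}(\mathbf{c} \cdot \mathbf{d}), $$
where the left-hand inner product is the usual one in $ \mathbb{F}_q^{rn} $ and the right-hand one is the usual one in $ \mathbb{F}_{q^r}^n $. To verify this, I would expand each coordinate as $ c_i = \sum_s a_{i,s} \alpha_s $ and $ d_i = \sum_t b_{i,t} \beta_t $ with $ a_{i,s}, b_{i,t} \in \mathbb{F}_q $, so that by definition $ \varepsilon_{\boldsymbol\alpha}(c_i) = (a_{i,1}, \ldots, a_{i,r}) $ and $ \varepsilon_{\boldsymbol\beta}(d_i) = (b_{i,1}, \ldots, b_{i,r}) $. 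Using the $ \mathbb{F}_q $-linearity of the trace together with the duality relation $ {\rm Tr}(\alpha_s \beta_t) = {\rm Tr}(\beta_t \alpha_s) = \delta_{s,t} $, I would compute $ {\rm Tr}(c_i d_i) = \sum_{s,t} a_{i,s} b_{i,t} {\rm Tr}(\alpha_s \beta_t) = \sum_s a_{i,s} b_{i,s} $, which is precisely the contribution of the $ i $-th block to $ \varepsilon_{\boldsymbol\alpha}(\mathbf{c}) \cdot \varepsilon_{\boldsymbol\beta}(\mathbf{d}) $. Summing over $ i \in [n] $ yields the claimed identity.

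With this identity in hand, the inclusion $ \varepsilon_{\boldsymbol\beta}(\mathcal{C}^\perp) \subseteq \varepsilon_{\boldsymbol\alpha}(\mathcal{C})^\perp $ is immediate: if $ \mathbf{d} \in \mathcal{C}^\perp $, then $ \mathbf{c} \cdot \mathbf{d} = 0 $ for every $ \mathbf{c} \in \mathcal{C} $, hence $ {\rm Tr}(\mathbf{c} \cdot \mathbf{d}) = 0 $, and the identity forces $ \varepsilon_{\boldsymbol\alpha}(\mathbf{c}) \cdot \varepsilon_{\boldsymbol\beta}(\mathbf{d}) = 0 $ for all $ \mathbf{c} \in \mathcal{C} $. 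Since $ \varepsilon_{\boldsymbol\alpha} $ maps $ \mathcal{C} $ bijectively onto $ \varepsilon_{\boldsymbol\alpha}(\mathcal{C}) $, this says exactly that $ \varepsilon_{\boldsymbol\beta}(\mathbf{d}) $ is orthogonal to $ \varepsilon_{\boldsymbol\alpha}(\mathcal{C}) $, i.e. $ \varepsilon_{\boldsymbol\beta}(\mathbf{d}) \in \varepsilon_{\boldsymbol\alpha}(\mathcal{C})^\perp $.

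To upgrade this inclusion to the desired equality, I would compare $ \mathbb{F}_q $-dimensions. If $ \dim_{\mathbb{F}_{q^r}} \mathcal{C} = k $, then $ \dim_{\mathbb{F}_q} \varepsilon_{\boldsymbol\alpha}(\mathcal{C}) = rk $, since the expansion maps are $ \mathbb{F}_q $-linear isomorphisms; consequently $ \dim_{\mathbb{F}_q} \varepsilon_{\boldsymbol\alpha}(\mathcal{C})^\perp = rn - rk $. On the other hand $ \dim_{\mathbb{F}_{q^r}} \mathcal{C}^\perp = n - k $, so $ \dim_{\mathbb{F}_q} \varepsilon_{\boldsymbol\beta}(\mathcal{C}^\perp) = r(n-k) = rn - rk $ as well. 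Two $ \mathbb{F}_q $-subspaces of $ \mathbb{F}_q^{rn} $ of equal dimension, one contained in the other, must coincide, which finishes the proof.

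The computation itself is routine; the only point requiring genuine care is bookkeeping the two different bases and, in particular, pairing the expansion maps correctly — $ \varepsilon_{\boldsymbol\alpha} $ with $ \mathcal{C} $ and $ \varepsilon_{\boldsymbol\beta} $ with $ \mathcal{C}^\perp $ — so that the cross-terms $ {\rm Tr}(\alpha_s \beta_t) $, rather than $ {\rm Tr}(\alpha_s \alpha_t) $ or $ {\rm Tr}(\beta_s \beta_t) $, appear and collapse to $ \delta_{s,t} $. This is exactly where the hypothesis that $ \boldsymbol\beta $ and $ \boldsymbol\alpha $ are \emph{dual} bases is essential: for a generic pair of bases the identity would instead carry the full Gram matrix of trace pairings, and the clean conclusion would fail.
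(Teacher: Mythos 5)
Your proposal is correct and follows essentially the same route as the paper: establish the trace identity $\varepsilon_{\boldsymbol\alpha}(\mathbf{c})\cdot\varepsilon_{\boldsymbol\beta}(\mathbf{d})={\rm Tr}(\mathbf{c}\cdot\mathbf{d})$ via the dual-basis relation to get the inclusion $\varepsilon_{\boldsymbol\beta}(\mathcal{C}^\perp)\subseteq\varepsilon_{\boldsymbol\alpha}(\mathcal{C})^\perp$, then conclude equality by comparing $\mathbb{F}_q$-dimensions. Your write-up merely makes the key identity and the dimension count more explicit than the paper does.
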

\begin{proof}
By counting dimensions over $ \mathbb{F}_q $, we only need to show that $ \varepsilon_{\boldsymbol\beta}\left( \mathcal{C}^\perp \right) \subseteq \varepsilon_{\boldsymbol\alpha}(\mathcal{C})^\perp $. Now this holds since, given $ \mathbf{c} = (c_1,\ldots,c_n) \in \mathcal{C} $ and $ \mathbf{d} = (d_1, \ldots, d_n) \in \mathcal{C}^\perp $, with $ \varepsilon_{\boldsymbol\beta}(c_i) = (c_{i,1}, \ldots, c_{i,r}) $ and $ \varepsilon_{\boldsymbol\alpha}(d_i) = (d_{i,1}, \ldots, d_{i,r}) $, for $ i \in [n] $, we have that
$$ 0 = {\rm Tr}(0) = {\rm Tr}\left( \sum_{i=1}^n c_i d_i \right) = \sum_{i=1}^n \sum_{j=1}^r \sum_{k=1}^r c_{i,j}d_{i,k} {\rm Tr}(\beta_j \alpha_k) = \sum_{i=1}^n \sum_{j=1}^r c_{i,j}d_{i,j}. $$ 
\end{proof}

Finally, we notice that the folded Hamming distance in $ \mathbb{F}_q^{rn} $ coincides with the sum-rank distance in such a space by considering a vector in $ \mathbb{F}_q^{rn} $ as a tuple of $ n $ matrices over $ \mathbb{F}_q $ of size $ 1 \times r $. See \cite{fnt} for more information on the sum-rank distance. In particular, the $ \mathbb{F}_q $-linear isometries for the folded Hamming distance are known, see \cite[Th. 2]{sr-hamming}.

\begin{proposition} [\cite{sr-hamming}] \label{prop isometries}
Let $ \phi : \mathbb{F}_q^{rn} \longrightarrow \mathbb{F}_q^{rn} $ be an $ \mathbb{F}_q $-linear vector space isomorphism. Then $ \w_F(\phi(\mathbf{c})) = \w_F(\mathbf{c}) $, for all $ \mathbf{c} \in \mathbb{F}_q^{rn} $, if and only if there exist invertible matrices $ A_1, \ldots, A_n \in {\rm GL}_r(\mathbb{F}_q) $ and a permutation $ \sigma : [n] \longrightarrow [n] $ such that, for all $ \mathbf{c}_1, \ldots, \mathbf{c}_n \in \mathbb{F}_q^{r} $,
$$ \phi(\mathbf{c}_1, \ldots, \mathbf{c}_n) = \left( \mathbf{c}_{\sigma(1)} A_1, \ldots, \mathbf{c}_{\sigma(n)}A_n \right). $$
\end{proposition}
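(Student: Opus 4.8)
The plan is to treat the two implications separately, with the forward (``only if'') direction being the substantial one. For the converse, I would simply observe that if $ \phi $ has the stated block-monomial form, then for each $ i $ the $ i $-th output block $ \mathbf{c}_{\sigma(i)} A_i $ vanishes if and only if $ \mathbf{c}_{\sigma(i)} $ vanishes, since $ A_i $ is invertible. Hence the set of nonzero output blocks is the image under $ \sigma $ of the set of nonzero input blocks, so $ \w_F(\phi(\mathbf{c})) = \w_F(\mathbf{c}) $ for every $ \mathbf{c} \in \mathbb{F}_q^{rn} $. This direction is a one-line verification.

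For the forward direction, the idea is to recover the block decomposition of $ \mathbb{F}_q^{rn} $ intrinsically from the weight function, so that any weight-preserving isomorphism is forced to respect it. Write $ V_i = \{ \mathbf{c} \in \mathbb{F}_q^{rn} : \mathbf{c}_j = 0 \textrm{ for all } j \neq i \} $, so that $ \mathbb{F}_q^{rn} = V_1 \oplus \cdots \oplus V_n $ with $ \dim_{\mathbb{F}_q} V_i = r $, and every nonzero element of $ V_i $ has folded weight $ 1 $. The key lemma I would prove is that the $ V_i $ are \emph{precisely} the maximal $ \mathbb{F}_q $-subspaces all of whose nonzero vectors have folded weight $ 1 $: indeed, if a subspace $ W $ contained nonzero vectors supported on two distinct blocks $ i \neq j $, their sum would be nonzero on both blocks and hence have weight at least $ 2 $, so any such $ W $ lies inside a single $ V_i $, while $ V_i $ itself has this property and dimension $ r $.

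Next I would use that $ \phi $ preserves $ \w_F $ and is bijective, so $ \phi^{-1} $ also preserves $ \w_F $ and $ \phi $ carries weight-$1$ vectors to weight-$1$ vectors bijectively. Because $ \phi $ is an $ \mathbb{F}_q $-linear isomorphism, each $ \phi(V_j) $ is again an $ r $-dimensional subspace whose nonzero vectors all have weight $ 1 $; by the maximality characterization it must coincide with a single block. Since $ \phi $ is a bijection, this sets up a permutation $ \sigma $ of $ [n] $ determined by $ \phi(V_{\sigma(i)}) = V_i $. Restricting $ \phi $ to $ V_{\sigma(i)} \longrightarrow V_i $ and identifying both with $ \mathbb{F}_q^r $ through their block coordinates gives an invertible linear map, that is, right multiplication by some $ A_i \in {\rm GL}_r(\mathbb{F}_q) $. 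Finally, decomposing an arbitrary $ \mathbf{c} = (\mathbf{c}_1, \ldots, \mathbf{c}_n) $ into its block components and invoking linearity of $ \phi $ yields exactly $ \phi(\mathbf{c}) = (\mathbf{c}_{\sigma(1)} A_1, \ldots, \mathbf{c}_{\sigma(n)} A_n) $.

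The step I expect to be the crux is the intrinsic characterization of the blocks $ V_i $ as the maximal weight-$1$ subspaces, together with the observation that $ \phi(V_j) $ being an $ r $-dimensional weight-$1$ subspace forces it to \emph{equal} (not merely be contained in) one block. Everything afterwards --- extracting the permutation, reading off the matrices $ A_i $, and reassembling $ \phi $ by linearity --- is then routine bookkeeping. Since the folded Hamming distance is a special case of the sum-rank distance, this argument also recovers the cited isometry theorem of \cite{sr-hamming} in this particular instance.
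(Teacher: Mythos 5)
Your argument is correct. Note, however, that the paper does not prove this proposition at all: it is imported verbatim from \cite[Th. 2]{sr-hamming}, where it is established for the more general sum-rank metric, and the folded Hamming case is obtained by viewing each block as a $1 \times r$ matrix. Your proposal therefore supplies something the paper deliberately omits, namely a self-contained and elementary proof specialized to this metric. The two key points you identify are exactly the right ones and both check out: (i) a subspace all of whose nonzero vectors have folded weight $1$ cannot contain nonzero vectors supported on two distinct blocks (their sum would have weight $2$), so it sits inside a single $V_i$, and the $V_i$ are precisely the maximal such subspaces; (ii) since $\phi$ is a weight-preserving isomorphism, each $\phi(V_j)$ is an $r$-dimensional weight-one subspace, hence contained in, and by dimension count equal to, some $V_i$, and injectivity of $\phi$ forces the induced map on block indices to be a permutation (two blocks cannot map into the same $V_i$ for dimension reasons). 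The remaining bookkeeping (reading off $A_i \in {\rm GL}_r(\mathbb{F}_q)$ from the restriction $V_{\sigma(i)} \to V_i$ and reassembling $\phi$ by linearity, plus the one-line converse) is routine and correctly handled. What the citation buys the authors is generality (the full sum-rank classification, where the analogous "maximal weight-one subspace" analysis is more delicate because rank-one matrices do not form blocks); what your argument buys is transparency and independence from that reference in the special case actually used here.
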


\begin{definition} \label{def equivalence}
We say that two linear codes $ \mathcal{C},\mathcal{C}^\prime \subseteq \mathbb{F}_q^{rn} $ are equivalent if there is an $ \mathbb{F}_q $-linear isometry for the folded Hamming distance $ \phi : \mathbb{F}_q^{rn} \longrightarrow \mathbb{F}_q^{rn} $ such that $ \mathcal{C}^\prime = \phi(\mathcal{C}) $.
\end{definition}

The following result is straightforward from Proposition \ref{prop isometries}.

\begin{corollary} \label{cor dual equivalence}
Two linear codes are equivalent if, and only if, so are their duals.
\end{corollary}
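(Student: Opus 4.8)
The plan is to exploit the explicit description of $ \mathbb{F}_q $-linear isometries in Proposition \ref{prop isometries} together with the standard adjoint construction for the inner product. Since $ (\mathcal{C}^\perp)^\perp = \mathcal{C} $ for every $ \mathbb{F}_q $-linear code, the statement is symmetric under passing to duals, so it suffices to prove one implication: if $ \mathcal{C}^\prime = \phi(\mathcal{C}) $ for some isometry $ \phi $, then $ (\mathcal{C}^\prime)^\perp $ is equivalent to $ \mathcal{C}^\perp $.

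By Proposition \ref{prop isometries}, I would write $ \phi(\mathbf{c}_1, \ldots, \mathbf{c}_n) = (\mathbf{c}_{\sigma(1)} A_1, \ldots, \mathbf{c}_{\sigma(n)} A_n) $ with $ A_i \in {\rm GL}_r(\mathbb{F}_q) $ and $ \sigma $ a permutation of $ [n] $. Next I would introduce the adjoint map $ \psi $ defined blockwise by $ \psi(\mathbf{d})_j = \mathbf{d}_{\sigma^{-1}(j)} A_{\sigma^{-1}(j)}^T $, that is, the map associated with the permutation $ \sigma^{-1} $ and the invertible matrices $ A_{\sigma^{-1}(j)}^T $. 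A direct reindexing of the defining sum (using $ (\mathbf{c}_{\sigma(j)} A_j) \cdot \mathbf{d}_j = \mathbf{c}_{\sigma(j)} \cdot (\mathbf{d}_j A_j^T) $ and substituting $ k = \sigma(j) $) yields the adjunction identity $ \phi(\mathbf{c}) \cdot \mathbf{d} = \mathbf{c} \cdot \psi(\mathbf{d}) $ for all $ \mathbf{c}, \mathbf{d} \in \mathbb{F}_q^{rn} $.

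Since each $ A_i^T $ is again invertible, $ \psi $ has exactly the normal form prescribed by Proposition \ref{prop isometries}, hence is an isometry for the folded Hamming distance, and so is its inverse $ \psi^{-1} $. From the adjunction identity, a vector $ \mathbf{d} $ lies in $ \phi(\mathcal{C})^\perp $ precisely when $ \psi(\mathbf{d}) \in \mathcal{C}^\perp $, which gives $ (\mathcal{C}^\prime)^\perp = \phi(\mathcal{C})^\perp = \psi^{-1}(\mathcal{C}^\perp) $. Thus $ \psi^{-1} $ realizes the equivalence between $ \mathcal{C}^\perp $ and $ (\mathcal{C}^\prime)^\perp $, and applying this implication to the pair $ \mathcal{C}^\perp, (\mathcal{C}^\prime)^\perp $ together with the double-dual identity yields the converse.

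I do not anticipate a serious obstacle; the only delicate point is bookkeeping the interplay between the permutation $ \sigma $ and the transposes $ A_i^T $ when verifying that the adjoint $ \psi $ genuinely has the form of Proposition \ref{prop isometries}. Once that is checked, everything else follows in one line from the adjunction identity.
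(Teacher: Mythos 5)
Your proof is correct and is precisely the argument the paper has in mind: the paper simply declares the corollary ``straightforward from Proposition \ref{prop isometries}'', and your adjoint construction $\psi$, the identity $\phi(\mathbf{c})\cdot\mathbf{d}=\mathbf{c}\cdot\psi(\mathbf{d})$, and the observation that $\psi$ again has the monomial normal form of that proposition are exactly the details being left to the reader. The symmetry via the double dual to get the converse is also handled correctly.
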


However, note that the results that we will obtain in this manuscript are either unkown or simply do not hold for the sum-rank distance in general. For instance, MacWilliams equations (Theorem \ref{mw}) do not exist in general for the sum-rank distance \cite{alberto-fundamental}.

Throughout the remainder of the manuscript, linear will mean $ \mathbb{F}_q $-linear.

\section{Quasi MDS codes} \label{sec qmds codes}

In this section, we provide a relaxed version of the Singleton bound and define and study QMDS and dually QMDS codes.

Item 1 in the following proposition follows from the classical Singleton bound for linear or non-linear codes in the classical Hamming distance \cite[Th. 2.4.1]{pless}. Item 2 is straightforward from Item 1, and Item 3 is the dual statement.

\begin{proposition}\label{singleton}
Let $ \mathcal{C} $ be a linear code of type $ [n,r,k,d] $ and let $ \mathcal{C}^\perp $ be its dual, of type $ [n,r,rn-k,d^\perp] $. Then 
\begin{enumerate}
\item \label{one} $k\leq r(n - d +1)$, 
\item \label{two} $d\leq n-\lceil \frac{k}{r} \rceil+1=n-\lfloor \frac{k-1}{r} \rfloor$, and
\item \label{three} $d^\perp\leq \lfloor \frac{k}{r} \rfloor+1 = \lceil \frac{k+1}{r} \rceil $.
\end{enumerate}
\end{proposition}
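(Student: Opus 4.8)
The plan is to prove the three items in sequence, reducing everything to the classical Singleton bound applied to the code viewed over the extension field $ \mathbb{F}_{q^r} $. Recall from the excerpt that an $ \mathbb{F}_q $-linear code $ \mathcal{C} \subseteq \mathbb{F}_q^{rn} $ with the folded Hamming distance corresponds, via an expansion map $ \varepsilon_{\boldsymbol\beta} $, to an $ \mathbb{F}_q $-linear code in $ \mathbb{F}_{q^r}^n $ with the classical Hamming distance, and this map is an isometry. However, the code need not be $ \mathbb{F}_{q^r} $-linear, so I cannot directly invoke the linear Singleton bound over $ \mathbb{F}_{q^r} $. The key observation is that the classical Singleton bound holds for \emph{arbitrary} codes (not just linear ones), and this is exactly what the cited reference \cite[Th. 2.4.1]{pless} provides.

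For Item \ref{one}, I would regard $ \mathcal{C} $ as a code of size $ |\mathcal{C}| = q^k $ inside the ambient space of $ q^r $-ary words of length $ n $ (equivalently, $ \mathbb{F}_{q^r}^n $) with minimum Hamming distance $ d $. The non-linear Singleton bound states $ |\mathcal{C}| \leq (q^r)^{n-d+1} $. Taking logarithms base $ q $, this reads $ k \leq r(n - d + 1) $, which is precisely Item \ref{one}. This step is the crux and the only place where the Singleton bound genuinely enters; the rest is elementary rearrangement.

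For Item \ref{two}, I would simply solve Item \ref{one} for $ d $. From $ k \leq r(n-d+1) $ we get $ d \leq n + 1 - k/r $, and since $ d $ is an integer while $ k/r $ need not be, the tightest integer bound is $ d \leq n - \lceil k/r \rceil + 1 $. The stated equality $ n - \lceil \frac{k}{r} \rceil + 1 = n - \lfloor \frac{k-1}{r} \rfloor $ is the standard identity $ \lceil k/r \rceil = \lfloor (k-1)/r \rfloor + 1 $ (valid for $ k \geq 1 $), which I would note without grinding through cases.

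For Item \ref{three}, I would apply Item \ref{two} to the dual code $ \mathcal{C}^\perp $, which has type $ [n,r,rn-k,d^\perp] $. Substituting dimension $ rn - k $ into Item \ref{two} gives $ d^\perp \leq n - \lceil (rn-k)/r \rceil + 1 = n - (n - \lfloor k/r \rfloor) + 1 = \lfloor k/r \rfloor + 1 $, where I use $ \lceil (rn-k)/r \rceil = n - \lfloor k/r \rfloor $. The second displayed equality $ \lfloor \frac{k}{r} \rfloor + 1 = \lceil \frac{k+1}{r} \rceil $ is again a routine floor–ceiling identity. I do not expect any genuine obstacle here; the main subtlety worth flagging is that one must invoke the \emph{non-linear} Singleton bound in Item \ref{one}, since the folded-linear code is in general not $ \mathbb{F}_{q^r} $-linear, and the remaining work is purely arithmetic manipulation of floors and ceilings.
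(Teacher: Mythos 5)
Your proof is correct and follows exactly the paper's route: Item 1 via the Singleton bound for arbitrary (not necessarily linear) codes applied to $\mathcal{C}$ viewed as a $q^k$-element code in $\mathbb{F}_{q^r}^n$, Item 2 by integer rearrangement, and Item 3 by dualizing Item 2. You correctly flag the one genuine subtlety — that the non-linear version of the bound is needed since $\mathcal{C}$ need not be $\mathbb{F}_{q^r}$-linear — which is precisely why the paper cites the bound ``for linear or non-linear codes.''
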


As usual, a linear code is MDS if it attains the bound in Item 1. If $ r \mid k $ (necessary for the code to be MDS), then Items 1 and 2 coincide. However, when $ r \nmid k $, the second bound may be attained but the first one cannot. This motivates the following definition.

\begin{definition}
We say that a linear code of type $[n,r,k,d]$ is quasi-MDS or QMDS if $d=n-\lceil \frac{k}{r} \rceil+1=n-\lfloor \frac{k-1}{r} \rfloor$. A linear MDS code is a QMDS code such that $ r \mid k $.
\end{definition}

The dual of a linear MDS code is MDS, also for the folded Hamming distance. The following result has been independently proven in \cite[Lemma 3.3]{blaum-lowest}, \cite[Th. 1]{xu} and \cite[Th. 9]{ball-additive}.

\begin{proposition} [\cite{ball-additive, blaum-lowest, xu}] \label{prop dual of MDS}
A linear code $ \mathcal{C} \subseteq \mathbb{F}_q^{rn} $ is MDS if, and only if, so is its dual.
\end{proposition}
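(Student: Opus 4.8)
The plan is to prove that a linear code $ \mathcal{C} \subseteq \mathbb{F}_q^{rn} $ is MDS if and only if its dual is MDS, by reducing everything to the classical Hamming-distance situation over the extension field $ \mathbb{F}_{q^r} $ and invoking the classical result that the dual of an MDS code is MDS. First I would translate the folded setting into the classical one: by the expansion-map isometry $ \varepsilon_{\boldsymbol\beta} $ discussed after the first definition, an $ \mathbb{F}_q $-linear code in $ \mathbb{F}_q^{rn} $ with the folded Hamming distance corresponds (when it happens to be $ \mathbb{F}_{q^r} $-linear) to an $ \mathbb{F}_{q^r} $-linear code in $ \mathbb{F}_{q^r}^n $ with the classical Hamming distance, preserving weights and hence the MDS property; and by the earlier proposition on dual bases, duality is also preserved (up to choosing dual expansion bases $ \boldsymbol\alpha, \boldsymbol\beta $). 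So in the $ \mathbb{F}_{q^r} $-linear case the statement is immediate from the classical fact.

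The genuinely new content, however, is that the result must hold for \emph{all} $ \mathbb{F}_q $-linear codes, including additive and subfield-linear ones that need not be $ \mathbb{F}_{q^r} $-linear, so the reduction above is only a warm-up intuition and not a proof. The approach I would actually carry out is a direct parameter-counting argument analogous to the classical proof. Suppose $ \mathcal{C} $ is MDS of type $ [n,r,k,d] $ with $ r \mid k $ and $ d = n - k/r + 1 $. I want to show $ \mathcal{C}^\perp $, of type $ [n,r,rn-k,d^\perp] $, is MDS, i.e. $ d^\perp = n - (rn-k)/r + 1 = k/r + 1 $. By Item 3 of Proposition \ref{singleton} we already have $ d^\perp \leq \lfloor k/r \rfloor + 1 = k/r + 1 $, so it suffices to prove the reverse inequality $ d^\perp \geq k/r + 1 $.

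To establish $ d^\perp \geq k/r + 1 $, I would argue by contradiction using shortening/puncturing on the block structure. If there were a nonzero codeword $ \mathbf{d} \in \mathcal{C}^\perp $ of folded weight $ w_F(\mathbf{d}) \leq k/r $, its support would meet at most $ k/r $ of the $ n $ blocks; then I would consider the restriction of $ \mathcal{C} $ to the complementary set of $ n - k/r = d-1 $ blocks, or dually interpret the weight condition as a linear dependence among the corresponding column blocks of a generator matrix. The cleanest route is to use the standard equivalence ``$ \mathcal{C} $ is MDS iff every $ k/r $ blocks of a systematic generator matrix are information blocks,'' i.e. iff the code can be recovered from any $ k/r $ of its $ n $ coordinate blocks; this information-set characterization is self-dual in the block sense, and carries over verbatim to the folded setting because the block positions are exactly the units of the folded weight.

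The main obstacle will be handling the block-matrix structure carefully: over $ \mathbb{F}_q $ the individual $ r $-tuples inside a block are \emph{not} individually constrained (the code is $ \mathbb{F}_q $-linear, not $ \mathbb{F}_{q^r} $-linear), so a folded-weight argument must treat each block of $ r $ coordinates atomically and cannot simply delete single $ \mathbb{F}_q $-coordinates. I would therefore phrase the MDS condition as a statement about the ranks of $ k \times (r|S|) $ submatrices obtained by selecting a subset $ S \subseteq [n] $ of blocks, and show that the MDS property of $ \mathcal{C} $ forces full rank for every $ |S| = k/r $, which by a transpose/orthogonality argument forces the dual's minimum block-distance to be $ k/r + 1 $. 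Rather than reprove the classical combinatorics from scratch, it would be simplest to cite the classical result \cite[Th. 2.4.1]{pless} through the expansion map for the $ \mathbb{F}_{q^r} $-linear case and then observe that the general $ \mathbb{F}_q $-linear case follows by the same generator/parity-check block-rank criterion, which is what the three independent references \cite{ball-additive, blaum-lowest, xu} establish.
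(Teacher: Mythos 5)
Your proposal is correct and follows essentially the paper's own route: the paper deduces this proposition (in a remark) from the block-rank characterization of generator and parity-check matrices in Proposition \ref{prop d from generator and parity matrices} and Corollary \ref{cor gen matrices QMDS}, which is exactly the criterion you arrive at after correctly discarding the $ \varepsilon_{\boldsymbol\beta} $ reduction (valid only for $ \mathbb{F}_{q^r} $-linear codes). Your ``transpose/orthogonality argument'' passing from ``every $ k/r $ column blocks of $ G $ have rank $ k $'' to the dual statement is precisely the $ HG^\intercal = 0 $ step in the paper's proof of that proposition.
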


However, this is not the case for QMDS codes in general.

\begin{example} \label{example qmds 1}
Over any field, the linear code of type $ [3,3,4,2] $ with the following generator matrix is QMDS but its dual is of type $ [3,3,5,1] $, thus not QMDS:
$$ G = \left( \begin{array}{ccc|ccc|ccc}
1 & 0 & 0 & 1 & 0 & 0 & 0 & 0 & 0 \\
0 & 1 & 0 & 0 & 1 & 0 & 0 & 0 & 0 \\
0 & 0 & 1 & 0 & 0 & 0 & 0 & 1 & 0 \\
1 & 0 & 0 & 0 & 1 & 0 & 1 & 0 & 0 
\end{array} \right). $$
\end{example}

This motivates the following definition (we will give examples in Sections \ref{sec density} and \ref{sec long qmds}).

\begin{definition}
A linear code is dually QMDS if both itself and its dual are QMDS.
\end{definition}

\begin{remark}
A $ q $-analog of QMDS and dually QMDS codes have been considered before \cite{gorla}, called quasi maximum rank distance (QMRD) and dually QMRD codes. However, in the rank metric there exist linear MRD codes for any choice of parameters \cite{gabidulin}. In contrast, QMDS and dually QMDS codes may attain lengths and finite-field sizes not achievable by MDS codes as we show in Section \ref{sec long qmds}. Other properties make the folded-Hamming-metric counterpart different in essence, such as the density results (Section \ref{sec density}) or their applications in practice \cite{ball-additive, bhandari, blaum-lowest}.
\end{remark}

A first observation is that the dually QMDS property is preserved by equivalence, which follows from Corollary \ref{cor dual equivalence}.

\begin{proposition}
A code that is equivalent to a dually QMDS code is also dually QMDS.
\end{proposition}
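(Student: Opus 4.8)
The plan is to reduce everything to the single observation that the QMDS property depends only on the type $[n,r,k,d]$ of a code, and that code equivalence preserves this type.

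First I would fix a dually QMDS code $\mathcal{C}$ and a code $\mathcal{C}'$ equivalent to it, so that $\mathcal{C}' = \phi(\mathcal{C})$ for some $\mathbb{F}_q$-linear isometry $\phi$ of the folded Hamming distance, as in Definition \ref{def equivalence}. Since $\phi$ is by Proposition \ref{prop isometries} an $\mathbb{F}_q$-linear vector space isomorphism, it sends $\mathcal{C}$ to a subspace of the same $\mathbb{F}_q$-dimension, so $\mathcal{C}'$ has the same dimension $k$ as $\mathcal{C}$. Because $\phi$ preserves the folded Hamming weight, the minimum distances also agree: $\dd(\mathcal{C}') = \min\{\w_F(\phi(\mathbf{c})) : \mathbf{c} \in \mathcal{C}\setminus\{0\}\} = \dd(\mathcal{C})$. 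The ambient parameters $n$ and $r$ are unchanged, so $\mathcal{C}$ and $\mathcal{C}'$ share the same type $[n,r,k,d]$. As the QMDS condition $d = n - \lceil k/r\rceil + 1$ is a numerical relation among $n,r,k,d$ only, $\mathcal{C}$ being QMDS immediately forces $\mathcal{C}'$ to be QMDS.

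Next I would handle the dual. By Corollary \ref{cor dual equivalence}, since $\mathcal{C}$ and $\mathcal{C}'$ are equivalent, so are $\mathcal{C}^\perp$ and $(\mathcal{C}')^\perp$. As $\mathcal{C}$ is dually QMDS, $\mathcal{C}^\perp$ is QMDS; applying the type-preservation argument of the previous paragraph to this pair shows that $(\mathcal{C}')^\perp$ is QMDS as well. Hence both $\mathcal{C}'$ and its dual are QMDS, i.e., $\mathcal{C}'$ is dually QMDS.

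There is no genuine obstacle here: the entire content is that equivalence is realized by a weight-preserving linear isomorphism, so it leaves $k$ and $d$ invariant, and that the defining relation of QMDS codes is insensitive to anything beyond the type. The only point requiring a moment's care is to invoke Corollary \ref{cor dual equivalence} for the dual rather than attempt to track how $\phi$ acts on $(\mathcal{C}')^\perp$ directly; this is cleaner and avoids computing the isometry induced on the dual.
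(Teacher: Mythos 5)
Your argument is correct and is exactly the one the paper intends: equivalence preserves the type $[n,r,k,d]$ so the QMDS condition transfers, and Corollary \ref{cor dual equivalence} transfers the property to the duals. The paper simply states that the result follows from that corollary without spelling out the details you supply.
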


We may also characterize dually QMDS codes in terms of the sum of the distances of the code and its dual.

\begin{proposition} \label{prop d+d dual}
Let $ \mathcal{C} $ be a linear code of type $ [n,r,k,d] $ and let $ \mathcal{C}^\perp $ be its dual, of type $ [n,r,rn-k,d^\perp] $. If $ r \mid k $, then either $ d+d^\perp = n+2 $ or $ d+d^\perp \leq n $, whereas if $ r \nmid k $, then $ d+d^\perp \leq n+1 $. Furthermore, the following hold:
\begin{enumerate}
\item
$ \mathcal{C} $ is MDS if, and only if, $ \mathcal{C}^\perp $ is MDS if, and only if, $ d+d^\perp = n+2 $.
\item
$ \mathcal{C} $ is dually QMDS and $ r \nmid k $ if, and only if, $ d+d^\perp = n+1 $.
\end{enumerate}
\end{proposition}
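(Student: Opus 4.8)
The plan is to prove the opening inequalities first and then derive the two characterizations as consequences. For the bounds, I would split on whether $ r \mid k $. By Proposition \ref{singleton}, Item \ref{two}, we always have $ d \leq n - \lceil k/r \rceil + 1 $, and applying the dual statement (Item \ref{three}) together with the fact that $ \mathcal{C}^\perp $ has dimension $ rn - k $, we get $ d^\perp \leq \lceil (rn-k)/r \rceil + 1$, which simplifies to $ d^\perp \leq n - \lfloor k/r \rfloor + 1 $. Adding these two gives $ d + d^\perp \leq 2n + 2 - \lceil k/r \rceil - \lfloor k/r \rfloor $. When $ r \mid k $ we have $ \lceil k/r \rceil = \lfloor k/r \rfloor = k/r $, so the right-hand side is $ 2n + 2 - 2k/r $; this is not immediately $ n+2 $, so the clean dichotomy $ d + d^\perp = n+2 $ or $ \leq n $ must come from a parity/structure argument rather than from merely summing the two Singleton bounds. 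When $ r \nmid k $, we have $ \lceil k/r \rceil = \lfloor k/r \rfloor + 1 $, so the sum of the ceiling and floor is $ 2\lfloor k/r \rfloor + 1 $, and the bound $ d + d^\perp \leq n+1 $ should then drop out once the analogous refinement is made.

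The key realization is that the naive sum of the two Singleton bounds is too weak, so I would instead reason through the QMDS equalities directly. Write $ d = n - \lceil k/r \rceil + 1 - \delta $ and $ d^\perp = n - \lfloor k/r \rfloor + 1 - \delta^\perp $ where $ \delta, \delta^\perp \geq 0 $ measure the deficiency from the respective Singleton bounds of Items \ref{two} and \ref{three}. Then $ \mathcal{C} $ is QMDS exactly when $ \delta = 0 $ and dually QMDS exactly when $ \delta = \delta^\perp = 0 $. I would express $ d + d^\perp $ in terms of these deficiencies and then invoke Proposition \ref{prop dual of MDS}: if $ r \mid k $, the code is MDS iff its dual is, and MDS forces $ \delta = \delta^\perp = 0 $ with $ \lceil k/r \rceil = \lfloor k/r \rfloor = k/r $. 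The subtle point in the $ r \mid k $ case is to show there is no intermediate regime landing strictly between $ n $ and $ n+2 $; this is where I expect the main obstacle, and I would resolve it by showing that if the code is not MDS then $ d + d^\perp $ cannot equal $ n+1 $ when $ r \mid k $ — likely by appealing to the integrality of $ k/r $ together with Proposition \ref{prop dual of MDS} forcing both $ \delta $ and $ \delta^\perp $ to vanish simultaneously or the code to fail MDS on both sides at once.

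For Item 1, the equivalence of $ \mathcal{C} $ being MDS, $ \mathcal{C}^\perp $ being MDS, and $ d + d^\perp = n+2 $ follows by combining Proposition \ref{prop dual of MDS} with the computation above: MDS means $ r \mid k $ and $ d = n - k/r + 1 $, and by the dual-MDS proposition this happens iff $ d^\perp = n - (rn-k)/r + 1 = k/r + 1 $, and summing gives exactly $ d + d^\perp = n + 2 $. Conversely, $ d + d^\perp = n+2 $ combined with the established upper bounds forces $ r \mid k $ and both deficiencies to vanish, hence MDS. For Item 2, I would argue that when $ r \nmid k $, being dually QMDS means $ \delta = \delta^\perp = 0 $, which upon substitution gives $ d + d^\perp = 2n + 2 - \lceil k/r \rceil - \lfloor k/r \rfloor = 2n + 2 - (2\lfloor k/r \rfloor + 1) = n+1 $ after using $ \lfloor k/r \rfloor = (n \cdot r - \text{remainder terms})$; conversely $ d + d^\perp = n+1 $ together with the general bound $ d + d^\perp \leq n+1 $ in the $ r \nmid k $ case forces equality in both Singleton bounds, i.e.\ dually QMDS. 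The cleanest route throughout is to keep $ \delta $ and $ \delta^\perp $ explicit so that each claimed equality reduces to checking that the deficiencies are zero, with Proposition \ref{prop dual of MDS} supplying the one genuinely nontrivial input in the $ r \mid k $ dichotomy.
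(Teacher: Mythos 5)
Your overall plan --- sum the two Singleton bounds, introduce deficiencies $\delta,\delta^\perp$, and invoke Proposition \ref{prop dual of MDS} to rule out $d+d^\perp=n+1$ when $r\mid k$ --- is exactly the paper's strategy, and you correctly identify that the self-duality of the MDS property is the one genuinely nontrivial input. However, there is a concrete error that derails your computations: you state the dual Singleton bound as $d^\perp \leq n - \lfloor k/r\rfloor + 1$, whereas Item \ref{three} of Proposition \ref{singleton} gives $d^\perp \leq \lfloor \frac{k}{r}\rfloor + 1$ (equivalently, applying Item \ref{two} to $\mathcal{C}^\perp$, of dimension $rn-k$, gives $d^\perp \leq n - \lceil (rn-k)/r\rceil + 1 = \lfloor k/r\rfloor + 1$, not $\lceil (rn-k)/r\rceil + 1$). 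With the correct bound the sum is
$$ d + d^\perp \leq \Bigl(n - \bigl\lceil \tfrac{k}{r}\bigr\rceil + 1\Bigr) + \Bigl(\bigl\lfloor \tfrac{k}{r}\bigr\rfloor + 1\Bigr) = n + 2 - \Bigl(\bigl\lceil\tfrac{k}{r}\bigr\rceil - \bigl\lfloor\tfrac{k}{r}\bigr\rfloor\Bigr), $$
which immediately yields $\leq n+2$ if $r\mid k$ and $\leq n+1$ if $r\nmid k$. So your claim that ``the naive sum of the two Singleton bounds is too weak'' is an artifact of the error, and the downstream arithmetic does not go through: in your Item 2 you write $2n+2-(2\lfloor k/r\rfloor+1)=n+1$, which is false in general and is patched only by an unexplained ``remainder terms'' step. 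There is also an internal inconsistency: in Item 1 you correctly compute $d^\perp=k/r+1$ for an MDS code, which contradicts your deficiency normalization $d^\perp=n-\lfloor k/r\rfloor+1-\delta^\perp$ at $\delta^\perp=0$ unless $n=2k/r$.

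Once the bound on $d^\perp$ is corrected, your skeleton completes exactly as in the paper: $d+d^\perp=n+2$ forces $r\mid k$ and both Singleton bounds to be equalities, i.e.\ $\mathcal{C}$ and $\mathcal{C}^\perp$ MDS (Item 1, using Proposition \ref{prop dual of MDS} for the equivalence of the first two conditions); for $r\mid k$ the value $d+d^\perp=n+1$ would force exactly one of $\mathcal{C},\mathcal{C}^\perp$ to be MDS, contradicting Proposition \ref{prop dual of MDS}, which gives the dichotomy $d+d^\perp=n+2$ or $d+d^\perp\leq n$; and for $r\nmid k$ the value $n+1$ is attained if and only if both bounds are equalities, i.e.\ if and only if $\mathcal{C}$ is dually QMDS (Item 2).
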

\begin{proof}
If we add the Singleton bounds for $ \mathcal{C} $ and $ \mathcal{C}^\perp $ (Items 2 and 3 in Proposition \ref{singleton}), we obtain 
$$ d+d^\perp \leq \left( n - \left\lceil \frac{k}{r} \right\rceil +1 \right) + \left( \left\lfloor \frac{k}{r} \right\rfloor + 1 \right) \leq n+2 . $$
Notice however that, if $ r \nmid k $, then $ \lceil \frac{k}{r} \rceil = \lfloor \frac{k}{r} \rfloor +1 $, which implies in this case that $ d+d^\perp \leq n+1 $. Note also that, if $ r \mid k $ and $ d+d^\perp = n+1 $, then $ \mathcal{C} $ is MDS but $ \mathcal{C}^\perp $ is not (or viceversa), which is not possible by Proposition \ref{prop dual of MDS}. Thus if $ r \mid k $ and $ d+d^\perp < n+2 $, then $ d+d^\perp \leq n $.

In particular, $ d+d^\perp = n+2 $ may only happen if $ r \mid k $, in which case it must also hold that $ k = r(n-d+1) $ and $ rn-k = r(n-d^\perp+1) $, and $ \mathcal{C} $ (thus $ \mathcal{C}^\perp $) is MDS. Conversely, if $ \mathcal{C} $ (thus $ \mathcal{C}^\perp $) is MDS, then we have $ d+d^\perp = n+2 $, and Item 1 is proven.

Similarly, if $ r \nmid k $ and $ d+d^\perp = n+1 $, then equalities in both Items 2 and 3 in Proposition \ref{singleton} must hold, i.e., $ \mathcal{C} $ is dually QMDS. The reversed implication is straightforward and Item 2 is proven.
\end{proof}

As in the classical case of MDS codes, we may characterize QMDS codes in terms of their generator and parity-check matrices. We start with the following result, which generalizes \cite[Cor. 1.4.14 \& Th. 1.4.15]{pless} from $ r = 1 $ to $ r \geq 1 $ in general. For a $ k \times (rn) $ matrix $ G = (G_1|\ldots|G_n) $, where $ G_i $ is of size $ k \times r $, we say that $ G_i $ is the $ i $th column block of $ G $ formed by $ r $ columns.

\begin{proposition} \label{prop d from generator and parity matrices}
Let $ \mathcal{C} $ be a linear code of type $ [n,r,k,d] $ with generator matrix $ G $ and parity-check matrix $ H $.
\begin{enumerate}
\item
$ d $ is the maximum number such that the submatrix formed by the $ r(n-d+1) $ columns of any set of $ n-d+1 $ column blocks of $ G $ has rank $ k $.
\item
$ d-1 $ is the maximum number such that the submatrix formed by the $ r(d-1) $ columns of any set of $ d-1 $ column blocks of $ H $ has rank $ r(d-1) $.
\end{enumerate}
\end{proposition}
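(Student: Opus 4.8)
The plan is to reduce both rank conditions to statements about the supports of codewords, generalizing the classical $ r=1 $ arguments by replacing single columns with the $ r $-column blocks $ G_i $ and $ H_i $. The two facts I will use repeatedly are that $ \mathbf{c} = (\mathbf{c}_1, \ldots, \mathbf{c}_n) \in \mathcal{C} $ if and only if $ \sum_{i=1}^n H_i \mathbf{c}_i^\top = 0 $, and that $ \w_F(\mathbf{c}) $ equals the number of nonzero blocks $ \mathbf{c}_i $; I will also use that $ G $ and $ H $ have full rank, so that $ \mathbf{u} G \neq 0 $ whenever $ \mathbf{u} \neq 0 $.

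For Item 2, fix a set $ S $ of column blocks of $ H $ and note that the $ r|S| $ columns it indexes are linearly dependent precisely when there is a nonzero tuple $ (\mathbf{x}_i)_{i \in S} $ with $ \sum_{i \in S} H_i \mathbf{x}_i^\top = 0 $; completing it by zeros outside $ S $ produces a nonzero codeword of folded weight at most $ |S| $. Hence, if some set of $ m $ column blocks drops rank below $ rm $, then $ d \leq m $; equivalently, every set of at most $ d-1 $ column blocks of $ H $ has full rank $ r(d-1) $. Conversely, a codeword of minimum weight $ d $ has support $ S $ of size exactly $ d $ with all $ \mathbf{c}_i \neq 0 $ for $ i \in S $ and $ \sum_{i \in S} H_i \mathbf{c}_i^\top = 0 $, so some set of $ d $ column blocks is linearly dependent. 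This gives $ d-1 = \max\{m : \text{every } m \text{ column blocks of } H \text{ have rank } rm\} $.

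Item 1 is the dual argument via the left kernel of the blocks of $ G $. For a set $ T $ of $ n-m+1 $ column blocks, the submatrix $ (G_i)_{i \in T} $ has rank below $ k $ exactly when some nonzero $ \mathbf{u} \in \mathbb{F}_q^k $ satisfies $ \mathbf{u} G_i = 0 $ for all $ i \in T $; then $ \mathbf{c} = \mathbf{u} G $ is a nonzero codeword whose support avoids $ T $, hence $ \w_F(\mathbf{c}) \leq m-1 $. Conversely, a nonzero codeword of weight at most $ m-1 $ has at least $ n-m+1 $ zero blocks, yielding a set $ T $ of that size on which $ (G_i)_{i \in T} $ drops rank. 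Thus ``every $ n-m+1 $ column blocks of $ G $ have rank $ k $'' is equivalent to $ d \geq m $, and the maximum such $ m $ is $ d $.

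The computations are short; the only point needing care is the difference between a codeword supported inside a block set and one supported exactly on it. In the directions producing a rank deficiency I only need the weight bound $ \leq |S| $ (resp. $ \leq m-1 $), for which containment suffices, whereas in the converse directions I read off an exact support of size $ d $ from a genuine minimum-weight codeword. No input beyond the definition of $ \w_F $ and the full-rank hypotheses on $ G $ and $ H $ is required.
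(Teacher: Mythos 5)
Your proof is correct and takes essentially the same approach as the paper: both arguments translate the block-rank conditions on $G$ and $H$ into statements about the supports of nonzero codewords. The only (minor) difference is in Item 2, where the paper deduces the claim from Item 1 via the relation $HG^\intercal = 0$, while you prove it directly from the syndrome characterization $\sum_{i} H_i \mathbf{c}_i^\top = 0$; both are standard and equally valid, and your explicit handling of ``support contained in $S$'' versus ``support exactly $S$'' is a point the paper glosses over.
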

\begin{proof}
For Item 1, $ d $ is the minimum number such that it is possible to obtain zeros in $ n-d $ column blocks by making (nontrivial) linear combinations with the rows of $ G $. As a consequence, any submatrix formed by the columns of $ G $ in $ n-d+1 $ column blocks must have rank $ k $ since $ k \leq r(n-d+1) $, and there exists $ n-d $ column blocks in $ G $ whose $ r(n-d) $ columns form a matrix of rank lower than $ k $.

For Item 2, since $ HG^\intercal = 0 $, any set of $ n-d+1 $ column blocks of $ G $ has maximum rank if, and only if, it is not possible to obtain zeros in $ n - (n-d+1) = d-1 $ column blocks in nontrivial linear combinations of the rows of $ H $. In other words, any submatrix formed by the columns of $ H $ in $ d-1 $ column blocks must have rank $ r(d-1) $ since $ r(d-1) \leq rn - k $.
\end{proof}

As a consequence, we obtain the following characterizations of QMDS codes, which recovers the characterizations of classical MDS codes \cite[Th. 2.4.3]{pless} when $ r = 1 $.

\begin{corollary} \label{cor gen matrices QMDS}
Let $ \mathcal{C} $ be a linear code of type $ [n,r,k,d] $ with generator matrix $ G $ and parity-check matrix $ H $. The following are equivalent:
\begin{enumerate}
\item
$ \mathcal{C} $ is QMDS.
\item
Any $ k \times r \lceil \frac{k}{r} \rceil $ submatrix of $ G $ formed by $ \lceil \frac{k}{r} \rceil $ column blocks has rank $ k $.
\item
Any $ (rn-k) \times r (n - \lceil \frac{k}{r} \rceil) $ submatrix of $ H $ formed by $ n - \lceil \frac{k}{r} \rceil $ column blocks has rank $ r (n - \lceil \frac{k}{r} \rceil) $. 
\end{enumerate}
In particular, $ \mathcal{C} $ is dually QMDS if, and only if, any $ k \times r \lceil \frac{k}{r} \rceil $ submatrix of $G$ formed by $ \lceil \frac{k}{r} \rceil $ column blocks has rank $ k $ and any $ k \times r \lfloor \frac{k}{r} \rfloor $ submatrix of $G$ formed by $ \lfloor \frac{k}{r} \rfloor $ column blocks has rank $ r \lfloor \frac{k}{r} \rfloor $.
\end{corollary}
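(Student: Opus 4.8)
The plan is to establish the two equivalences $(1)\Leftrightarrow(2)$ and $(1)\Leftrightarrow(3)$ separately, each as a direct reading of the corresponding item of Proposition \ref{prop d from generator and parity matrices} at the critical value $n-d+1=\lceil \frac{k}{r}\rceil$ dictated by the QMDS condition $d=n-\lceil\frac{k}{r}\rceil+1$. The final ``in particular'' claim will then follow by combining condition $(2)$ for $\mathcal{C}$ with condition $(3)$ applied to the dual code $\mathcal{C}^\perp$, the point being to rewrite the dual's QMDS condition as a statement purely about $G$.

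For $(1)\Leftrightarrow(2)$ I would first record the monotonicity underlying Item 1: if every set of $m$ column blocks of $G$ spans a submatrix of rank $k$, then so does every set of $m+1$ column blocks, since enlarging a set of columns cannot decrease its rank. Hence the rank-$k$ property holds for all sets of $m$ column blocks precisely when $m\geq m_0$ for some threshold $m_0$, and Item 1 identifies $m_0=n-d+1$. Moreover $m_0\geq\lceil\frac{k}{r}\rceil$ automatically, because a $k\times rm$ submatrix can have rank $k$ only if $rm\geq k$, i.e. $m\geq\lceil\frac{k}{r}\rceil$. Condition $(2)$ is exactly the assertion that the rank-$k$ property holds at $m=\lceil\frac{k}{r}\rceil$, i.e. $m_0\leq\lceil\frac{k}{r}\rceil$; together with the previous inequality this forces $m_0=\lceil\frac{k}{r}\rceil$, equivalently $d=n-\lceil\frac{k}{r}\rceil+1$, which is QMDS. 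Conversely, reading Item 1 at $n-d+1=\lceil\frac{k}{r}\rceil$ gives $(2)$ at once.

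The equivalence $(1)\Leftrightarrow(3)$ is the dual argument via Item 2. Here the relevant property is ``every set of $\ell$ column blocks of $H$ spans $r\ell$ linearly independent columns'', which is anti-monotone: a subset of an independent set of columns is independent, so the property at $\ell$ implies the property at $\ell-1$. Thus it holds exactly for $\ell\leq\ell_0$, and Item 2 gives $\ell_0=d-1$. The constraint $\ell_0\leq n-\lceil\frac{k}{r}\rceil$ is automatic, since an $(rn-k)\times r\ell$ submatrix can have rank $r\ell$ only if $r\ell\leq rn-k$, i.e. $\ell\leq n-\lceil\frac{k}{r}\rceil$. Condition $(3)$ says the property holds at $\ell=n-\lceil\frac{k}{r}\rceil$, i.e. $\ell_0\geq n-\lceil\frac{k}{r}\rceil$; combining the two bounds yields $\ell_0=d-1=n-\lceil\frac{k}{r}\rceil$, again QMDS, and the converse is immediate.

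For the ``in particular'' statement I would write ``$\mathcal{C}$ dually QMDS'' as ``$\mathcal{C}$ QMDS and $\mathcal{C}^\perp$ QMDS''. The first is condition $(2)$: every $\lceil\frac{k}{r}\rceil$ column blocks of $G$ have rank $k$. For the second, note that $\mathcal{C}^\perp$ has type $[n,r,rn-k,d^\perp]$ and admits $G$ as a parity-check matrix; applying the already-proved equivalence $(1)\Leftrightarrow(3)$ to $\mathcal{C}^\perp$ shows it is QMDS if and only if every set of $n-\lceil\frac{rn-k}{r}\rceil$ column blocks of $G$ has full rank $r(n-\lceil\frac{rn-k}{r}\rceil)$. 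The one computation to carry out is $\lceil\frac{rn-k}{r}\rceil=n-\lfloor\frac{k}{r}\rfloor$, whence the number of blocks is $\lfloor\frac{k}{r}\rfloor$ and the required rank is $r\lfloor\frac{k}{r}\rfloor$, which is exactly the second stated condition on $G$. I expect the only real friction to be this floor/ceiling bookkeeping in the dual step and the discipline of applying each item of Proposition \ref{prop d from generator and parity matrices} in the correct direction; the monotonicity observations are elementary, and no separate appeal to Proposition \ref{singleton} is needed, since the dimension constraints $rm\geq k$ and $r\ell\leq rn-k$ are already built into the rank conditions.
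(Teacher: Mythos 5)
Your proof is correct and follows essentially the intended route: the paper gives no separate argument for this corollary, presenting it as an immediate consequence of Proposition \ref{prop d from generator and parity matrices}, and your derivation (reading Items 1 and 2 at the threshold $n-d+1=\lceil\frac{k}{r}\rceil$ via the monotonicity of the rank conditions, then applying the equivalence $(1)\Leftrightarrow(3)$ to $\mathcal{C}^\perp$ with $G$ as its parity-check matrix and computing $\lceil\frac{rn-k}{r}\rceil=n-\lfloor\frac{k}{r}\rfloor$) is exactly the omitted reasoning, carried out correctly.
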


\begin{remark}
Note that Proposition \ref{prop dual of MDS} follows immediately from the previous corollary.
\end{remark}

We now turn to restricted and shortened codes. Apart from being of interest on their own, we will use them when computing the weight distribution of dually QMDS codes in Section \ref{sec weight dist}.

\begin{definition}
Given $ I \subseteq [n] $, we define the projection $ \pi_I : \mathbb{F}_q^{rn} \longrightarrow \mathbb{F}_q^{r|I|} $ such that $ \pi_I(\mathbf{c}_1, \ldots, \mathbf{c}_n) $ $ = (\mathbf{c}_i)_{i \in I} $, for $ \mathbf{c}_1, \ldots, \mathbf{c}_n \in \mathbb{F}_q^r $.
\end{definition} 

\begin{definition}
Given a code $ \mathcal{C} \subseteq \mathbb{F}_q^{rn} $ and $ I \subseteq [n] $, we define the restricted code $ \mathcal{C}^I = \pi_I(\mathcal{C}) $, the null subcode $ \mathcal{C}(I) = \mathcal{C} \cap \ker(\pi_I) $ and the shortened code $ \mathcal{C}_I = \pi_I(\mathcal{C}([n]\setminus I)) $.
\end{definition}

Restricting and shortening QMDS codes yield again QMDS codes. 

\begin{proposition} \label{prop restricted QMDS}
Let $ \mathcal{C} $ be a linear code of type $ [n,r,k,d] $.
\begin{enumerate}
\item
$ \mathcal{C} $ is QMDS if, and only if, $ \mathcal{C}^I $ is of dimension $ k $ for all $ I \subseteq [n] $ with $ r |I| \geq k $.
\item
If $ \mathcal{C} $ is QMDS, then $ \mathcal{C}^I $ is of dimension $ k^I $ with $ r(|I|-1) < k^I \leq r|I| $ for all $ I \subseteq [n] $ with $ r |I| < k $.
\item
$ \mathcal{C} $ is QMDS if, and only if, $ \mathcal{C}_I = 0 $ for all $ I \subseteq [n] $ with $ r(n-|I|) \geq k $.
\item
If $ \mathcal{C} $ is QMDS, then $ \mathcal{C}_I $ is of dimension $ k_I $ with $ k - r(n-|I|) \leq k_I < k - r(n-|I|)+r $ for all $ I \subseteq [n] $ with $ r(n-|I|) < k $.
\end{enumerate}
Furthermore, if $ \mathcal{C} $ is QMDS, then so are $ \mathcal{C}^I $ and $ \mathcal{C}_I $, for all $ I \subseteq [n] $. 
\end{proposition}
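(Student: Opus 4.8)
The plan is to reduce all four items to statements about the ranks of column-block submatrices of a generator matrix $ G $, and to read off minimum distances from Corollary \ref{cor gen matrices QMDS}. Writing $ G_I $ for the submatrix of $ G $ formed by the column blocks indexed by $ I \subseteq [n] $, I would first record two elementary facts: that $ \dim \mathcal{C}^I = {\rm rank}(G_I) $, since $ \mathcal{C}^I = \pi_I(\mathcal{C}) $ is the row space of $ G_I $; and that $ \dim \mathcal{C}_I = k - \dim \mathcal{C}^{[n] \setminus I} $, since $ \pi_I $ is injective on codewords supported on $ I $ (so $ \dim \mathcal{C}_I = \dim \mathcal{C}([n] \setminus I) $) and rank--nullity for $ \pi_{[n] \setminus I}|_{\mathcal{C}} $ gives the identity. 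These two relations tie Items 3 and 4 (about $ \mathcal{C}_I $) back to Items 1 and 2 (about $ \mathcal{C}^I $), so the core of the work is to control $ \dim \mathcal{C}^I $.

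For Item 1, if $ \mathcal{C} $ is QMDS and $ r|I| \geq k $, then $ |I| \geq \lceil k/r \rceil $, so $ I $ contains $ \lceil k/r \rceil $ blocks whose submatrix of $ G $ already has rank $ k $ by Corollary \ref{cor gen matrices QMDS}; hence $ {\rm rank}(G_I) = k $. Conversely, a nonzero codeword of weight at most $ n - \lceil k/r \rceil $ would vanish on a set $ I $ of $ \lceil k/r \rceil $ blocks with $ r|I| \geq k $, forcing $ \dim \mathcal{C}^I < k $; the contrapositive gives $ d \geq n - \lceil k/r \rceil + 1 $, and the reverse inequality is Item 2 of Proposition \ref{singleton}. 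Item 3 is the mirror image phrased through weights: a nonzero codeword supported on $ I $ has weight at most $ |I| $, so its existence forces $ |I| \geq d $; using $ d = n - \lceil k/r \rceil + 1 $ and $ r\lceil k/r\rceil - r < k $ this yields $ r(n-|I|) < k $, so the hypothesis $ r(n-|I|) \geq k $ rules out any such codeword and $ \mathcal{C}_I = 0 $. Conversely, if $ \mathcal{C} $ is not QMDS, a minimum-weight codeword supplies $ I = {\rm supp}(\mathbf{c}) $ with $ r(n-|I|) \geq k $ but $ \mathcal{C}_I \neq 0 $.

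The bounds in Items 2 and 4 are where the QMDS hypothesis does real work. The upper bound $ k^I \leq r|I| $ is trivial since $ G_I $ has $ r|I| $ columns. For the lower bound, when $ r|I| < k $ I would extend $ I $ to a set $ I' $ of exactly $ \lceil k/r \rceil $ blocks (possible because $ r|I| < k $ forces $ |I| \leq \lceil k/r \rceil - 1 $); then $ {\rm rank}(G_{I'}) = k $ by Item 1, and subadditivity of rank across the $ \lceil k/r \rceil - |I| $ added blocks gives $ k^I \geq k - r(\lceil k/r \rceil - |I|) = r|I| - (r\lceil k/r \rceil - k) $. The key numerical point is $ 0 \leq r\lceil k/r \rceil - k \leq r-1 $, which yields $ k^I > r(|I|-1) $. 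Item 4 is then immediate from $ \dim \mathcal{C}_I = k - \dim \mathcal{C}^{[n]\setminus I} $ and Item 2 applied to $ [n] \setminus I $.

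Finally, for the closing claim that $ \mathcal{C}^I $ and $ \mathcal{C}_I $ are themselves QMDS, I would verify the transitivity identities $ (\mathcal{C}^I)^J = \mathcal{C}^J $ and $ (\mathcal{C}_I)_J = \mathcal{C}_J $ for $ J \subseteq I $, and then apply the criteria of Items 1 and 3 to the codes $ \mathcal{C}^I $ and $ \mathcal{C}_I $. The subtlety I expect to be the main obstacle is matching the quantifier ranges: I must check that every $ J \subseteq I $ triggering the QMDS test for $ \mathcal{C}^I $ (namely $ r|J| \geq k^I $) or for $ \mathcal{C}_I $ (namely $ r(|I|-|J|) \geq k_I $) already satisfies the corresponding hypothesis for $ \mathcal{C} $. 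For $ \mathcal{C}_I $ this uses the lower bound $ k_I \geq k - r(n-|I|) $ of Item 4 to deduce $ r(n-|J|) \geq k $, whence $ \mathcal{C}_J = 0 $ by Item 3. For $ \mathcal{C}^I $, in the regime $ r|I| \geq k $ one has $ k^I = k $ and reduces directly to Item 1, whereas in the regime $ r|I| < k $ the chain $ r|J| \geq k^I > r(|I|-1) $ forces $ J = I $, so the test is vacuous.
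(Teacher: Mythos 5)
Your proof is correct, and for Items 1--3 it follows essentially the same path as the paper: read off $\dim\mathcal{C}^I$ as the rank of the column-block submatrix $G_I$, invoke Corollary \ref{cor gen matrices QMDS}, and get the lower bound in Item 2 from the fact that at most $r\lceil \frac{k}{r}\rceil-k\leq r-1$ columns of $G_I$ can be redundant (your rank-subadditivity argument after extending $I$ to a set $I'$ of $\lceil\frac{k}{r}\rceil$ blocks is the same estimate in different packaging). Where you genuinely diverge is Item 4 and the closing claim. The paper proves Item 4 by a distance computation: it bounds $\dd(\mathcal{C}_I)\geq d$ and combines this with the Singleton bound to conclude simultaneously that $\mathcal{C}_I$ is QMDS and that $k_I<k-r(n-|I|)+r$. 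You instead route everything through the identity $\dim\mathcal{C}_I=k-\dim\mathcal{C}^{[n]\setminus I}$, so Item 4 drops out of Item 2 applied to $[n]\setminus I$ with no distance argument at all, and you then recover the QMDS property of $\mathcal{C}^I$ and $\mathcal{C}_I$ from the characterizations in Items 1 and 3 together with the transitivity identities $(\mathcal{C}^I)^J=\mathcal{C}^J$ and $(\mathcal{C}_I)_J=\mathcal{C}_J$; your quantifier-matching (in particular that $r|J|\geq k^I>r(|I|-1)$ forces $J=I$ when $r|I|<k$, and that $k_I\geq k-r(n-|I|)$ pushes $r(n-|J|)\geq k$ for the shortened code) is exactly the right check and it goes through. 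Your route is slightly more uniform, treating restriction and shortening by one dimension identity; the paper's route has the small advantage of exhibiting $\dd(\mathcal{C}_I)\geq\dd(\mathcal{C})$ explicitly along the way. Both are complete.
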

\begin{proof}
Item 1 (including the fact that $ \mathcal{C}^I $ is QMDS) follows directly from Corollary \ref{cor gen matrices QMDS}. 

For Item 3, we have $ d > n - \lceil \frac{k}{r} \rceil $ if, and only if, $ \mathcal{C}_I = 0 $ for all $ I \subseteq [n] $ with $ |I| \leq n - \lceil \frac{k}{r} \rceil $, which is equivalent to $ r(n-|I|) \geq k $. 

Next we prove Item 2. Let $ I \subseteq [n] $ with $ r|I| < k $. Consider a generator matrix $ G $ of $ \mathcal{C} $, and let $ G_I $ be the submatrix of $ G $ formed by the column blocks indexed by $ I $. Since $ G_I $ is of size $ k \times (r|I|) $ and $ r|I| < k $, there can be at most $ r-1 $ columns in $ G_I $ that linearly depend on the rest of the columns of $ G_I $ (otherwise there is a $ k \times r \lceil \frac{k}{r} \rceil $ submatrix of $ G $ of rank smaller than $ k $, contradicting Corollary \ref{cor gen matrices QMDS}). Hence $ k^I = \dim(\mathcal{C}^I) \geq r|I|-(r-1) > r(|I|-1) $ and any nonzero linear code in $ \mathbb{F}_q^{r|I|} $ of such a dimension is QMDS.

Finally we prove Item 4. Let $ I \subseteq [n] $ with $ r(n-|I|) < k $. We have $ k-k_I = \dim(\pi_{[n] \setminus I}(\mathcal{C}))  \leq r(n-|I|) $. Thus  
$$ \dd(\mathcal{C}_I) \geq d = n - \left\lceil \frac{k}{r} \right\rceil + 1 \geq n - \left\lceil \frac{r(n-|I|) + k_I}{r} \right\rceil + 1 = |I| - \left\lceil \frac{k_I}{r} \right\rceil + 1. $$
By the Singleton bound, $ \mathcal{C}_I $ is QMDS and the inequalities above are equalities. In particular, $ k_I + r(n-|I|) \leq r \lceil \frac{k}{r} \rceil < k +r $, hence $ k_I < k - r(n-|I|)+r $.
\end{proof}

Similarly, any restriction and shortening of a dually QMDS code is again a dually QMDS code. Moreover, their dimensions characterize the dually QMDS property.

\begin{theorem} \label{th restricted dually qmds}
For a linear code $ \mathcal{C} $ of type $ [n,r,k,d] $, the following are equivalent:
\begin{enumerate}
\item
$ \mathcal{C} $ is dually QMDS.
\item
$ \dim(\mathcal{C}^I) = k $ if $ r |I| \geq k $ and $ \dim(\mathcal{C}^I) = r|I| $ if $ r |I| < k $, for all $ I \subseteq [n] $. 
\item
$ \dim(\mathcal{C}_I) = k - r(n-|I|) $ if $ r (n-|I|) \leq k $ and $ \dim(\mathcal{C}_I) = 0 $ if $r (n-|I|) > k $, for all $ I \subseteq [n] $. 
\end{enumerate}
Furthermore, if $ \mathcal{C} $ is dually QMDS, then so are $ \mathcal{C}^I $ and $ \mathcal{C}_I $, for all $ I \subseteq [n] $.
\end{theorem}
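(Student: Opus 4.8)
The plan is to derive the entire statement from Proposition~\ref{prop restricted QMDS}, applied to both $\mathcal{C}$ and $\mathcal{C}^\perp$, together with the standard fact that restriction and shortening are exchanged under duality. The linchpin I would first establish is the block-wise duality identity
$$ (\mathcal{C}^I)^\perp = (\mathcal{C}^\perp)_I \quad\text{and}\quad (\mathcal{C}_I)^\perp = (\mathcal{C}^\perp)^I, $$
where the duals are taken inside $ \mathbb{F}_q^{r|I|} $. This is proven exactly as in the classical case: a vector $ \mathbf{d} $ supported on $ I $ lies in $ \mathcal{C}^\perp $ if and only if $ \pi_I(\mathbf{d}) $ is orthogonal to $ \pi_I(\mathbf{c}) $ for every $ \mathbf{c} \in \mathcal{C} $, because the inner product on $ \mathbb{F}_q^{rn} $ splits as a sum over the $ n $ blocks and both $ \pi_I $ and the support condition act block-wise; since $ \pi_I $ is a bijection on the subspace of codewords supported on $ I $, the image of that subspace is precisely $ (\mathcal{C}^I)^\perp $. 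The second identity follows from the first by replacing $ \mathcal{C} $ with $ \mathcal{C}^\perp $.

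For the equivalence $ 1 \Leftrightarrow 2 $, I would argue as follows. By Item~1 of Proposition~\ref{prop restricted QMDS}, $ \mathcal{C} $ being QMDS is exactly the statement that $ \dim(\mathcal{C}^I) = k $ for all $ I $ with $ r|I| \geq k $, which is the first clause of condition~2. For the second clause, observe that $ \dim(\mathcal{C}^I) = r|I| $ is equivalent to $ \mathcal{C}^I = \mathbb{F}_q^{r|I|} $, i.e.\ to $ (\mathcal{C}^I)^\perp = 0 $, which by the duality identity means $ (\mathcal{C}^\perp)_I = 0 $. Applying Item~3 of Proposition~\ref{prop restricted QMDS} to $ \mathcal{C}^\perp $ (of dimension $ rn-k $), the condition "$ (\mathcal{C}^\perp)_I = 0 $ whenever $ r(n-|I|) \geq rn-k $", i.e.\ whenever $ r|I| \leq k $, is precisely $ \mathcal{C}^\perp $ being QMDS. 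Since condition~2 reads $ \dim(\mathcal{C}^I) = \min(k, r|I|) $ for all $ I $ (the boundary $ r|I| = k $ being consistent in both clauses), it holds if and only if both $ \mathcal{C} $ and $ \mathcal{C}^\perp $ are QMDS, i.e.\ if and only if $ \mathcal{C} $ is dually QMDS.

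The equivalence with condition~3 I would obtain by dualizing. Because dually QMDS is manifestly a self-dual property ($ \mathcal{C} $ is dually QMDS if and only if $ \mathcal{C}^\perp $ is, both meaning $ \mathcal{C} $ and $ \mathcal{C}^\perp $ are QMDS), I can apply the already-proven equivalence $ 1 \Leftrightarrow 2 $ to $ \mathcal{C}^\perp $: it is dually QMDS if and only if $ \dim((\mathcal{C}^\perp)^I) = \min(rn-k, r|I|) $ for all $ I $. Using $ (\mathcal{C}_I)^\perp = (\mathcal{C}^\perp)^I $ and the dimension count $ \dim(\mathcal{C}_I) + \dim((\mathcal{C}^\perp)^I) = r|I| $, this translates into $ \dim(\mathcal{C}_I) = r|I| - \min(rn-k, r|I|) = \max(k - r(n-|I|), 0) $ for all $ I $, which is exactly condition~3.

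Finally, for the "furthermore" claim, the last line of Proposition~\ref{prop restricted QMDS} already gives that $ \mathcal{C}^I $ and $ \mathcal{C}_I $ are QMDS whenever $ \mathcal{C} $ is. To upgrade this to dually QMDS I would invoke the duality identity once more: the dual of $ \mathcal{C}^I $ inside $ \mathbb{F}_q^{r|I|} $ is $ (\mathcal{C}^\perp)_I $, and the dual of $ \mathcal{C}_I $ is $ (\mathcal{C}^\perp)^I $. As $ \mathcal{C}^\perp $ is QMDS, Proposition~\ref{prop restricted QMDS} shows both $ (\mathcal{C}^\perp)_I $ and $ (\mathcal{C}^\perp)^I $ are QMDS, so $ \mathcal{C}^I $ and $ \mathcal{C}_I $ have QMDS duals and are therefore dually QMDS. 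I expect the only real subtlety -- the hard part -- to be the careful setup of the duality identity and the bookkeeping of index ranges under the dual (tracking how $ r|I| \geq k $ for $ \mathcal{C} $ becomes $ r(n-|I|) \geq rn-k $ for $ \mathcal{C}^\perp $, and handling the boundary case $ r|I| = k $), since the substantive dimension bounds are already carried by Proposition~\ref{prop restricted QMDS}.
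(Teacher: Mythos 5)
Your proposal is correct and follows essentially the same route as the paper: both hinge on the duality identities $(\mathcal{C}^I)^\perp = (\mathcal{C}^\perp)_I$ and $(\mathcal{C}_I)^\perp = (\mathcal{C}^\perp)^I$ combined with Proposition~\ref{prop restricted QMDS} applied to both $\mathcal{C}$ and $\mathcal{C}^\perp$, and your handling of the boundary case $r|I| = k$ and the derivation of condition~3 by dualizing condition~2 match the paper's dimension computations. The only difference is organizational (you phrase $1\Leftrightarrow 2$ as a single chain of equivalences where the paper proves the two implications separately), which does not change the substance.
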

\begin{proof}
We first show that Item 2 implies Item 1. We will repeatedly use $ (\mathcal{C}^I)^\perp = (\mathcal{C}^\perp)_I $ (see \cite[Th. 1.5.7]{pless}). First, from $ \dim(\mathcal{C}^I) = k $ if $ r |I| \geq k $, we deduce that $ \mathcal{C} $ is QMDS by Proposition \ref{prop restricted QMDS}. Next, if $ |I| \leq \lfloor \frac{k}{r} \rfloor $, then
$$ \dim((\mathcal{C}^\perp)_I) = \dim((\mathcal{C}^I)^\perp) = r|I| - \dim(\mathcal{C}^I) = r|I|-r|I| = 0, $$
which means that $ \dd(\mathcal{C}^\perp) \geq \lfloor \frac{k}{r} \rfloor +1 $, and we conclude that $ \mathcal{C} $ is dually QMDS.

Using $ (\mathcal{C}^I)^\perp = (\mathcal{C}^\perp)_I $, one can similarly show that Item 3 implies Item 1. Therefore, we only need to prove that Item 1 implies Items 2 and 3, and that $ \mathcal{C}^I $ and $ \mathcal{C}_I $ are dually QMDS.

Assume that $ \mathcal{C} $ is dually QMDS. First, $ \mathcal{C}^I $ and $ \mathcal{C}_I $ are QMDS by Proposition \ref{prop restricted QMDS}. Using that $ (\mathcal{C}^I)^\perp = (\mathcal{C}^\perp)_I $, $ (\mathcal{C}_I)^\perp = (\mathcal{C}^\perp)^I $ and that $ \mathcal{C}^\perp $ is also QMDS, we deduce that both $ \mathcal{C}^I $ and $ \mathcal{C}_I $ are dually QMDS, again by Proposition \ref{prop restricted QMDS}.

We now compute dimensions. If $ r |I| \geq k $, then $ \dim(\mathcal{C}^I) = k $ by Proposition \ref{prop restricted QMDS}. Now assume that $ r |I| < k $. Since $ \mathcal{C}^\perp $ is QMDS and $ r(n-|I|) > \dim(\mathcal{C}^\perp) $, then $ \dim((\mathcal{C}^\perp)_I) = 0 $ by Proposition \ref{prop restricted QMDS}. Using again $ (\mathcal{C}^I)^\perp = (\mathcal{C}^\perp)_I $, we have
$$ \dim(\mathcal{C}^I) = r |I| - \dim((\mathcal{C}^I)^\perp) = r|I| - \dim((\mathcal{C}^\perp)_I) = r |I|. $$

We now turn to $ \mathcal{C}_I $. First, using once again $ (\mathcal{C}^I)^\perp = (\mathcal{C}^\perp)_I $, we have that 
$$ \dim(\mathcal{C}_I) = \dim(((\mathcal{C}^\perp)^I)^\perp) = r|I| - \dim((\mathcal{C}^\perp)^I). $$
Using the formula for $ \dim((\mathcal{C}^\perp)^I) $ already computed (since $ \mathcal{C}^\perp $ is dually QMDS), we obtain the formula for $ \dim(\mathcal{C}_I) $ given in the proposition. 
\end{proof}

We now illustrate how the dimension formulas in Theorem \ref{th restricted dually qmds} do not hold for all QMDS codes. 

\begin{example} \label{ex shortening not QMDS}
Consider the QMDS code $ \mathcal{C} $ of type $ [3,3,4,2] $ from Example \ref{example qmds 1}. If $ I = \{ 3 \} $, then $ \dim(\mathcal{C}^I) = 2 \neq 3 $ and if $ I = \{ 1,2 \} $, then $ \dim(\mathcal{C}_I) = 2 \neq 1 $. 
\end{example}

\section{Existence and density of dually QMDS codes} \label{sec density}

In this section we show that dually QMDS codes exist for all parameters given a sufficiently large finite field. We first give an existential result, which also shows that the family of such codes is dense (they appear with probability approaching $ 1 $ for large fields). Then we show that the explicit codes known as polynomial ideal codes \cite{bhandari, noga} are dually QMDS covering general parameters for sufficiently large finite fields. Later in Section \ref{sec long qmds}, we explore dually QMDS over small finite fields relative to their code length.

For the existence and density, we will make use of the DeMillo-Lipton-Schwartz-Zippel bound \cite[Lemma 16.2]{jukna}. We denote by $\mathbb{F}_q[x_1, x_2, \dots, x_m]$ the polynomial ring in $m$ variables $ x_1, x_2, $ $ \ldots, x_m $ over $\mathbb{F}_q$. For $F\in\mathbb{F}_q[x_1, x_2, \dots, x_m]$, we consider $\deg(F)$ as its total degree and we denote $Z(F) = \{ \mathbf{a} \in \mathbb{F}_q^m : F(\mathbf{a}) = 0 \} $.

\begin{lemma}[\textbf{DeMillo-Lipton-Schwartz-Zippel} \cite{jukna}]\label{sz}
  Let $F \in \mathbb{F}_q[x_1, x_2, \dots, x_m]$. Then $$ |Z(F)| \leq \deg(F) \cdot q^{m-1} .$$
\end{lemma}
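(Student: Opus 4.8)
The plan is to prove the bound by induction on the number of variables $m$, assuming throughout that $F \neq 0$ so that $\deg(F)$ is well defined (if $F$ is the zero polynomial the statement is vacuous, as the inequality is not meant to apply). For the base case $m = 1$, a nonzero univariate polynomial of degree $\deg(F)$ has at most $\deg(F)$ roots in any field, so $|Z(F)| \leq \deg(F) = \deg(F) \cdot q^{0}$, which is exactly the claim.

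For the inductive step I would regard $F$ as a polynomial in the single variable $x_m$ with coefficients in $\mathbb{F}_q[x_1, \ldots, x_{m-1}]$, writing
\[ F = \sum_{i=0}^{d} F_i(x_1, \ldots, x_{m-1}) \, x_m^i, \]
where $d$ is the largest power of $x_m$ that appears and $F_d \neq 0$ is the corresponding leading coefficient. The key observation is that $\deg(F_d) \leq \deg(F) - d$, since the term $F_d x_m^d$ has total degree $\deg(F_d) + d$, which cannot exceed $\deg(F)$. I would then split the count of zeros of $F$ according to whether $F_d$ vanishes at the truncated point $\mathbf{a} = (a_1, \ldots, a_{m-1})$. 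When $F_d(\mathbf{a}) \neq 0$, the restriction $F(\mathbf{a}, x_m)$ is a nonzero univariate polynomial of degree exactly $d$, so it has at most $d$ roots in $x_m$; since there are at most $q^{m-1}$ choices of $\mathbf{a}$, these points contribute at most $d \cdot q^{m-1}$ zeros. When $F_d(\mathbf{a}) = 0$, the restriction may be identically zero, contributing at most $q$ zeros each, and by the inductive hypothesis applied to $F_d$ the number of such $\mathbf{a}$ is at most $\deg(F_d) \cdot q^{m-2} \leq (\deg(F) - d)\, q^{m-2}$. Combining the two contributions gives
\[ |Z(F)| \leq d \cdot q^{m-1} + q \cdot (\deg(F) - d)\, q^{m-2} = \deg(F) \cdot q^{m-1}, \]
which closes the induction.

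The argument is essentially routine bookkeeping; the only steps requiring genuine care are the degree estimate $\deg(F_d) \leq \deg(F) - d$ on the leading coefficient, which is precisely what allows the two contributions to recombine into the clean bound $\deg(F) \cdot q^{m-1}$, and the degenerate case in which $F(\mathbf{a}, x_m)$ collapses to the zero polynomial, where one must fall back on the trivial bound of $q$ roots rather than the bound $d$. No other obstacle is anticipated.
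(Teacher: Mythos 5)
Your proof is correct: this is the standard induction-on-variables argument for the DeMillo--Lipton--Schwartz--Zippel bound, and the degree estimate $\deg(F_d)\leq\deg(F)-d$ together with the split on whether the leading coefficient vanishes is exactly the right bookkeeping (your handling of the degenerate cases $F=0$ and $F(\mathbf{a},x_m)\equiv 0$ is also appropriate). The paper does not prove this lemma at all --- it is quoted from the cited reference \cite{jukna}, where essentially this same induction is given --- so there is nothing to compare beyond noting that your argument matches the standard one.
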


\begin{proposition}
    Let $n,k,r $ be positive integers and let $\mathbb{F}_q$ be a finite field such that $ k \leq rn $ and
  \[
  C(n,r,k) := k \binom{n}{\lceil \frac{k}{r} \rceil} \binom{r \lceil \frac{k}{r} \rceil}{k} + r \left\lfloor \frac{k}{r} \right\rfloor \binom{n}{\lfloor \frac{k}{r} \rfloor} \binom{k}{r \lfloor \frac{k}{r} \rfloor} < q.
  \]
  \begin{enumerate}
    \item There exists at least one dually QMDS code $\mathcal{C}$ of type $[n, r, k, d]$, $ d = n - \lceil \frac{k}{r} \rceil + 1 $.
    \item The probability that a code $\mathcal{C}$ of type $[n, r, k, d]$ (where $d$ is not fixed) chosen uniformly at random is dually QMDS is at least $1 - \frac{C(n,r,k)}{q}$.
  \end{enumerate}
\end{proposition}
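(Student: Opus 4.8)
The plan is to regard the $krn$ entries of a generator matrix $G$ as independent indeterminates over $\mathbb{F}_q$, so that $m = krn$, and to build a single nonzero polynomial $F \in \mathbb{F}_q[x_1,\ldots,x_m]$ whose non-vanishing at a point $G \in \mathbb{F}_q^{k\times rn}$ forces the code $\mathcal{C}$ generated by $G$ to be dually QMDS. The bridge is the ``in particular'' clause of Corollary \ref{cor gen matrices QMDS}: $\mathcal{C}$ is dually QMDS exactly when every $k\times r\lceil k/r\rceil$ submatrix of $G$ formed by $\lceil k/r\rceil$ column blocks has rank $k$, and every $k\times r\lfloor k/r\rfloor$ submatrix formed by $\lfloor k/r\rfloor$ column blocks has rank $r\lfloor k/r\rfloor$.

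The key observation is that although each rank condition is an ``at least one minor is nonzero'' statement, one obtains a \emph{sufficient} condition by demanding that \emph{all} the relevant square minors be nonzero, and this sufficient condition is exactly the non-vanishing of a product of determinants. Concretely, I would set $F = F_1 F_2$, where $F_1$ is the product, over all subsets $S \subseteq [n]$ with $|S| = \lceil k/r\rceil$ and over all choices of $k$ columns among the $r\lceil k/r\rceil$ columns in the blocks of $S$, of the corresponding $k\times k$ minor of $G$; and $F_2$ is the product, over all subsets $T\subseteq[n]$ with $|T| = \lfloor k/r\rfloor$ and over all choices of $r\lfloor k/r\rfloor$ rows among the $k$ rows, of the corresponding $r\lfloor k/r\rfloor \times r\lfloor k/r\rfloor$ minor of the $k \times r\lfloor k/r\rfloor$ submatrix on the blocks of $T$. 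If $F(G)\neq 0$ then all these minors are nonzero, so both rank conditions hold and $\mathcal{C}$ is dually QMDS; being QMDS, it has $d = n - \lceil k/r\rceil +1$ as required.

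Two routine verifications then pin down the statement. First, $F$ is a nonzero polynomial: each minor, as a determinant in distinct indeterminate entries, is a nonzero polynomial (its main-diagonal monomial cannot cancel), and $\mathbb{F}_q[x_1,\ldots,x_m]$ is an integral domain, so the product $F$ is nonzero. Second, a degree count gives $\deg(F) = \deg(F_1) + \deg(F_2) = C(n,r,k)$: a $k\times k$ minor has degree $k$ and there are $\binom{n}{\lceil k/r\rceil}\binom{r\lceil k/r\rceil}{k}$ of them in $F_1$, while an $r\lfloor k/r\rfloor \times r\lfloor k/r\rfloor$ minor has degree $r\lfloor k/r\rfloor$ and there are $\binom{n}{\lfloor k/r\rfloor}\binom{k}{r\lfloor k/r\rfloor}$ of them in $F_2$.

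Finally I would invoke Lemma \ref{sz}: since $\deg(F) = C(n,r,k) < q$ by hypothesis, $|Z(F)| \leq C(n,r,k)\, q^{m-1} < q^{m}$. As $q^m$ is the number of matrices $G \in \mathbb{F}_q^{k\times rn}$, there exists $G$ with $F(G)\neq 0$, yielding a dually QMDS code of the stated type, which proves Item 1. For Item 2, choosing the entries of $G$ uniformly and independently (note that $F(G)\neq 0$ forces $G$ to have full row rank $k$, so the resulting code genuinely has dimension $k$), the probability of being dually QMDS is at least $\Pr[F(G)\neq 0] = 1 - |Z(F)|/q^m \geq 1 - C(n,r,k)/q$. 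The one point requiring care is precisely the choice of $F$: the temptation is to encode each rank condition directly, whereas the clean route is to over-ask for \emph{all} minors to be nonzero, which is simultaneously a polynomial condition, sufficient for the dually QMDS property, and of degree exactly $C(n,r,k)$.
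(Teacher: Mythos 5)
Your proposal is correct and follows essentially the same route as the paper: reduce the dually QMDS property to the minor conditions of Corollary \ref{cor gen matrices QMDS}, over-ask by requiring \emph{all} the relevant minors to be nonzero, observe that the product of these minors is a nonzero polynomial in the $krn$ entries of $G$ of total degree exactly $C(n,r,k)$, and apply the DeMillo--Lipton--Schwartz--Zippel bound. The only differences are presentational (you spell out the nonvanishing of $F$ and the full-rank remark slightly more explicitly than the paper does).
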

\label{22}
\begin{proof}
We only need to prove Item 2. Choosing a code of type $ [n,r,k,d] $ (where $ d $ is not fixed) uniformly at random is equivalent to choosing a full-rank $ k \times (rn) $ matrix $ G $ over $ \mathbb{F}_q $ (modulo multiplying on the left by invertible $ k \times k $ matrices over $ \mathbb{F}_q $). We consider the entries of $ G $ as variables $ x_1, x_2, \ldots, x_m $ with $ m = krn $, and an instantiation of $ G $ consists in choosing a point in $ \mathbb{F}_q^m $ uniformly at random and evaluating the variables $ x_1, x_2, \ldots, x_m $ in such a point.  

By Corollary \ref{cor gen matrices QMDS}, $G$ generates a dually QMDS code if any $ k \times r \lceil \frac{k}{r} \rceil $ submatrix of $G$ formed by $ \lceil \frac{k}{r} \rceil $ column blocks has rank $ k $ (QMDS condition), and any $ k \times r \lfloor \frac{k}{r} \rfloor $ submatrix of $G$ formed by $ \lfloor \frac{k}{r} \rfloor $ column blocks has rank $ r \lfloor \frac{k}{r} \rfloor $ (dual QMDS condition). This condition holds if all of the involved minors are nonzero, i.e., the product of all such minors is nonzero. Such a product is a polynomial $ F \in \mathbb{F}_q[x_1,x_2, \ldots, x_m] $ of degree $ \deg(F) = C(n,r,k) < q $. The probability that a point in $ \mathbb{F}_q^m $, chosen uniformly at random, does not lie in $ Z(F) $ is
$$ \frac{q^m - |Z(F)|}{q^m} \geq \frac{q^m - \deg(F) q^{m-1}}{q^m} = 1 - \frac{C(n,r,k)}{q}, $$
by Lemma \ref{sz}, and we are done.
\end{proof}

Noting that $ C(n,r,k) $ only depends on $ n,r,k $ and not on $ q $, we conclude the following.

\begin{corollary} \label{cor density}
Dually QMDS codes are dense in the set of codes of type $[n, r, k, d]$ (where $d$ is not fixed) since the probability that a uniformly random code of such a type is dually QMDS tends to $ 1 $ as $ q $ tends to infinity.    
\end{corollary}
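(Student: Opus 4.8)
The plan is to obtain this as an immediate limiting consequence of Item~2 of the preceding proposition, since all of the combinatorial content is already encapsulated in the constant $C(n,r,k)$. First I would fix the triple $(n,r,k)$ with $k \leq rn$ and record the one structural observation that makes the argument work: the quantity $C(n,r,k)$ is a sum of products of binomial coefficients depending only on $n$, $r$ and $k$, hence a fixed nonnegative integer that does \emph{not} vary with $q$. This is precisely the remark already made before the statement, and it is what allows a limit in $q$ to be taken with the numerator held constant.

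Next I would invoke Item~2 directly. For every prime power $q$ large enough that $C(n,r,k) < q$ (which excludes only finitely many values of $q$), a code of type $[n,r,k,d]$ with $d$ unspecified, chosen uniformly at random, is dually QMDS with probability at least $1 - \frac{C(n,r,k)}{q}$. Writing $P_q$ for this probability and using the trivial upper bound $P_q \leq 1$, I would sandwich $1 - \frac{C(n,r,k)}{q} \leq P_q \leq 1$ and let $q \to \infty$ through prime powers. Since $C(n,r,k)$ is constant in $q$, the left-hand side converges to $1$, and the squeeze forces $P_q \to 1$. Finally I would translate ``probability tending to $1$'' into the stated density assertion: for every $\varepsilon > 0$, the proportion of dually QMDS codes among all codes of the given type exceeds $1 - \varepsilon$ once $q$ is sufficiently large.

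I do not expect a genuine obstacle here, as the substance lies entirely in the preceding proposition, whose proof already applied the DeMillo-Lipton-Schwartz-Zippel bound (Lemma~\ref{sz}) to the product of the relevant minors coming from Corollary~\ref{cor gen matrices QMDS}. The only points deserving a word of care are that $q$ ranges over prime powers rather than all positive integers and that the lower bound is vacuous whenever $q \leq C(n,r,k)$; both are harmless for a statement about the limit $q \to \infty$, since one simply discards the finitely many small $q$ before passing to the limit.
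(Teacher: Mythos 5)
Your argument is correct and is essentially the paper's own: the corollary follows immediately from Item~2 of the preceding proposition once one observes that $C(n,r,k)$ does not depend on $q$, so the lower bound $1 - \frac{C(n,r,k)}{q}$ tends to $1$. The extra remarks about $q$ ranging over prime powers and the bound being vacuous for small $q$ are harmless and consistent with the paper's (one-line) justification.
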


\begin{remark}
In particular, the probability that a uniformly random code of type $ [n,r,k,d] $ (where $ d $ is not fixed) is QMDS but not dually QMDS tends to $ 0 $ as $ q $ tends to infinity. That is, the family of such codes is sparse.
\end{remark}

\begin{remark}
As shown in \cite{gruica}, maximum rank distance (MRD) codes which are linear over a subfield are not dense in general. Together with Corollary \ref{cor density}, this shows an essential difference between MRD and MDS codes that are linear over subfields. The question remains open in general for the sum-rank distance \cite{fnt} (for MSRD codes that are linear over a subfield).
\end{remark}

We now turn to an explicit construction that works for any choice of $ n,r,k $, called polynomial ideal codes, introduced in \cite{bhandari}, and their generalization \cite{noga}. However, their field size $ q $ is far from optimal in general (we will explore small field sizes in Section \ref{sec long qmds}).

\begin{definition} [\textbf{Polynomial ideal codes} \cite{bhandari, noga}] \label{def pi codes}
Let $ F_1, F_2, \ldots, F_n \in \mathbb{F}_q[x] $ be pairwise coprime polynomials of degree $ r $. By the Chinese remainder theorem, we have the $ \mathbb{F}_q $-linear vector space isomorphism
$$ \varphi : \frac{\mathbb{F}_q[x]}{\left( \prod_{i=1}^n F_i \right)} \longrightarrow \mathbb{F}_q^{rn} : F \mapsto \left( F \textrm{ mod } F_i \right)_{i=1}^n, $$
where $ F $ mod $ G $ denotes the remainder of the Euclidean division of $ F $ by $ G $, which we identify with a vector in $ \mathbb{F}_q^{\deg(G)} $ by writing its coefficients as a list. For $ k \in [rn] $, we define the polynomial ideal code
$$ \mathcal{C}^{PI}_k(F_1, F_2, \ldots, F_n) = \{ \varphi(F) : F \in \mathbb{F}_q[x], \deg(F) < k \}. $$
Given also $ A_1, \ldots, A_n \in {\rm GL}_r(\mathbb{F}_q) $, we define the generalized polynomial ideal code
$$ \mathcal{C}^{PI}_k(F_1, \ldots, F_n; A_1, \ldots, A_n) = \{ (\mathbf{c}_1A_1, \ldots, \mathbf{c}_nA_n) : (\mathbf{c}_1, \ldots, \mathbf{c}_n) \in \mathcal{C}^{CR}_k(F_1, \ldots, F_n) \}. $$
\end{definition}

Polynomial ideal codes recover as particular cases codes which are known to have good list-decoding properties, such as folded Reed--Solomon codes and multiplicity codes, see \cite{bhandari}.

Clearly, generalized polynomial ideal codes are equivalent to polynomial ideal codes by Proposition \ref{prop isometries}. However, they will be necessary to express duals, as we show later.

First, we show that generalized polynomial ideal codes are always QMDS.

\begin{theorem} \label{th polynomial ideal qmds}
The generalized polynomial ideal code $ \mathcal{C}^{PI}_k(F_1, \ldots, F_n; A_1, \ldots, A_n) $ from Definition \ref{def pi codes} is QMDS of type $ [n,r,k,d] $, $ d = n - \lceil \frac{k}{r} \rceil +1 $.
\end{theorem}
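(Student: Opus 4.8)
The plan is to reduce to the ungeneralized code and then compute the minimum distance directly from the polynomial structure. Since the matrices $ A_1, \ldots, A_n $ act blockwise by right multiplication, Proposition \ref{prop isometries} shows that $ \mathcal{C}^{PI}_k(F_1, \ldots, F_n; A_1, \ldots, A_n) $ is equivalent to $ \mathcal{C}^{PI}_k(F_1, \ldots, F_n) $. As QMDS is a property of the type $ [n,r,k,d] $ and equivalences are linear isometries (hence preserve weights and dimension, and so the type), it suffices to treat the case $ A_i = I_r $ for all $ i $.

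Next I would fix the dimension. Since $ \varphi $ is an $ \mathbb{F}_q $-linear isomorphism and $ \{ F \in \mathbb{F}_q[x] : \deg(F) < k \} $ is a $ k $-dimensional $ \mathbb{F}_q $-subspace of the quotient (here using $ k \leq rn = \deg(\prod_i F_i) $, so that distinct polynomials of degree $ < k $ stay distinct modulo $ \prod_i F_i $), the code has dimension exactly $ k $.

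The heart of the argument is the lower bound $ d \geq n - \lceil \frac{k}{r} \rceil + 1 $. Given a nonzero $ F $ with $ \deg(F) < k $, the $ i $th block of $ \varphi(F) $ vanishes precisely when $ F_i \mid F $. Let $ Z $ be the set of such indices and put $ m = |Z| $. Because the $ F_i $ are pairwise coprime of degree $ r $, the product $ \prod_{i \in Z} F_i $, of degree $ rm $, divides $ F $. Since $ F \neq 0 $, this forces $ rm \leq \deg(F) < k $, hence $ m < \frac{k}{r} $ and therefore $ m \leq \lceil \frac{k}{r} \rceil - 1 $. Consequently $ \w_F(\varphi(F)) = n - m \geq n - \lceil \frac{k}{r} \rceil + 1 $, which gives the claimed bound on $ d $.

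Finally, combining this with the Singleton bound (Item \ref{two} of Proposition \ref{singleton}), namely $ d \leq n - \lceil \frac{k}{r} \rceil + 1 $, we conclude $ d = n - \lceil \frac{k}{r} \rceil + 1 $, so the code is QMDS. I expect the only delicate point to be the divisibility step: one must invoke pairwise coprimality to pass from ``each $ F_i $ individually divides $ F $'' to ``their product divides $ F $'', and then observe that the integer inequality $ m < \frac{k}{r} $ yields $ m \leq \lceil \frac{k}{r} \rceil - 1 $ uniformly, whether or not $ r \mid k $.
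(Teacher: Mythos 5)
Your proposal is correct and follows essentially the same route as the paper: reduce to $A_i = I_r$ via Proposition \ref{prop isometries}, get the dimension from injectivity of $\varphi$ on polynomials of degree less than $k \leq rn$, and use pairwise coprimality plus degree counting to control the number of vanishing blocks, finishing with the Singleton bound. The only cosmetic difference is that you bound the weight of every nonzero codeword from below and then invoke Singleton for the matching upper bound, whereas the paper works with a minimum-weight codeword and deduces $\lceil k/r\rceil = n-d+1$ from the resulting chain of inequalities; the substance is identical.
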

\begin{proof}
It is enough to prove the result for polynomial ideal codes (i.e., $ A_i $ the identity matrix for all $ i \in [n] $), by Proposition \ref{prop isometries}.

First, clearly the code is linear. We now show that it has dimension $ k $. It is enough to show that if $ F \in \mathbb{F}_q[x] $ of $ \deg(F) < k $ satisfies $ \varphi(F) = 0 $, then $ F = 0 $, since the vector space of polynomials of degree less than $ k $ has dimension $ k $. The condition $ \varphi(F) = 0 $ means that $ \prod_{i=1}^n F_i $ divides $ F $ since $ F_1, F_2, \ldots, F_n $ are pairwise coprime. However, $ \deg(\prod_{i=1}^n F_i) = rn \geq k > \deg(F) $, thus it must hold that $ F = 0 $.

Now we show the QMDS property. Let $ F \in \mathbb{F}_q[x] $ be such that $ F \neq 0 $ and $ \deg(F) < k $. Assume that $ d = {\rm d}_F(\mathcal{C}^{PI}_k(F_1, F_2, \ldots, F_n)) = {\rm w}_F(\varphi(F)) $. Without loss of generality, we may assume that $ F $ mod $ F_i = 0 $, for $ i \in [n-d] $. That is, $ \prod_{i=1}^{n-d} F_i $ divides $ F $ since $ F_1, F_2, \ldots, F_{n-d} $ are pairwise coprime. Therefore
$$ r(n-d) = \deg \left( \prod_{i=1}^{n-d} F_i \right) \leq \deg(F) < k \leq r(n-d+1), $$
where the last inequality is the Singleton bound (Proposition \ref{singleton}). This means that $ \lceil \frac{k}{r} \rceil = n-d+1 $, i.e., $ \mathcal{C}^{PI}_k(F_1, F_2, \ldots, F_n) $ is QMDS and we are done.
\end{proof}

It remains to show that duals of generalized polynomial ideal codes are again QMDS. Such duals are hard to describe even in particular cases \cite{noga}. However, when $ F_i $ completely factorizes in $ \mathbb{F}_q $, then the duals are of the same form \cite[Th. 3.4]{noga}.

\begin{theorem}[\cite{noga}] \label{th polynomial ideal duals}
Let $ a_{i,j} \in \mathbb{F}_q $, for $ i \in [n] $ and $ j \in [r] $, be such that $ \{ a_{i,1}, \ldots, a_{i,r} \} \cap \{ a_{j,1}, \ldots, a_{j,r} \} = \varnothing $ if $ i \neq j $ and such that $ F_i = (x-a_{i,1}) \cdots (x-a_{i,r}) $, for $ i \in [n] $. Then for all $ A_1, \ldots, A_n \in {\rm GL}_r(\mathbb{F}_q) $, there exist $ B_1, \ldots, B_n \in {\rm GL}_r(\mathbb{F}_q) $ such that
$$ \mathcal{C}^{PI}_k(F_1, \ldots, F_n; A_1, \ldots, A_n)^\perp =  \mathcal{C}^{PI}_k(F_1, \ldots, F_n; B_1, \ldots, B_n) . $$
\end{theorem}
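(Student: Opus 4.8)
The plan is to reduce to the ungeneralized case $A_1 = \cdots = A_n = I$ and then exploit that complete splitting of the $F_i$ collapses a polynomial ideal code to an ordinary Reed--Solomon evaluation code, for which duality is classical. Throughout I abbreviate $F_\bullet = (F_1,\ldots,F_n)$.

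First I would record how duality interacts with the block matrices. Writing $\mathbf{c}=(\mathbf{c}_1,\ldots,\mathbf{c}_n)$, a direct computation with the inner product, using $(\mathbf{c}_i A_i)\cdot\mathbf{d}_i = \mathbf{c}_i\cdot(\mathbf{d}_i A_i^\intercal)$, shows that if $\mathcal{C}' = \{(\mathbf{c}_1 A_1,\ldots,\mathbf{c}_n A_n): \mathbf{c}\in\mathcal{C}\}$ then
\[
(\mathcal{C}')^\perp = \{ (\mathbf{e}_1 (A_1^{-1})^\intercal, \ldots, \mathbf{e}_n (A_n^{-1})^\intercal) : (\mathbf{e}_1,\ldots,\mathbf{e}_n) \in \mathcal{C}^\perp \}.
\]
Consequently it suffices to treat $A_i=I$: once $\mathcal{C}^{PI}_k(F_\bullet)^\perp = \mathcal{C}^{PI}_{rn-k}(F_\bullet; C_1,\ldots,C_n)$ is established for suitable $C_i\in\mathrm{GL}_r(\mathbb{F}_q)$, the general dual is obtained by replacing each $C_i$ with $C_i(A_i^{-1})^\intercal$, still in $\mathrm{GL}_r(\mathbb{F}_q)$. (Note that the dual has dimension $rn-k$, so the correct parameter on the right-hand side is $rn-k$.)

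Next comes the key idea. Since $F_i = \prod_{j=1}^r(x-a_{i,j})$ with the $rn$ scalars $a_{i,j}$ pairwise distinct, Lagrange interpolation identifies $\mathbb{F}_q[x]/(\prod_i F_i)$ with $\mathbb{F}_q^{rn}$ via the evaluation map $\mathrm{ev}(F)=(F(a_{i,j}))_{i,j}$. Within block $i$, reading off the coefficients of $F \bmod F_i$ versus evaluating at $a_{i,1},\ldots,a_{i,r}$ differ by the $r\times r$ Vandermonde matrix $V_i$ of those roots, which is invertible precisely because the roots are distinct; explicitly $\mathrm{ev}(F)_i = \varphi(F)_i V_i$. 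Hence the evaluation code $\mathcal{E}_m := \{\mathrm{ev}(F):\deg F < m\}$ equals $\mathcal{C}^{PI}_m(F_\bullet; V_1,\ldots,V_n)$, so a polynomial ideal code is, block by block, a Vandermonde change of basis away from an honest Reed--Solomon code on the points $\{a_{i,j}\}$. I would then invoke classical Reed--Solomon duality: for $N=rn$ distinct points the dual of $\{(F(a_{i,j}))_{i,j}:\deg F<k\}$ is $\{(v_{i,j}\,g(a_{i,j}))_{i,j}:\deg g<rn-k\}$ with $v_{i,j}=\prod_{(s,t)\neq(i,j)}(a_{i,j}-a_{s,t})^{-1}$, which follows from the power-sum identity $\sum_{i,j} a_{i,j}^m v_{i,j}=0$ for $0\le m\le rn-2$ together with a dimension count. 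Setting $D_i=\mathrm{diag}(v_{i,1},\ldots,v_{i,r})$, this reads $\mathcal{E}_k^\perp = \mathcal{C}^{PI}_{rn-k}(F_\bullet; V_1 D_1,\ldots,V_n D_n)$. Writing $\mathcal{C}^{PI}_k(F_\bullet)$ as $\mathcal{E}_k$ twisted by $V_i^{-1}$, the formula from the first paragraph turns the twist $V_i^{-1}$ into $V_i^\intercal$ upon dualizing, and composing the three block twists yields $\mathcal{C}^{PI}_k(F_\bullet)^\perp = \mathcal{C}^{PI}_{rn-k}(F_\bullet; C_1,\ldots,C_n)$ with $C_i = V_i D_i V_i^\intercal \in \mathrm{GL}_r(\mathbb{F}_q)$.

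The genuinely mathematical input is the Reed--Solomon duality formula and its underlying power-sum identity, which are classical; the conceptual step that makes everything work is recognizing that complete splitting reduces the polynomial ideal code to a Reed--Solomon code up to the block Vandermonde isometries $V_i$. The main obstacle I expect is purely organizational: keeping the three successive block-matrix twists in the right order and transposing/inverting correctly when passing through the dual, which the explicit dual-of-generalized formula in the first paragraph is designed to make unambiguous.
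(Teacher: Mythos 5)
First, a caveat on the comparison: the paper does not actually prove this theorem --- it is quoted from \cite[Th.~3.4]{noga} --- so your attempt can only be judged on its own terms. Within its scope it is well organized: the dual-of-a-block-twist formula and the resulting reduction to $A_i = I$ are correct, your observation that the right-hand side should carry the parameter $rn-k$ rather than $k$ correctly identifies a typo in the statement, and the generalized Reed--Solomon duality computation (including the power-sum identity and the final bookkeeping $B_i = V_i D_i V_i^\intercal (A_i^{-1})^\intercal$) is sound as far as it goes.

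The genuine gap is your standing assumption that the $rn$ scalars $a_{i,j}$ are pairwise distinct. The hypothesis only requires $\{a_{i,1},\ldots,a_{i,r}\} \cap \{a_{j,1},\ldots,a_{j,r}\} = \varnothing$ for $i \neq j$; within a single block the roots may repeat, and the remark immediately following Corollary \ref{cor PI are dually qmds} applies the result precisely in the case $a_{i,1} = \cdots = a_{i,r}$ for all $i$ (multiplicity codes), which your argument cannot reach. When block $i$ has a repeated root, the matrix $V_i$ you use to pass from the coefficients of $F \bmod F_i$ to the evaluations is a singular Vandermonde matrix, the evaluation map is no longer injective on $\mathbb{F}_q[x]/(\prod_i F_i)$, and the weights $v_{i,j}=\prod_{(s,t)\neq(i,j)}(a_{i,j}-a_{s,t})^{-1}$ are undefined. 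Repairing this requires replacing plain evaluation by Hermite-type data (Hasse derivatives up to the multiplicity of each root), for which the change-of-basis matrices are confluent Vandermonde matrices and the needed duality is that of derivative/multiplicity codes rather than of Reed--Solomon codes; this is exactly the extra content of \cite[Th.~3.4]{noga} that your reduction does not supply. As written, your proof establishes the theorem only in the sub-case where each $F_i$ has $r$ distinct roots.
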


Combining Theorems \ref{th polynomial ideal qmds} and \ref{th polynomial ideal duals}, we deduce the following.

\begin{corollary} \label{cor PI are dually qmds}
Let $ a_{i,j} \in \mathbb{F}_q $, for $ i \in [n] $ and $ j \in [r] $, be such that $ \{ a_{i,1}, \ldots, a_{i,r} \} \cap \{ a_{j,1}, \ldots, a_{j,r} \} = \varnothing $ if $ i \neq j $ and such that $ F_i = (x-a_{i,1}) \cdots (x-a_{i,r}) $, for $ i \in [n] $. Then for all $ A_1, \ldots, A_n \in {\rm GL}_r(\mathbb{F}_q) $, the linear code $ \mathcal{C}^{PI}_k(F_1, \ldots, F_n; A_1, \ldots, A_n) $ is dually QMDS. 
\end{corollary}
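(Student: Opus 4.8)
The plan is to derive this purely by combining the two immediately preceding theorems, so the proof should be essentially a two-line assembly. Note first that the disjointness hypothesis on the roots, namely $ \{ a_{i,1}, \ldots, a_{i,r} \} \cap \{ a_{j,1}, \ldots, a_{j,r} \} = \varnothing $ for $ i \neq j $, guarantees that the polynomials $ F_i = (x-a_{i,1}) \cdots (x-a_{i,r}) $ are pairwise coprime of degree $ r $, which is exactly the standing hypothesis of Definition \ref{def pi codes}. Hence the object $ \mathcal{C}^{PI}_k(F_1, \ldots, F_n; A_1, \ldots, A_n) $ is a genuine generalized polynomial ideal code and all earlier results apply to it.

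First I would invoke Theorem \ref{th polynomial ideal qmds} directly: it asserts that \emph{every} generalized polynomial ideal code is QMDS of type $ [n,r,k,d] $ with $ d = n - \lceil \frac{k}{r} \rceil + 1 $. This immediately yields that $ \mathcal{C}^{PI}_k(F_1, \ldots, F_n; A_1, \ldots, A_n) $ is QMDS, giving one half of the dually QMDS property without any extra work. The second step is to examine the dual. Here I would apply Theorem \ref{th polynomial ideal duals}, whose whole point is that under precisely this disjoint-roots (completely factorizable) hypothesis, the dual is again a generalized polynomial ideal code built from the \emph{same} defining polynomials $ F_1, \ldots, F_n $, only with new invertible matrices $ B_1, \ldots, B_n \in {\rm GL}_r(\mathbb{F}_q) $, i.e.
$$ \mathcal{C}^{PI}_k(F_1, \ldots, F_n; A_1, \ldots, A_n)^\perp = \mathcal{C}^{PI}_k(F_1, \ldots, F_n; B_1, \ldots, B_n). $$

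Then I would apply Theorem \ref{th polynomial ideal qmds} a second time, now to the right-hand side above: since it is itself a generalized polynomial ideal code, it is also QMDS. Thus both $ \mathcal{C}^{PI}_k(F_1, \ldots, F_n; A_1, \ldots, A_n) $ and its dual are QMDS, and by the very definition of a dually QMDS code, the proof is complete. The only subtlety worth flagging — and the closest thing to an obstacle — is the observation that Theorem \ref{th polynomial ideal duals} preserves the list of defining polynomials $ F_i $ (altering only the $ A_i $), so that the QMDS theorem can be reapplied to the dual verbatim; once this structural invariance is in hand, there is no genuine calculation left to carry out.
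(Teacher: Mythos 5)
Your proof is correct and is exactly the paper's argument: the paper derives the corollary by noting that Theorem \ref{th polynomial ideal qmds} makes every generalized polynomial ideal code QMDS, and Theorem \ref{th polynomial ideal duals} shows the dual is again such a code (same $F_i$, new $A_i$), hence also QMDS. Your observation about the disjoint roots guaranteeing pairwise coprimality is a sensible extra sanity check but not a departure from the paper's route.
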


Notice that these codes may be constructed for $ n = q $ (choosing $ a_{i,1} = \ldots = a_{i,r} $, for all $ i \in [n] $, which yields multiplicity codes). Observe that MDS codes (such that $ r | k $) exist for $ n = q^r + 1 $. In Section \ref{sec long qmds}, we will construct dually QMDS codes for $ n = 2^{r+1} - 1 $ over $ \mathbb{F}_2 $ for some dimensions that are not multiples of $ r $. 

\begin{remark}
We will see in Proposition \ref{prop qmds subcode is qmds} that a subcode of a QMDS code of the right dimension is again QMDS. In this way, one can obtain QMDS of any dimension for any length $ n \leq q^r + 1 $ by using extended Reed--Solomon codes. However, as we show in Remark \ref{remark dually qmds subcode not dually}, such subcodes are not always dually QMDS. To the best of our knowledge, the codes in Corollary \ref{cor PI are dually qmds} are the only dually QMDS codes that cover all possible lengths and dimensions (given sufficiently large fields).
\end{remark}

\section{Weight distributions} \label{sec weight dist}

In this section, we study weight distributions of linear codes in the folded Hamming distance.

\begin{definition}
The (folded) weight distribution of a linear code $ \mathcal{C} \subseteq \mathbb{F}_q^{rn} $ is defined as the numbers $A_j\left(\mathcal{C}\right)=\lvert\left\{\mathbf{c}\in\mathcal{C}: \text{w}_F(\mathbf{c})=j\right\}\rvert$, for $ j \in [0,n] $.
\end{definition}

We start by computing the weight distribution of dually QMDS codes. As in the classical MDS case, the weight distribution of dually QMDS codes only depends on their parameters, i.e., two dually QMDS codes of the same parameters have the same distribution. We later obtain a stronger result (Theorem \ref{th distribution degree freedom}) using MacWilliams equations. However, we give now a simple proof that only relies on Theorem \ref{th restricted dually qmds}.

\begin{theorem}
\label{dist}
Let $\C$ be a dually QMDS code of type $[n,r,k,d]$. Then $A_0=1$, $A_j=0$ for all $j\in[d-1]$, and if $ j \in [d,n] $, then
$$ A_j = \binom{n}{j} \sum_{i=0}^{j - d}(-1)^i \binom{j}{i} \big(q^{k-r(n-j+i)}-1\big). $$  
\end{theorem}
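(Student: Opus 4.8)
The plan is to count, for each subset $ S \subseteq [n] $, the number of codewords whose (folded) support is contained in $ S $, and then recover $ A_j $ by inclusion--exclusion. The crucial point, and the place where the dually QMDS hypothesis is really used, is that for such a code this count depends only on the size $ |S| $. Indeed, the codewords supported inside $ S $ are exactly those vanishing on $ [n] \setminus S $, i.e. the null subcode $ \mathcal{C}([n] \setminus S) $; since $ \pi_S $ is injective on $ \mathcal{C}([n]\setminus S) $ we have $ \dim \mathcal{C}([n]\setminus S) = \dim \mathcal{C}_S $, and Item 3 of Theorem \ref{th restricted dually qmds} gives $ \dim \mathcal{C}_S = k - r(n-|S|) $ when $ r(n-|S|) \leq k $ and $ 0 $ otherwise. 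Writing $ s = |S| $, I would therefore set
$$ f(s) = q^{\max\{ k - r(n-s),\, 0 \}}, $$
so that $ f(s) $ is precisely the number of codewords supported in any fixed set of size $ s $. I would emphasize here that it is exactly Theorem \ref{th restricted dually qmds} that makes this count independent of \emph{which} set of a given size is chosen; for a merely QMDS code the shortened dimensions may vary (as in Example \ref{ex shortening not QMDS}), and no such clean formula would result.

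Next I would apply Möbius inversion on the Boolean lattice of subsets. Fixing $ T $ with $ |T| = j $, the number $ N_=(T) $ of codewords with support exactly $ T $ satisfies $ f(|S|) = \sum_{T' \subseteq S} N_=(T') $, whence
$$ N_=(T) = \sum_{S \subseteq T} (-1)^{|T|-|S|} f(|S|) = \sum_{i=0}^{j} (-1)^i \binom{j}{i} f(j-i), $$
the last equality grouping subsets by cardinality $ s = j-i $ and using that $ f $ depends only on cardinality. Since $ N_=(T) $ then depends only on $ j $, summing over the $ \binom{n}{j} $ sets of size $ j $ yields $ A_j = \binom{n}{j} \sum_{i=0}^{j} (-1)^i \binom{j}{i} f(j-i) $.

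It remains to massage this into the stated shape. For $ j \geq 1 $ the identity $ \sum_{i=0}^{j} (-1)^i \binom{j}{i} = 0 $ lets me replace $ f(j-i) $ by $ f(j-i) - 1 $ without changing the sum. Using $ d = n - \lceil k/r \rceil + 1 $, i.e. $ n - d = \lceil k/r \rceil - 1 $, I would check that $ k - r(n-j+i) > 0 $ holds exactly for $ i \leq j - d $: if $ i \leq j-d $ then $ n-j+i \leq \lceil k/r\rceil - 1 < k/r $, while if $ i \geq j-d+1 $ then $ n-j+i \geq \lceil k/r\rceil \geq k/r $ (treating the cases $ r \mid k $ and $ r \nmid k $ separately). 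Consequently $ f(j-i) - 1 = q^{k-r(n-j+i)} - 1 $ for $ i \leq j-d $ and $ f(j-i) - 1 = 0 $ for $ i \geq j-d+1 $, which truncates the sum to $ 0 \leq i \leq j-d $ and gives the claimed formula for $ j \in [d,n] $. For $ 1 \leq j \leq d-1 $ the range $ [0,j-d] $ is empty (equivalently every index satisfies $ i \geq 0 \geq j-d+1 $), so all terms vanish and $ A_j = 0 $; and $ A_0 = 1 $ is immediate.

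The only genuinely delicate step is the first one: confirming that the support-containment count depends solely on $ |S| $ and equals $ f(s) $. Everything afterwards is the standard MDS-type inclusion--exclusion together with careful floor/ceiling bookkeeping to pin down the summation range and to insert the $ -1 $. I expect the main obstacle to be precisely this boundary analysis, distinguishing $ r \mid k $ from $ r \nmid k $, which is exactly where the definition of $ d $ enters.
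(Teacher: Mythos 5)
Your proposal is correct and follows essentially the same route as the paper: both use Theorem \ref{th restricted dually qmds} to see that the number of codewords supported in a set depends only on its size, then apply inclusion--exclusion (your M\"obius inversion is the paper's $N_t$-based computation in disguise) and the same binomial bookkeeping to truncate the sum at $i = j-d$. The only cosmetic difference is that you insert the $-1$ via the vanishing alternating sum $\sum_{i=0}^{j}(-1)^i\binom{j}{i}=0$, whereas the paper splits the sum at $\lfloor k/r\rfloor + j - n$ and recombines; both handle the $r \mid k$ boundary case correctly.
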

\begin{proof}
The only non-trivial cases are those with $ j \in [d,n] $. By Theorem \ref{th restricted dually qmds}, for $ I \subseteq [n] $,
\begin{equation}
|\C_I| = \left\lbrace \begin{array}{ll}
q^{k-r(n - |I|)} & \textrm{ if } r(n-|I|) \leq k, \\
1 & \textrm{ if } r(n-|I|) > k.
\end{array} \right.
\label{eq dist dually qmds}
\end{equation}
Note that $ N_t = \sum_{|I| = n-t} |\C_I| $ counts the number of words in $ \C $ with folded weight $ \leq n-t $, counted once for each $ \C_I $ they appear in, where $ \lvert I\rvert = n-t $. By (\ref{eq dist dually qmds}), we have
$$ N_t = \left\lbrace \begin{array}{ll}
\binom{n}{t} q^{k-rt} & \textrm{ if } rt\leq k, \\
\binom{n}{t} & \textrm{ if } rt> k.
\end{array} \right. $$
By the inclusion-exclusion principle, we deduce that
$$ A_j = \sum_{i=0}^{j} (-1)^i \binom{n-j+i}{i} N_{n-j+i} = $$
$$ \sum_{i=0}^{\lfloor\frac{k}{r}\rfloor+j-n}(-1)^i\binom{n-j+i}{i}\binom{n}{n-j+i}q^{k-r(n-j+i)}+\sum_{i=\lfloor\frac{k}{r}\rfloor+j-n+1}^{j}(-1)^i\binom{n-j+i}{i}\binom{n}{n-j+i} . $$
Using $ \binom{n-j+i}{i} \binom{n}{n-j+i} = \binom{n}{j}\binom{j}{i} $ and $\sum_{i=j-v+1}^{j}(-1)^i\binom{j}{i}=-\sum_{i=0}^{j-v}(-1)^i\binom{j}{i}$, we get
$$ A_j=\binom{n}{j} \sum_{i=0}^{\lfloor\frac{k}{r}\rfloor+j-n}(-1)^i\binom{j}{i} \big(q^{k-r(n-j+i)}-1\big), $$ 
for $ j \in [d,n] $. Now, if $ r \nmid k $, then we obtain the desired formula since $\lfloor\frac{k}{r}\rfloor+j-n=\lceil\frac{k}{r}\rceil-1+j-n=j-d$. Finally, if $ r \mid k $, then $\lfloor\frac{k}{r}\rfloor+j-n=\frac{k}{r}-1+j-n=j-d+1$, but the $ (j-d+1) $th term in the sum is zero, and we obtain again the desired formula.
\end{proof}

In fact, when the code is not dually QMDS, we have certain degrees of freedom for the weight distribution of the code. We show this stronger result in Theorem \ref{th distribution degree freedom} below. To prove it, we will need MacWilliams equations for the folded Hamming distance, which are of interest on their own and are new to the best of our knowledge.  

\begin{theorem}[\textbf{MacWilliams Equations}]\label{mw}
Let $\mathcal{C}$ be a code of type $[n,r,k,d]$, and let $A_j=A_j\left(\mathcal{C}\right)$ and $A_j^{\perp}=A_j\left(\mathcal{C}^{\perp}\right)$, for $ j \in [0,n] $. Then, for $ v \in [0,n] $,
$$\sum_{j=0}^{n-v}\binom{n-j}{v} A_j=q^{k-rv} \sum_{j=0}^{v}\binom{n-j}{n-v} A_j^{\perp}. $$
\end{theorem}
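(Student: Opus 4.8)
The plan is to prove the identity by a double-counting argument on codewords that vanish on prescribed blocks, mirroring the classical case but carrying the block structure of the folded Hamming distance. For a block-set $ T \subseteq [n] $ with $ |T| = v $, consider $ |\mathcal{C}(T)| $, the number of codewords vanishing on every block indexed by $ T $. Summing over all such $ T $ and interchanging the order of summation, each codeword of folded weight $ j $ is counted once for every $ v $-subset $ T $ contained in its set of $ n-j $ zero-blocks, of which there are $ \binom{n-j}{v} $. Hence $ \sum_{|T|=v} |\mathcal{C}(T)| = \sum_{j=0}^{n-v} \binom{n-j}{v} A_j $, the left-hand side. Applying the same computation to $ \mathcal{C}^\perp $ with subsets $ U $ of size $ n-v $ gives $ \sum_{|U|=n-v} |\mathcal{C}^\perp(U)| = \sum_{j=0}^{v} \binom{n-j}{n-v} A_j^\perp $, where the terms with $ j > v $ drop out because then $ \binom{n-j}{n-v} = 0 $.

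The core of the argument is a dimension formula linking the two counts. For $ T \subseteq [n] $ with complement $ \bar T = [n] \setminus T $, I claim that $ \dim \mathcal{C}(T) = k - r|T| + \dim \mathcal{C}^\perp(\bar T) $. Indeed, $ \mathcal{C}(T) = \ker(\pi_T|_{\mathcal{C}}) $, so rank-nullity yields $ \dim \mathcal{C}(T) = k - \dim \mathcal{C}^T $. Using the restriction/shortening duality $ (\mathcal{C}^T)^\perp = (\mathcal{C}^\perp)_T $ inside $ \mathbb{F}_q^{r|T|} $ (the same identity invoked in Theorem \ref{th restricted dually qmds}), we get $ \dim \mathcal{C}^T = r|T| - \dim (\mathcal{C}^\perp)_T $. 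Since the shortened code $ (\mathcal{C}^\perp)_T = \pi_T(\mathcal{C}^\perp(\bar T)) $ has the same dimension as the null subcode $ \mathcal{C}^\perp(\bar T) $ (the projection $ \pi_T $ being injective on the latter, whose elements vanish outside $ T $), we conclude $ \dim (\mathcal{C}^\perp)_T = \dim \mathcal{C}^\perp(\bar T) $. Combining the three equalities gives the claim, and exponentiating yields $ |\mathcal{C}(T)| = q^{k-rv}\, |\mathcal{C}^\perp(\bar T)| $ whenever $ |T| = v $.

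To finish I would sum this last identity over all $ v $-subsets $ T $; as $ T $ ranges over the $ v $-subsets, $ \bar T $ ranges over all $ (n-v) $-subsets $ U $, so $ \sum_{|T|=v} |\mathcal{C}(T)| = q^{k-rv} \sum_{|U|=n-v} |\mathcal{C}^\perp(U)| $. Substituting the two double-counting expressions gives exactly $ \sum_{j=0}^{n-v} \binom{n-j}{v} A_j = q^{k-rv} \sum_{j=0}^{v} \binom{n-j}{n-v} A_j^\perp $. The main obstacle is the dimension formula, specifically verifying that the standard coordinate-wise duality $ (\mathcal{C}^T)^\perp = (\mathcal{C}^\perp)_T $ survives the grouping of coordinates into blocks. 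This is precisely why the ordinary inner product on $ \mathbb{F}_q^{rn} $ (rather than any $ \mathbb{F}_{q^r} $-bilinear form) is the correct notion of duality here: it is genuinely $ \mathbb{F}_q $-coordinate-wise, so restricting and shortening on whole blocks behaves identically to the $ r=1 $ case, and the only new feature is bookkeeping the factor $ r $ in the exponent $ k - rv $.
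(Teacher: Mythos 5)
Your proof is correct and follows essentially the same route as the paper: both arguments count pairs of codewords and block-sets on each side and hinge on the duality $(\mathcal{C}^I)^\perp = (\mathcal{C}^\perp)_I$, with your dimension formula $|\mathcal{C}(T)| = q^{k-rv}\,|\mathcal{C}^\perp(\bar T)|$ being just a repackaging of the paper's step from $q^{k-k_I}$ to $q^{k-rv}\lvert(\mathcal{C}^I)^\perp\rvert$. The only cosmetic difference is that you sum over complements $\bar T$ and count in the null subcodes of $\mathcal{C}^\perp$ directly, while the paper works with the weight enumerators of the shortened codes $(\mathcal{C}^\perp)_I$.
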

\begin{proof}
Denote $k={\rm dim}\left(\mathcal{C}\right)$ and $k_I={\rm dim}\left(\mathcal{C}^I\right)$, for $ I \subseteq [n] $. We have
    \begin{equation}
    \label{e1}
    \sum_{j=0}^{n-v}\binom{n-j}{v} A_j=\sum_{\lvert I\rvert=v}q^{k-k_I},
    \end{equation}
    for $ v \in [0,n] $, since both sides count the pairs $(\mathbf{c},I)$ such that $\mathbf{c}\in\mathcal{C}$, $I\subseteq [n]$, $\lvert I\rvert=v$, and $\pi_I(\mathbf{c})= \mathbf{0} $. First, for $\mathbf{c} \in \mathcal{C} $ such that $\w_F(\mathbf{c})=j$, there are $\binom{n-j}{v}$ possible sets $ I \subseteq [n] $ with $ |I| = v $ and $ \pi_I(\mathbf{c}) = \mathbf{0} $. Second, given $ I \subseteq [n] $ with $ |I| = v $, there are $q^{k-k_I}$ words in $\mathcal{C}$ with zeros in the $v$ blocks of $I$, since ${\rm dim}\left(\mathcal{C}\cap\ker (\pi_I) \right)= {\rm dim}\left(\mathcal{C}\right)-{\rm dim}\left(\pi_I(\mathcal{C})\right) = k-k_I$. 
    
    Next, the right-hand side of (\ref{e1}) equals
    \begin{equation}
    \label{e2}
    q^{k-rv}\sum_{ \lvert I\rvert=v} \left\lvert \left(\mathcal{C}^I\right)^{\perp} \right\rvert = q^{k-rv}\sum_{ \lvert I\rvert=v}\sum_{j=0}^v A_j\left(\left(\mathcal{C}^I\right)^{\perp}\right)= q^{k-rv}\sum_{j=0}^v\sum_{ \lvert I\rvert=v} A_j\left(\left(\mathcal{C}^I\right)^{\perp}\right).
    \end{equation}
    
    Now we prove that
    \begin{equation}
    \label{e3}\sum_{\lvert I\rvert=v}A_j\left(\left(\mathcal{C}^{\perp}\right)_I\right)=\binom{n-j}{v-j} A_j^{\perp}.
    \end{equation}
    Both numbers count the possible pairs $(\mathbf{c},I)$ such that $\mathbf{c}\in\mathcal{C}^{\perp}$, $\text{w}_F(\mathbf{c})= j$, $I\subseteq [n]$, $\lvert I\rvert=v$, and $\pi_{[n]\setminus I}(\mathbf{c}) = \mathbf{0} $. First, for $ I \subseteq [n] $ with $ |I| = v $, there are $ A_j\left(\left(\mathcal{C}^{\perp}\right)_I\right)$ words in $\mathcal{C}^{\perp}$ with weight $j$ and with zeros outside the $v$ blocks of $I$. Second, for $\mathbf{c}\in\mathcal{C}^{\perp}$ such that $\w_F(\mathbf{c})=j$, there are $\binom{n-j}{n-v} = \binom{n-j}{v-j}$ possible sets $ I \subseteq [n] $ such that $ |I| = v $ and $\pi_{[n]\setminus I}(\mathbf{c}) = \mathbf{0} $. 
    
The theorem follows by using that (\ref{e1}) and (\ref{e2}) are equal, and applying (\ref{e3}) together with $\left(\mathcal{C}^I\right)^{\perp}=\left(\mathcal{C}^{\perp}\right)_I$ (see \cite[Th. 1.5.7]{pless}).
\end{proof}

In order to give the strengthening of Theorem \ref{dist}, we need the following preliminary lemma.

\begin{lemma} \label{lemma M N inverses}
Consider the matrices $ M_n = \left( (-1)^{n-i+j} \binom{j}{n-i} \right)_{i=0,j=0}^{n,n} $ and $ N_n = \left( \binom{n-j}{i} \right)_{i=0,j=0}^{n,n} $, of size $ (n+1) \times (n+1) $ over $ \mathbb{Z} $, where $ \binom{u}{v} = 0 $ if $ v > u $. Then $ N_n = M_n^{-1} $.
\end{lemma}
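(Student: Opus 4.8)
The plan is to verify directly that $M_n N_n$ equals the $(n+1)\times(n+1)$ identity matrix. Since $M_n$ and $N_n$ are square, establishing $M_n N_n = I$ forces $N_n M_n = I$ as well, so this single computation suffices to conclude $N_n = M_n^{-1}$.

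First I would write out the $(i,\ell)$ entry of the product. Using $(M_n)_{i,j} = (-1)^{n-i+j}\binom{j}{n-i}$ and $(N_n)_{j,\ell} = \binom{n-\ell}{j}$, we obtain
$$ (M_n N_n)_{i,\ell} = \sum_{j=0}^{n} (-1)^{n-i+j} \binom{j}{n-i} \binom{n-\ell}{j}. $$
To lighten the notation I would substitute $a = n-i$ and $b = n-\ell$, noting $a,b \in [0,n]$ and that $i = \ell$ is equivalent to $a = b$. The goal then reduces to proving the scalar identity $\sum_{j}(-1)^{a+j}\binom{j}{a}\binom{b}{j} = \delta_{a,b}$.

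The key algebraic step is the subset-of-subset identity $\binom{b}{j}\binom{j}{a} = \binom{b}{a}\binom{b-a}{j-a}$, which lets me pull the factor $\binom{b}{a}$ out of the sum. After reindexing via $m = j-a$ (so that $(-1)^{a+j} = (-1)^{2a+m} = (-1)^m$), the sum collapses to
$$ \binom{b}{a} \sum_{m} (-1)^m \binom{b-a}{m} = \binom{b}{a}\,(1-1)^{b-a}, $$
by the binomial theorem.

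Finally I would split into the three cases comparing $a$ and $b$: if $a > b$ then $\binom{b}{a} = 0$ under the stated convention $\binom{u}{v}=0$ for $v>u$; if $a < b$ then $(1-1)^{b-a} = 0$; and if $a = b$ then both factors equal $1$. Hence the entry equals $\delta_{a,b} = \delta_{i,\ell}$, which is exactly what we want. The heart of the argument is the single combinatorial identity together with the vanishing alternating binomial sum; the only genuine obstacle is the bookkeeping, namely confirming that terms with $j$ outside the effective range vanish by the convention on $\binom{u}{v}$, and checking that the three boundary cases jointly account for all off-diagonal entries. I expect this case analysis, rather than any deep computation, to be the most delicate part.
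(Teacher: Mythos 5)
Your proposal is correct. It follows the same overall framework as the paper's proof — verify entrywise that $M_nN_n$ is the identity and note that a one-sided inverse of a square integer matrix is two-sided — but the evaluation of the key alternating sum is genuinely different. The paper splits into the cases $i<j$, $i=j$, $i>j$ and handles the only nontrivial case ($i>j$) by a generating-function argument: it expands $(1+x)^{-n+i-1}=\sum_{u\ge 0}(-1)^u\binom{n-i+u}{n-i}x^u$ and $(1+x)^{n-j}$ in $\mathbb{Z}[\![x]\!]$ and reads off the $(i-j)$th coefficient of $(1+x)^{i-j-1}$, which vanishes. You instead apply the trinomial-revision identity $\binom{b}{j}\binom{j}{a}=\binom{b}{a}\binom{b-a}{j-a}$ and collapse the sum to $\binom{b}{a}(1-1)^{b-a}$, which handles all three cases ($a>b$, $a=b$, $a<b$) in a single closed form. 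Your route is more elementary and arguably cleaner, since it avoids formal power series and unifies the case analysis; the paper's route has the minor advantage of not needing to worry about the convention $\binom{u}{v}=0$ for $v<0$ when reindexing (a point you correctly flag as the only delicate bookkeeping, and which is easily dispatched since the terms with $j<a$ or $j>b$ already vanish on the left-hand side of the identity). Both arguments are complete and correct.
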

\begin{proof}
We prove that $ M_n N_n = I_{n+1} $, the identity of size $ n+1 $. The product of the $ i $th row of $ M_n $ and the $ j $th column of $ N_n $ is
\begin{equation}
\sum_{\ell = 0}^n (-1)^{n-i+\ell} \binom{\ell}{n-i} \binom{n-j}{\ell}.
\label{eq product i row N j col M}
\end{equation}
If $ i < j $, then clearly (\ref{eq product i row N j col M}) equals $ 0 $. If $ i=j $, then (\ref{eq product i row N j col M}) becomes $ (-1)^{2(n-i)}\binom{n-i}{n-i}\binom{n-i}{n-i} = 1 $. Finally, if $ i > j $, then (\ref{eq product i row N j col M}) becomes 
\begin{equation}
\sum_{\ell = n-i}^{n-j} (-1)^{n-i+\ell} \binom{\ell}{n-i} \binom{n-j}{\ell} = \sum_{\ell = 0}^{i-j} (-1)^\ell \binom{n-i+\ell}{n-i} \binom{n-j}{i-j-\ell}.
\label{eq product i row N j col M 2}
\end{equation}
Now, we have that $ (1+x)^{-n+i-1} = \sum_{u=0}^\infty (-1)^u \binom{n-i+u}{n-i} x^u $ (see \cite[p. 56]{aigner}) and $ (1+x)^{n-j} = \sum_{v=0}^{n-j} \binom{n-j}{v} x^v $. Using these series expansions in $ \mathbb{Z}[\![ x ]\!] $, it is straightforward to see that the right-hand side of (\ref{eq product i row N j col M 2}) coincides with the $ (i-j) $th coefficient of $ (1+x)^{-n+i-1} (1+x)^{n-j} = (1+x)^{i-j-1} $, which is zero. Therefore (\ref{eq product i row N j col M 2}) equals zero if $ i > j $ and we are done.
\end{proof}

We may now prove the above mentioned strengthening of Theorem \ref{dist}.

\begin{theorem} \label{th distribution degree freedom}
Let $ \mathcal{C} $ be a linear code of type $ [n,r,k,d] $, and denote $ d^\perp = \dd(\mathcal{C}) $ and $ A_j = A_j(\mathcal{C}) $, for $ j \in [0,n] $. Then for $ j \in [n-d^\perp +1,n] $, it holds
$$ A_j = \sum_{i=0}^{j- n + d^\perp - 1} (-1)^i \left( \binom{n}{j} \binom{j}{i} (q^{k-r(n-j+i)} - 1) - \sum_{v=d}^{n-d^\perp} \binom{n-j+i}{i} \binom{n-v}{n-j+i} A_v \right). $$
In particular, the whole weight distribution of $ \mathcal{C} $ is determined by $ A_d, A_{d+1}, \ldots, A_{n-d^\perp} $.
\end{theorem}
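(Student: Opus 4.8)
The plan is to extract from the MacWilliams equations (Theorem \ref{mw}) a finite, invertible subsystem and solve it using the binomial inversion of Lemma \ref{lemma M N inverses}. First I would instantiate Theorem \ref{mw} only for $ v \in [0, d^\perp - 1] $. Since $ d^\perp $ is the minimum distance of $ \mathcal{C}^\perp $, we have $ A_0^\perp = 1 $ and $ A_j^\perp = 0 $ for $ j \in [1, d^\perp-1] $, so for each such $ v $ the right-hand side $ q^{k-rv}\sum_{j=0}^v \binom{n-j}{n-v}A_j^\perp $ collapses to the single term $ q^{k-rv}\binom{n}{v} $, giving the $ d^\perp $ identities
$$ \sum_{j=0}^{n-v}\binom{n-j}{v}A_j = q^{k-rv}\binom{n}{v}, \qquad v \in [0, d^\perp - 1]. $$
These are precisely the MacWilliams equations I can exploit without knowing the dual weights $ A_j^\perp $ for $ j \geq d^\perp $.

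Next I would separate the unknown top weights from the known low ones. Using $ A_0 = 1 $ and $ A_1 = \cdots = A_{d-1} = 0 $ (the minimum distance of $ \mathcal{C} $) and extending each sum to $ j = n $ (the extra terms vanish since $ \binom{n-j}{v} = 0 $ for $ j > n-v $), I move everything except $ A_{n-d^\perp+1}, \ldots, A_n $ to the right, obtaining for $ v \in [0, d^\perp-1] $
$$ \sum_{j=n-d^\perp+1}^{n}\binom{n-j}{v}A_j = \binom{n}{v}\big(q^{k-rv}-1\big) - \sum_{v'=d}^{n-d^\perp}\binom{n-v'}{v}A_{v'} =: \tilde c_v. $$
Reindexing the unknowns by $ p = n-j \in [0,d^\perp-1] $, the coefficient matrix is $ \big(\binom{p}{v}\big)_{v,p=0}^{d^\perp-1} $, which is upper triangular with unit diagonal; the triangular shape of $ N_n $ is exactly what lets me solve for the top $ d^\perp $ weights using only these $ d^\perp $ equations, so that the unknown high-order dual weights never appear. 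Its inverse is the relevant sub-block of $ M_n $ from Lemma \ref{lemma M N inverses}, i.e.\ the binomial inversion $ A_{n-p} = \sum_{v=p}^{d^\perp-1}(-1)^{v-p}\binom{v}{p}\tilde c_v $.

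Finally I would substitute $ \tilde c_v $ back in, set $ i = v-(n-j) $ so that $ v = n-j+i $ and $ i $ runs from $ 0 $ to $ j-n+d^\perp-1 $, and simplify using $ \binom{n}{v}\binom{v}{n-j} = \binom{n}{j}\binom{j}{i} $ together with $ \binom{n-j+i}{n-j} = \binom{n-j+i}{i} $; the two pieces of $ \tilde c_v $ then become exactly the two summands in the stated formula. The final assertion is then immediate, since $ A_0, \ldots, A_{d-1} $ are forced and the displayed formula expresses every $ A_j $ with $ j \geq n-d^\perp+1 $ in terms of $ q $-powers and $ A_d, \ldots, A_{n-d^\perp} $. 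I expect the main obstacle to be the inversion step, namely justifying cleanly that the first $ d^\perp $ equations form an invertible triangular system whose inverse is the relevant block of $ M_n $, so that only known quantities enter; the ensuing binomial-coefficient bookkeeping is routine but must be tracked carefully to land on the precise form claimed.
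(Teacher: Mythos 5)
Your proposal is correct and follows essentially the same route as the paper: instantiate the MacWilliams equations only for $v\in[0,d^\perp-1]$ so that the dual side collapses to $q^{k-rv}\binom{n}{v}$, isolate the top $d^\perp$ unknowns $A_{n-d^\perp+1},\ldots,A_n$ against the triangular coefficient matrix $N_{d^\perp-1}$, invert via Lemma \ref{lemma M N inverses}, and simplify with $\binom{n}{n-j+i}\binom{n-j+i}{n-j}=\binom{n}{j}\binom{j}{i}$. The inversion step you flag as the main obstacle is exactly what Lemma \ref{lemma M N inverses} supplies, so there is no gap.
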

\begin{proof}
Denote $ A^\perp_j = A_j(\mathcal{C}^\perp) $. Using $ A_0 = A^\perp_0 = 1 $, $ A_1 = \ldots = A_{d-1} = 0 $ and $ A^\perp_1 = \ldots = A^\perp_{d^\perp-1} = 0 $ in Theorem \ref{mw}, we have, for $ u \in [0,d^\perp -1] $,
$$ \sum_{j=n-d^\perp+1}^{n-u} \binom{n-j}{u} A_j = \binom{n}{u} (q^{k-ru}-1) - \sum_{v = d}^{n-d^\perp} \binom{n-v}{u} A_v. $$
We can rewrite these equations in matrix form as
$$ \left( \begin{array}{ccccc}
\binom{d^\perp-1}{0} & \binom{d^\perp-2}{0} & \ldots & \binom{1}{0} & \binom{0}{0} \\
\binom{d^\perp-1}{1} & \binom{d^\perp-2}{1} & \ldots & \binom{1}{1} & 0 \\
\vdots & \vdots & \ddots & \vdots & \vdots \\
\binom{d^\perp-1}{d^\perp -1} & 0 & \ldots & 0 & 0 
\end{array} \right) \left( \begin{array}{c}
A_{n-d^\perp+1} \\
A_{n-d^\perp+2} \\
\vdots \\
A_n
\end{array} \right) $$
$$ = \left( \begin{array}{ccccc}
\binom{n}{0} (q^k - 1) & \binom{n-d}{0} & \binom{n-d+1}{0} & \ldots & \binom{d^\perp}{0} \\
\binom{n}{1} (q^{k-r}-1) & \binom{n-d}{1} & \binom{n-d+1}{1} & \ldots & \binom{d^\perp}{1} \\
\vdots & \vdots & \vdots & \ddots & \vdots \\
\binom{n}{d^\perp -1} (q^{k - r(d^\perp-1)}-1) & \binom{n-d}{d^\perp-1} & \binom{n-d+1}{d^\perp-1} & \ldots & \binom{d^\perp}{d^\perp-1}  
\end{array} \right) \left( \begin{array}{c}
1 \\
- A_d \\
\vdots \\
- A_{n-d^\perp}
\end{array} \right). $$
Since the matrix on the left-hand side is $ N_{d^\perp-1} $, we may solve such linear equations by multiplying by $ M_{d^\perp-1} $ on the left on both sides, by Lemma \ref{lemma M N inverses}. This tedious calculation yields
\begin{equation*}
\begin{split}
A_j & = \sum_{i=n-j}^{d^\perp - 1} (-1)^{n-j+i} \left( \binom{i}{n-j} \binom{n}{i} (q^{k-ir}-1) - \sum_{v=d}^{n-d^\perp} \binom{i}{n-j} \binom{n-v}{i} A_v \right) \\
& = \sum_{i=0}^{j- n + d^\perp - 1} (-1)^i \left( \binom{n-j+i}{n-j} \binom{n}{n-j+i} (q^{k-(i+n-j)r}-1) - \sum_{v=d}^{n-d^\perp} \binom{n-j+i}{n-j} \binom{n-v}{n-j+i} A_v \right),
\end{split}
\end{equation*}
for $ j \in [n-d^\perp +1,n] $, and the theorem follows by using $ \binom{n-j+i}{n-j} \binom{n}{n-j+i} = \binom{n}{j}\binom{j}{i} $.
\end{proof}

\begin{remark}
In the case of dually QMDS codes, we have $ d+d^\perp \in \{n+1,n+2 \} $ by Proposition \ref{prop d+d dual}. In that case, we have $ d > n-d^\perp $ and clearly Theorem \ref{th distribution degree freedom} recovers Theorem \ref{dist} (notice that for linear MDS codes, Theorem \ref{th distribution degree freedom} gives $ A_{n-d^\perp+1} = A_{d-1} = 0 $ as expected). In all other cases, it holds that $ d+d^\perp \leq n $ and we have $ n - d - d^\perp + 1 \geq 1 $ degrees of freedom for the weight distribution of the code according to Theorem \ref{th distribution degree freedom}.
\end{remark}

We conclude by noting that QMDS codes that are not dually QMDS never have the same weight distribution as dually QMDS codes. This fact follows by combining Theorems \ref{dist} and \ref{mw}. It also follows by using Theorem \ref{th restricted dually qmds} and noting that the first proof we gave of Theorem \ref{dist} only depends on the dimensions of the restricted codes.

\begin{corollary}
A linear code $ \mathcal{C} $ of type $ [n,k,r,d] $ is dually QMDS if, and only if, its weight distribution $ A_j(\mathcal{C}) $, $ j \in [n] $, is given as in Theorem \ref{dist}.
\end{corollary}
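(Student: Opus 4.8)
The plan is to prove both implications of this biconditional, leaning on the two different proof routes already flagged in the surrounding text. The forward direction (dually QMDS $\Rightarrow$ the weight distribution matches the formula in Theorem \ref{dist}) is immediate: it is precisely the content of Theorem \ref{dist}. So the entire work lies in the converse, namely showing that if a code has the exact weight distribution of Theorem \ref{dist}, then it must be dually QMDS. I would isolate this as the only nontrivial claim.

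For the converse I would follow the second route hinted at in the paragraph preceding the corollary: use the fact that the first proof of Theorem \ref{dist} depended only on the dimensions of the shortened codes $\mathcal{C}_I$ via the quantities $N_t = \sum_{|I|=n-t} |\mathcal{C}_I|$. The strategy is to run the inclusion--exclusion argument of Theorem \ref{dist} in reverse. Concretely, the weight distribution $(A_j)$ determines the numbers $N_t$ (they are related by the invertible binomial transform used in that proof, via $N_t = \sum_{|I|=n-t}|\mathcal{C}_I|$ counting words of folded weight $\le n-t$ with multiplicity). Assuming $(A_j)$ equals the dually QMDS distribution forces each $N_t$ to equal the value $\binom{n}{t} q^{k-rt}$ (if $rt \le k$) or $\binom{n}{t}$ (if $rt > k$). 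First I would make this forcing precise, so that the hypothesis pins down $N_t$ for every $t$.

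The key step is then to pass from the aggregate sums $N_t = \sum_{|I|=n-t}|\mathcal{C}_I|$ to the individual dimensions $\dim(\mathcal{C}_I)$, which is what Theorem \ref{th restricted dually qmds}(3) requires for the dually QMDS conclusion. Here I would use the general lower bounds that hold for \emph{any} linear code: by the Singleton-type estimates underlying Proposition \ref{prop restricted QMDS}, one always has $|\mathcal{C}_I| \ge q^{\max\{0,\,k-r(n-|I|)\}}$, i.e.\ $|\mathcal{C}_I| \ge q^{k-rt}$ when $rt \le k$ and $|\mathcal{C}_I|\ge 1$ always. Since these pointwise lower bounds sum over the $\binom{n}{t}$ subsets $I$ with $|I|=n-t$ to exactly the forced value of $N_t$, equality in the sum forces equality in every summand. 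That yields $\dim(\mathcal{C}_I) = k - r(n-|I|)$ when $r(n-|I|)\le k$ and $\dim(\mathcal{C}_I)=0$ when $r(n-|I|)>k$, which is exactly condition (3) of Theorem \ref{th restricted dually qmds}, hence $\mathcal{C}$ is dually QMDS.

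The main obstacle I anticipate is justifying the pointwise lower bound $|\mathcal{C}_I|\ge q^{k-r(n-|I|)}$ cleanly for arbitrary linear codes, since Proposition \ref{prop restricted QMDS} states the shortening dimensions only under the QMDS hypothesis. I would instead argue it directly: $\dim(\mathcal{C}(J)) \ge k - r|J|$ for any $J\subseteq[n]$ because $\mathcal{C}(J)$ is cut out by at most $r|J|$ linear conditions, and taking $J=[n]\setminus I$ together with the fact that $\pi_I$ is injective on $\mathcal{C}([n]\setminus I)$ gives $\dim(\mathcal{C}_I)=\dim(\mathcal{C}([n]\setminus I))\ge k-r(n-|I|)$. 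The second, more delicate point is verifying that the binomial transform relating $(A_j)$ to $(N_t)$ is invertible so that fixing $(A_j)$ genuinely fixes $(N_t)$; this is the same triangular binomial relation handled in Lemma \ref{lemma M N inverses}, so I would cite that invertibility rather than recomputing it. Alternatively, and perhaps more transparently, I would simply match $N_t$ directly from the defining relation $N_t=\sum_j \binom{n-j}{n-t}A_j$ without inverting anything, since the lower-bound-forces-equality argument only needs the correct value of each $N_t$, which is computed in the forward direction of Theorem \ref{dist}.
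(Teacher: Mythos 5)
Your proof is correct and follows the second of the two routes the paper itself indicates (via Theorem \ref{th restricted dually qmds}, using that the first proof of Theorem \ref{dist} depends only on the quantities $N_t=\sum_{|I|=n-t}|\mathcal{C}_I|$), and it supplies the key equality-forcing step --- the universal lower bound $\dim(\mathcal{C}_I)\geq \max\{0,\,k-r(n-|I|)\}$ summed over all $I$ of a given size --- that the paper leaves implicit. The only slip is notational: the counting relation should read $N_t=\sum_{j}\binom{n-j}{t}A_j$ rather than $\sum_j\binom{n-j}{n-t}A_j$, which does not affect the argument.
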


\begin{example}
Consider the code $ \mathcal{C} $ of type $ [3,3,4,2] $ from Example \ref{example qmds 1} over $ \mathbb{F}_2 $, which is QMDS but not dually QMDS. We observe that $ A_2(\mathcal{C}) \geq 4 $, since it has the following codewords of folded weight $ 2 $: $ (1,0,0|1,0,0|0,0,0) $, $ (0,1,0|0,1,0|0,0,0) $, $ (0,0,1|0,0,0|0,1,0) $ and $ (1,1,0|0,0,0|1,0,0) $. However, according to Theorem \ref{dist}, a dually QMDS code of type $ [3,3,4,2] $ over $ \mathbb{F}_2 $ satisfies $ A_2 = 3 $. Since there exists a dually QMDS code of type $ [3,3,4,2] $ over $ \mathbb{F}_2 $ (a restriction of the code in Construction \ref{const long qmds 2^(r+1)-1} for $ r = 3 $), we conclude that there exist two QMDS codes with the same parameters and different weight distributions, in contrast with dually QMDS codes.
\end{example}

\section{Long QMDS codes over small fields} \label{sec long qmds}

In this section, we study how long QMDS and dually QMDS codes can be over a given (preferably small) finite field. We give two bounds and an optimal-length binary construction.

We start with a simple upper bound on the length $ n $ of QMDS codes given by the distance $ d $ and the field size. This result is inspired by the discussion at the beginning of \cite{singleton} and we prove it for arbitrary (linear or nonlinear) codes.

\begin{theorem} \label{th bound on n by d}
Let $ \mathcal{C} \subseteq \mathbb{F}_q^{rn} $ be a (linear or nonlinear) code of size $ q^k $ and minimum distance $ d = n - \lceil \frac{k}{r} \rceil + 1 \geq 3 $ (in particular, if $ \mathcal{C} $ is linear, then it is QMDS). Then
$$ n \leq d - 3 + q^{r \lceil \frac{k}{r} \rceil - k} (q^r + 1) \leq d - 3 + q^{r-1}(q^r+1). $$
In particular, $ \lceil \frac{k}{r} \rceil \leq q^{r \lceil \frac{k}{r} \rceil - k} (q^r + 1) - 2 \leq q^{r-1}(q^r+1) -2 $. 
\end{theorem}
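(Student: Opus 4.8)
The plan is to prove this via a Singleton-type argument combined with a puncturing/counting bound in the spirit of the classical result that an MDS code of minimum distance $d \geq 3$ over $\mathbb{F}_{q^r}$ has length $n \leq q^r + 1$. The key observation is that a QMDS code of type $[n,r,k,d]$ with $d = n - \lceil \frac{k}{r}\rceil + 1$ behaves, with respect to the folded Hamming distance, almost like an MDS code over the extension field $\mathbb{F}_{q^r}$, except that its ``effective dimension'' is the non-integer quantity $k/r$. I would first reduce to a code whose distance is exactly $3$ by repeatedly shortening (or puncturing), since Proposition \ref{prop restricted QMDS} guarantees that shortening a QMDS code yields another QMDS code; each shortening step on a suitable block decreases both $n$ and $d$ by $1$ while preserving the QMDS property, so it suffices to bound $n$ in the base case $d = 3$ and then add back $d - 3$.

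Next, for the base case $d = 3$, I would argue combinatorially. Set $s = \lceil \frac{k}{r}\rceil$, so $d = n - s + 1 = 3$ gives $n = s + 2$; more usefully, I want to bound $n$ directly via the dual or via a direct arc-counting argument. By Proposition \ref{prop d from generator and parity matrices}, the QMDS condition means that any $s$ column blocks of the generator matrix $G$ (of total width $rs$) have full rank $k$, equivalently any $n - s = d - 1 = 2$ column blocks of the parity-check matrix $H$ are linearly independent over $\mathbb{F}_q$ in the block sense. The cleanest route is to count the codewords of minimal weight $d = 3$ or, better, to use a double-counting of pairs of coordinate blocks against the number of codewords vanishing on prescribed blocks, exploiting the dimension formula $|\mathcal{C}_I| = q^{k - r(n-|I|)}$ for shortened codes from the weight-distribution analysis. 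The factor $q^{r\lceil k/r\rceil - k}$ is exactly $q^{rs - k}$, the ``defect'' measuring how far $k$ is from the multiple $rs$, and it enters because the number of codewords in a block-shortened code scales by this quantity rather than landing on a clean power when $r \nmid k$.

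The main obstacle, and the step I expect to require the most care, is obtaining the precise constant $q^{rs-k}(q^r+1)$ rather than a looser bound. **I would** proceed by fixing $s - 2 = \lceil k/r\rceil - 2$ generic column blocks, restricting/shortening to reduce to a code of ``effective dimension just above $2$'', and then counting the remaining blocks: after this reduction one is left with a QMDS configuration whose parity structure forces every pair of the remaining blocks to be independent, and the number of admissible blocks is governed by the number of distinct $1$-dimensional (up to the $q^{rs-k}$ multiplicity from the non-divisibility) subspaces in an ambient space of size controlled by $q^r + 1$, exactly as in the projective-plane count that yields $q^r + 1$ points on a conic/oval. I would verify that the reduction is valid by checking that shortening on $s - 2$ blocks keeps the code QMDS of the required type (invoking Proposition \ref{prop restricted QMDS}) and that the effective parameters track correctly through the ceiling and floor functions. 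The final chain of inequalities, including the weaker bound $n \leq d - 3 + q^{r-1}(q^r+1)$, then follows by noting $r\lceil k/r\rceil - k \leq r - 1$ whenever $r \nmid k$ (and equals $0$ when $r \mid k$, giving the stronger classical MDS bound), and the corollary on $\lceil k/r\rceil$ is immediate by substituting $d = n - \lceil k/r\rceil + 1$ back into the length bound and rearranging.
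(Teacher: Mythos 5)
Your first step (puncture away $d-3$ blocks to reduce to minimum distance $3$) matches the paper, but the heart of the argument --- the actual bound in the base case --- is missing, and what you sketch in its place would not go through as described. The paper's proof is a one-line sphere-packing argument: restricting $\mathcal{C}$ to $I=[n]\setminus[d-3]$ preserves the size $q^k$ and keeps the distance at least $3$ (this works for nonlinear codes too, since any two codewords still differ in at least $3$ blocks), and then the Hamming bound for folded-Hamming balls of radius $1$ gives $q^k\bigl(1+(n-d+3)(q^r-1)\bigr)\leq q^{r(n-d+3)}$; the stated bound follows by splitting $\frac{q^{r(n-d+3)-k}-1}{q^r-1}$ and using $0\leq r\lceil\frac{k}{r}\rceil-k<r$. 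Your proposal never identifies this packing step, and the tools you invoke in its place are not available here: the dimension formula $|\mathcal{C}_I|=q^{k-r(n-|I|)}$ holds only for \emph{dually} QMDS codes (Example \ref{ex shortening not QMDS} shows it fails for QMDS codes in general), and Propositions \ref{prop restricted QMDS} and \ref{prop d from generator and parity matrices} require linearity, whereas the theorem explicitly covers nonlinear codes of size $q^k$.

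Your geometric intuition is also aimed at the wrong picture. For $d=3$ the relevant count is not ``$q^r+1$ points on a conic/oval'' nor a collection of $1$-dimensional subspaces with multiplicity: after puncturing, the $n-d+3$ blocks of the parity-check matrix span $r$-dimensional subspaces of $\mathbb{F}_q^{2r+\delta}$ (with $\delta=r\lceil\frac{k}{r}\rceil-k$) that are pairwise in direct sum, i.e.\ a partial spread, and the bound $(n-d+3)(q^r-1)\leq q^{2r+\delta}-1$ is exactly the dual restatement of the Hamming bound above. The factor $q^{\delta}$ arises from the ambient dimension being $2r+\delta$ rather than $2r$, not from any ``multiplicity'' of codewords in shortened codes. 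Carried out carefully this partial-spread count would recover the paper's bound for linear codes, but as written your proposal neither performs the count nor covers the nonlinear case, so there is a genuine gap.
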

\begin{proof}
Let $ I = [n] \setminus [d-3] $. Clearly $ |\mathcal{C}^I| = q^k $ and $ \dd (\mathcal{C}^I) \geq 3 $. Thus we may apply the classical Hamming bound to $ \mathcal{C}^I $, i.e., the balls $ \mathcal{B}(\mathbf{c},1) $ of folded Hamming radius $ 1 $ around every codeword $ \mathbf{c} \in \mathcal{C}^I $ are pairwise disjoint. Thus
$$ q^k \left( 1 + (n-d+3)(q^r-1) \right) = |\mathcal{C}^I| \cdot |\mathcal{B}(\mathbf{0},1)| \leq | \mathbb{F}_q^{r|I|}| = q^{r(n-d+3)}. $$
This inequality can be rearranged as follows, 
\begin{equation*}
\begin{split}
n & \leq d - 3 + \frac{q^{r(n-d+3) - k} - 1}{q^r - 1} \\
 & = d - 3 + \frac{q^{r(\lceil \frac{k}{r} \rceil + 2) - k} - 1}{q^r - 1} \\
 & = d - 3 + q^{r \lceil \frac{k}{r} \rceil - k} \cdot \frac{q^{2r} - 1}{q^r - 1} + \frac{q^{r \lceil \frac{k}{r} \rceil - k} - 1}{q^r-1} \\
 & < d - 3 + q^{r \lceil \frac{k}{r} \rceil - k} (q^r + 1) + 1,
\end{split}
\end{equation*}
where in the last inequality we use that $ 0 \leq r \lceil \frac{k}{r} \rceil - k < r $, and we are done.
\end{proof}

Notice that if $ d $ is unrestricted, then the previous bound does not imply that $ n $ is restricted by $ q $ or $ r $ (although $ k $ is). However, in the case of dually QMDS codes, we can obtain bounds on $ d $ and $ n $ in terms of $ q $ and $ r $ based on their weight distributions (Theorem \ref{dist}). This result generalizes \cite[Cor. 7.4.3]{pless} from $ r = 1 $ to $ r \geq 1 $.  

\begin{theorem} 
\label{439} Let $\C$ be a dually QMDS code of type $[n,r,k,d]$ and denote $ d^\perp = d(\C^{\perp}) $. Let $ \varepsilon = r - (r \lceil \frac{k}{r} \rceil - k) \in [r] $ and $ \delta = r - (k - r \lfloor \frac{k}{r} \rfloor) \in [r] $.
\begin{enumerate}
    \item If $ k > r $, then $ d \leq q^r-1+ \left\lfloor \frac{q^r-1}{q^\varepsilon -1} \right\rfloor $.
    \item If $ k < r(n-1) $, then $ d^\perp = \lceil \frac{k+1}{r} \rceil \leq q^r-1+ \left\lfloor \frac{q^r-1}{q^\delta -1} \right\rfloor $. 
\end{enumerate}
In particular, if $ r < k < r(n-1) $, then 
$$ n \leq \left\lbrace \begin{array}{ll}
2q^r - 2 & \textrm{ if } r \mid k, \\
2 q^r - 3 + \left\lfloor \frac{q^r-1}{q^\varepsilon-1} \right\rfloor + \left\lfloor \frac{q^r-1}{q^\delta - 1} \right\rfloor & \textrm{ if } r \nmid k.
\end{array} \right. $$
\end{theorem}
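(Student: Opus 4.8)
The plan is to exploit the non-negativity of the weight distribution of dually QMDS codes, which is completely determined by Theorem \ref{dist}. A single coefficient, namely $A_{d+1}$, already suffices to prove Item 1; Item 2 then follows by applying Item 1 to the dual code, and the ``in particular'' bound on $n$ follows by combining the two items with Proposition \ref{prop d+d dual}.

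For Item 1, I would first observe that the hypothesis $k > r$ is exactly the condition $\lceil \frac{k}{r}\rceil \geq 2$, i.e. $d+1 \leq n$, which places $A_{d+1}$ in the range $[d,n]$ where the formula of Theorem \ref{dist} is valid. Specializing that formula to $j = d+1$, so that the inner sum runs only over $i=0,1$, gives
$$ A_{d+1} = \binom{n}{d+1}\left[ \left( q^{k - r(n-d-1)} - 1 \right) - (d+1)\left( q^{k-r(n-d)} - 1 \right) \right]. $$
Using $n - d = \lceil \tfrac{k}{r}\rceil - 1$ I would rewrite the exponents as $k - r(n-d) = r - (r\lceil \tfrac{k}{r}\rceil - k) = \varepsilon$ and $k - r(n-d-1) = \varepsilon + r$, so that
$$ A_{d+1} = \binom{n}{d+1}\left[ \left( q^{\varepsilon + r} - 1 \right) - (d+1)\left( q^{\varepsilon} - 1 \right) \right]. $$
Since $A_{d+1}$ counts codewords it is non-negative, and $\binom{n}{d+1} > 0$, so $(d+1)(q^\varepsilon - 1) \leq q^{\varepsilon + r} - 1$. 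As $\varepsilon \geq 1$, dividing by $q^\varepsilon - 1 > 0$ and using the identity $\frac{q^{\varepsilon+r}-1}{q^\varepsilon - 1} = q^r + \frac{q^r - 1}{q^\varepsilon - 1}$ yields $d \leq q^r - 1 + \frac{q^r-1}{q^\varepsilon - 1}$. Finally, since $d - (q^r-1)$ is an integer bounded above by $\frac{q^r-1}{q^\varepsilon-1}$, I may replace the fraction by its floor, which is precisely Item 1.

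For Item 2, I would apply Item 1 to $\mathcal{C}^\perp$, which is again dually QMDS, of type $[n, r, rn-k, d^\perp]$. Here the hypothesis $\dim(\mathcal{C}^\perp) = rn - k > r$ is exactly $k < r(n-1)$, and the quantity playing the role of $\varepsilon$ for $\mathcal{C}^\perp$ is $r - \big(r\lceil \tfrac{rn-k}{r}\rceil - (rn-k)\big)$; using $\lceil \tfrac{rn-k}{r}\rceil = n - \lfloor \tfrac{k}{r}\rfloor$ this simplifies to $r - (k - r\lfloor \tfrac{k}{r}\rfloor) = \delta$. Since $\mathcal{C}^\perp$ is QMDS, its minimum distance is $d^\perp = \lfloor \tfrac{k}{r}\rfloor + 1 = \lceil \tfrac{k+1}{r}\rceil$, and Item 1 delivers the stated bound.

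For the bound on $n$, assume $r < k < r(n-1)$, so that both items apply. If $r \mid k$ then $\varepsilon = \delta = r$, hence $\lfloor \tfrac{q^r-1}{q^r-1}\rfloor = 1$ and both $d, d^\perp \leq q^r$; moreover $\mathcal{C}$ is then MDS, so $d + d^\perp = n+2$ by Proposition \ref{prop d+d dual} and $n = d + d^\perp - 2 \leq 2q^r - 2$. If $r \nmid k$ then Proposition \ref{prop d+d dual} gives $d + d^\perp = n + 1$, and adding the bounds of Items 1 and 2 yields $n = d + d^\perp - 1 \leq 2q^r - 3 + \lfloor \tfrac{q^r-1}{q^\varepsilon-1}\rfloor + \lfloor \tfrac{q^r-1}{q^\delta-1}\rfloor$. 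The only genuinely delicate points are the index bookkeeping that turns the exponents into $\varepsilon$ (and its dual analogue $\delta$), and keeping track of the validity condition $d+1 \leq n$ — equivalently $k > r$ — which is what allows Theorem \ref{dist} to govern $A_{d+1}$; everything else is routine.
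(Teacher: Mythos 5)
Your proposal is correct and follows essentially the same route as the paper: non-negativity of $A_{d+1}$ from Theorem \ref{dist}, the identification of the exponent $k - r(n-d)$ with $\varepsilon$, dualization for Item 2, and Proposition \ref{prop d+d dual} for the bound on $n$. The only (harmless) difference is that you justify $d+1\leq n$ directly from $k>r \iff \lceil k/r\rceil\geq 2$, whereas the paper deduces it via $d^\perp$ and Proposition \ref{prop d+d dual}.
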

\begin{proof}
The bound on $ n $ follows by adding the bounds in Items 1 and 2 and using Proposition \ref{prop d+d dual} for $ d+d^\perp $. Moreover, Item 2 is the dual statement of Item 1, hence we only prove the latter.

If $ r \nmid k $, then $ k > r $ implies $ d^\perp \geq 2 $ and $ d = n+1-d^\perp \leq n-1 $ by Proposition \ref{prop d+d dual}. If $ r \mid k $, then $ k > r $ implies $ k \geq 2r $, hence $ d^\perp \geq 3 $, thus $ d = n+2-d^\perp \leq n-1 $ too. Therefore in both cases ($ r \mid k $ or not), we may consider $ A_{d+1} $. By Theorem \ref{dist}, we have that
$$ A_{d+1} = \binom{n}{d+1}\left(q^{k+r(d+1-n)}-1-(d+1)\big(q^{k+r(d-n)}-1\big)\right) \geq 0. $$
Now, we have $ k + r(d-n) = k - r (\lceil \frac{k}{r} \rceil -1) = \varepsilon $, and the previous inequality is equivalent to $ q^{\varepsilon + r} - 1 \geq (d+1) \left( q^{\varepsilon} - 1 \right) $. Therefore 
$$ d+1 \leq \frac{q^{\varepsilon + r} - 1}{q^{\varepsilon} - 1} = \frac{q^{\varepsilon +r} - q^r + q^r - 1}{q^{\varepsilon} - 1} = q^r + \frac{q^r-1}{q^{\varepsilon} - 1}. $$
\end{proof}

We now turn to constructions. The first observation is that QMDS codes of lower dimension may be easily obtained from a given QMDS code (e.g., a given linear MDS code), as follows.

\begin{proposition} \label{prop qmds subcode is qmds}
Let $ \mathcal{C} $ be a QMDS code of type $ [n,r,k,d] $ and let $ k^\prime $ be an integer such that $ r (\lceil \frac{k}{r} \rceil - 1) < k^\prime < k $ (thus $ \lceil \frac{k^\prime}{r} \rceil = \lceil \frac{k}{r} \rceil $). Then any linear subcode of $ \mathcal{C} $ of dimension $ k^\prime $ is QMDS of type $ [n,r,k^\prime,d] $. 
\end{proposition}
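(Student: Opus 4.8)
The plan is to reduce the statement to the restriction characterization of QMDS codes in Proposition \ref{prop restricted QMDS} (Item 1). The first bookkeeping step is to confirm that the hypothesis $ r(\lceil \frac{k}{r} \rceil - 1) < k^\prime < k $ forces $ \lceil \frac{k^\prime}{r} \rceil = \lceil \frac{k}{r} \rceil $: the lower bound gives $ \frac{k^\prime}{r} > \lceil \frac{k}{r} \rceil - 1 $, hence $ \lceil \frac{k^\prime}{r} \rceil \geq \lceil \frac{k}{r} \rceil $, while $ k^\prime < k \leq r \lceil \frac{k}{r} \rceil $ gives $ \lceil \frac{k^\prime}{r} \rceil \leq \lceil \frac{k}{r} \rceil $. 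In particular the target minimum distance $ n - \lceil \frac{k^\prime}{r} \rceil + 1 $ coincides with $ d = n - \lceil \frac{k}{r} \rceil + 1 $, so showing that a subcode $ \mathcal{C}^\prime $ of dimension $ k^\prime $ is QMDS automatically gives it type $ [n,r,k^\prime,d] $.

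By Proposition \ref{prop restricted QMDS} (Item 1) applied to $ \mathcal{C}^\prime $, it suffices to prove that $ \dim((\mathcal{C}^\prime)^I) = k^\prime $ for every $ I \subseteq [n] $ with $ r|I| \geq k^\prime $. The key observation, and the place where the precise lower bound on $ k^\prime $ is used, is that $ r|I| \geq k^\prime $ already forces $ r|I| \geq k $: indeed $ r|I| \geq k^\prime $ yields $ |I| \geq \lceil \frac{k^\prime}{r} \rceil = \lceil \frac{k}{r} \rceil $, hence $ r|I| \geq r \lceil \frac{k}{r} \rceil \geq k $. Thus every index set that needs to be checked for $ \mathcal{C}^\prime $ is automatically "large enough" for $ \mathcal{C} $.

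With $ r|I| \geq k $ established, since $ \mathcal{C} $ is QMDS, Proposition \ref{prop restricted QMDS} (Item 1) gives $ \dim(\mathcal{C}^I) = k = \dim(\mathcal{C}) $, so the projection $ \pi_I $ restricted to $ \mathcal{C} $ is injective (its kernel has dimension $ k - k = 0 $). Because $ \mathcal{C}^\prime \subseteq \mathcal{C} $, the restriction of $ \pi_I $ to $ \mathcal{C}^\prime $ has kernel contained in $ \ker(\pi_I|_{\mathcal{C}}) = \{ 0 \} $, hence is also injective, giving $ \dim((\mathcal{C}^\prime)^I) = \dim(\pi_I(\mathcal{C}^\prime)) = k^\prime $. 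As this holds for all $ I $ with $ r|I| \geq k^\prime $, Proposition \ref{prop restricted QMDS} (Item 1) yields that $ \mathcal{C}^\prime $ is QMDS, and by the first step its type is $ [n,r,k^\prime,d] $.

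I do not anticipate a serious obstacle: the whole argument rests on the elementary fact that injectivity of a column-block projection on the full code descends to any subcode, combined with the index count $ \lceil \frac{k^\prime}{r} \rceil = \lceil \frac{k}{r} \rceil $. The only point demanding care is the implication $ r|I| \geq k^\prime \Rightarrow r|I| \geq k $, which is exactly what the lower bound $ r(\lceil \frac{k}{r} \rceil - 1) < k^\prime $ is designed to guarantee and without which the projection on $ \mathcal{C} $ need not be injective.
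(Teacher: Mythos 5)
Your proof is correct. The paper itself states this proposition without proof, and the implicit argument it relies on is the shortest one: a linear subcode $\mathcal{C}^\prime \subseteq \mathcal{C}$ automatically satisfies $\dd(\mathcal{C}^\prime) \geq \dd(\mathcal{C}) = d = n - \lceil \frac{k}{r} \rceil + 1 = n - \lceil \frac{k^\prime}{r} \rceil + 1$ (using your correct verification that $\lceil \frac{k^\prime}{r} \rceil = \lceil \frac{k}{r} \rceil$), while the Singleton bound (Proposition \ref{singleton}, Item 2) gives $\dd(\mathcal{C}^\prime) \leq n - \lceil \frac{k^\prime}{r} \rceil + 1$; equality follows and $\mathcal{C}^\prime$ is QMDS of type $[n,r,k^\prime,d]$. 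Your route through the restriction characterization (Proposition \ref{prop restricted QMDS}, Item 1) is a genuinely different and entirely valid derivation: the observation that $r|I| \geq k^\prime$ forces $r|I| \geq k$, so that $\pi_I$ is injective on $\mathcal{C}$ and hence on $\mathcal{C}^\prime$, is exactly right and is where the lower bound on $k^\prime$ does its work. What the elementary route buys is brevity and independence from the restriction machinery; what yours buys is a slightly more structural viewpoint (it shows directly that every relevant column-block projection remains injective on the subcode), at the cost of invoking a heavier characterization for a statement that follows from distance monotonicity alone.
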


\begin{corollary} \label{cor unrestricted qmds codes d=2}
Let $ \mathcal{C} $ be a linear MDS code of type $ [n,r,r(n-1),2] $, which exists over any field (take, e.g., the dual of the classical repetition code). For any integer $ k $ with $ r(n-2) < k < r(n-1) $, any linear subcode of $ \mathcal{C} $ of dimension $ k $ is a QMDS code of type $ [n,r,k,2] $. 
\end{corollary}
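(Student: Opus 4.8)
The plan is to deduce the corollary directly from Proposition \ref{prop qmds subcode is qmds}, so the real work reduces to two bookkeeping checks: (a) confirming that the asserted ambient MDS code exists over every field, and (b) verifying that the numerical hypotheses of the two statements line up so that the proposition applies verbatim.

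For (a), I would produce the MDS code through the expansion isometry $\varepsilon_{\boldsymbol\beta}$: it suffices to exhibit an $\mathbb{F}_{q^r}$-linear MDS code of type $[n,n-1,2]$ in the classical Hamming distance, since expanding it yields a code of type $[n,r,r(n-1),2]$. The repetition code $\{(a,\ldots,a):a\in\mathbb{F}_{q^r}\}$ is an $[n,1,n]$ code, hence MDS, and its dual---the single parity-check code $[n,n-1,2]$---is MDS by Proposition \ref{prop dual of MDS} (or by inspection). This establishes the parenthetical existence claim.

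For (b), write $K=r(n-1)$ for the dimension of $\mathcal{C}$. Then $\lceil K/r\rceil = n-1$, so the hypothesis $r(\lceil K/r\rceil -1)<k<K$ of Proposition \ref{prop qmds subcode is qmds} reads exactly $r(n-2)<k<r(n-1)$, which is the range assumed in the corollary. Since there is no multiple of $r$ strictly between $r(n-2)$ and $r(n-1)$, this range forces $r\nmid k$ and $\lceil k/r\rceil = n-1$, whence the minimum distance returned by the proposition is $d = n-\lceil k/r\rceil +1 = 2$. Proposition \ref{prop qmds subcode is qmds} then applies, and any $k$-dimensional linear subcode of $\mathcal{C}$ is QMDS of type $[n,r,k,2]$.

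There is no genuine obstacle here; the only point requiring a moment's care is the invariance of the ceiling $\lceil k/r\rceil$ across the admissible range, which is precisely what pins the distance at $2$. For completeness I would note that the required instance of Proposition \ref{prop qmds subcode is qmds} is itself immediate from Proposition \ref{prop restricted QMDS}: since $\mathcal{C}$ is MDS (hence QMDS) and $r(n-1)=K$, Item 1 of Proposition \ref{prop restricted QMDS} gives $\dim(\mathcal{C}^I)=K=\dim(\mathcal{C})$ for every $I\subseteq[n]$ with $|I|=n-1$, so $\pi_I$ is injective on $\mathcal{C}$ and therefore on any subcode $\mathcal{C}'$. Hence $\dim((\mathcal{C}')^I)=\dim(\mathcal{C}')=k$ for all $I$ with $r|I|\geq k$, which is exactly the QMDS criterion of Item 1 of Proposition \ref{prop restricted QMDS} applied to $\mathcal{C}'$.
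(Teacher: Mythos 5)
Your proposal is correct and follows the paper's intended route: the corollary is an immediate application of Proposition \ref{prop qmds subcode is qmds} once one checks that $\lceil k/r\rceil = n-1$ is constant on the range $r(n-2)<k<r(n-1)$ (forcing $d=2$), and your verification of the needed instance of that proposition via Item 1 of Proposition \ref{prop restricted QMDS} correctly fills in a step the paper leaves unproved. The only cosmetic difference is that you realize the ambient code as the expansion of the single parity-check $[n,n-1,2]$ code over $\mathbb{F}_{q^r}$ rather than directly as the dual of the block repetition code $\{(\mathbf{a},\ldots,\mathbf{a}):\mathbf{a}\in\mathbb{F}_q^r\}$; over finite fields these coincide by the dual-bases proposition, so the argument stands.
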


Thus there exist QMDS codes of dimension $ r(n-2) < k < r(n-1) $ for any length $ n $ over any field. Thus the last bound in Theorem \ref{439} does not hold for general QMDS codes when $ r(n-2) < k < r(n-1) $. Moreover, since the distance is $ 2 $, this result also shows that the bound on $ n $ in Theorem \ref{th bound on n by d} does not hold if $ d \geq 3 $ is not satisfied.

\begin{remark} \label{remark dually qmds subcode not dually}
Notice that the QMDS codes from Corollary \ref{cor unrestricted qmds codes d=2} may not be dually QMDS codes if $ n $ is longer than the bound in Theorem \ref{439}. In particular, one may not always obtain a dually QMDS code by choosing a subcode of a dually QMDS code as in Proposition \ref{prop qmds subcode is qmds}.
\end{remark}

Next we give a construction of optimal-length binary dually QMDS codes, longer than any MDS code. We first give an example.

\begin{example}
The linear code over $ \mathbb{F}_2 $ with the following generator matrix is dually QMDS of type $ [7,2,3,6] $:
$$ G = \left( \begin{array}{cc|cc|cc|cc|cc|cc|cc}
1 & 0 & 1 & 0 & 1 & 0 & 1 & 0 & 1 & 0 & 1 & 0 & 0 & 0 \\
0 & 1 & 0 & 1 & 0 & 1 & 0 & 1 & 1 & 0 & 0 & 0 & 1 & 0 \\
1 & 1 & 1 & 0 & 0 & 1 & 0 & 0 & 0 & 1 & 0 & 1 & 0 & 1
\end{array} \right). $$
The distance of the code is $ 6 $ since any nontrivial $ \mathbb{F}_2 $-linear combination of the rows of $ G $ has exactly one zero block. Thus the code is QMDS. Moreover, since the rows of any block of $ G $ span the whole space $ \mathbb{F}_2^2 $, then the dual has distance $ 2 $ and thus the code is dually QMDS.
\end{example}

In order to provide the general construction, we need the following technical lemma.

\begin{lemma} \label{lemma for construction}
Let $ r $ be a positive integer and let $ \mathbf{e}_1, \ldots, \mathbf{e}_r \in \mathbb{F}_2^r $ be the vectors of the standard basis (i.e., $ e_{i,j} = \delta_{i,j} $). Let $ I = \{ i_1, \ldots, i_t \} \subseteq [r+1] $ with $ 1 \leq i_1 < i_2 < \ldots < i_t \leq r+1 $ and $ t \geq 1 $ (i.e., $ I \neq \varnothing $). Finally, define $ \mathbf{u}_{I,i} = \mathbf{e}_i $ for $ i \in [i_t-1] $, $ \mathbf{u}_{I,i} = \mathbf{e}_{i-1} $ for $ i \in [i_t+1, r+1] $, $ \mathbf{u}_{I,i_t} = \sum_{j=1}^{t-1} \mathbf{e}_{i_j} $ if $ t \geq 2 $ and $ \mathbf{u}_{I,i_t} = \mathbf{0} $ if $ t = 1 $. Then, for any nonempty $ J \subseteq [r+1] $, it holds that $ \sum_{i \in J} \mathbf{u}_{I,i} = \mathbf{0} $ if, and only if, $ J = I $.
\end{lemma}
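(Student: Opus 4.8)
The plan is to exploit the very rigid structure of the family $ \{ \mathbf{u}_{I,1}, \ldots, \mathbf{u}_{I,r+1} \} \subseteq \mathbb{F}_2^r $. The key observation is that for every index $ i \neq i_t $ the vector $ \mathbf{u}_{I,i} $ is a standard basis vector, and as $ i $ ranges over $ [r+1] \setminus \{ i_t \} $ these run through $ \mathbf{e}_1, \ldots, \mathbf{e}_r $ each exactly once. Concretely, I would introduce the bijection $ \sigma : [r+1] \setminus \{ i_t \} \longrightarrow [r] $ given by $ \sigma(i) = i $ for $ i < i_t $ and $ \sigma(i) = i-1 $ for $ i > i_t $, so that $ \mathbf{u}_{I,i} = \mathbf{e}_{\sigma(i)} $ for all $ i \neq i_t $. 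The only vector not of this form is $ \mathbf{u}_{I,i_t} = \sum_{j=1}^{t-1} \mathbf{e}_{i_j} $ (or $ \mathbf{0} $ when $ t = 1 $).

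First I would check the easy implication, that $ J = I $ yields zero sum. Since $ i_j < i_t $ for $ j < t $, one has $ \mathbf{u}_{I,i_j} = \mathbf{e}_{i_j} $, so $ \sum_{i \in I} \mathbf{u}_{I,i} = \sum_{j=1}^{t-1} \mathbf{e}_{i_j} + \mathbf{u}_{I,i_t} = \sum_{j=1}^{t-1} \mathbf{e}_{i_j} + \sum_{j=1}^{t-1} \mathbf{e}_{i_j} = \mathbf{0} $ in $ \mathbb{F}_2^r $ (and this is trivially $ \mathbf{0} $ when $ t = 1 $). For the converse, let $ J \neq \varnothing $ satisfy $ \sum_{i \in J} \mathbf{u}_{I,i} = \mathbf{0} $. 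I would first argue that $ i_t \in J $: otherwise the sum equals $ \sum_{i \in J} \mathbf{e}_{\sigma(i)} $, a sum of distinct standard basis vectors (as $ \sigma $ is injective), which vanishes only for $ J = \varnothing $, a contradiction. Writing $ J = J^\prime \cup \{ i_t \} $ with $ J^\prime \subseteq [r+1] \setminus \{ i_t \} $, the relation becomes $ \sum_{i \in J^\prime} \mathbf{e}_{\sigma(i)} = \mathbf{u}_{I,i_t} = \sum_{j=1}^{t-1} \mathbf{e}_{i_j} $. Comparing coordinates in $ \mathbb{F}_2^r $ (two sums of distinct basis vectors agree if and only if their index sets coincide), and using $ \sigma(i_j) = i_j $ for $ j < t $, forces $ \{ \sigma(i) : i \in J^\prime \} = \{ i_1, \ldots, i_{t-1} \} $, whence $ J^\prime = \{ i_1, \ldots, i_{t-1} \} $ by injectivity of $ \sigma $, and therefore $ J = I $.

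The only real care needed is the bookkeeping of $ \sigma $ and the separate (but trivial) treatment of the case $ t = 1 $, where $ \mathbf{u}_{I,i_t} = \mathbf{0} $ immediately forces $ J^\prime = \varnothing $ and hence $ J = \{ i_1 \} = I $. I expect this edge case and the index-tracking through $ \sigma $ to be the main place to be careful, though neither is substantive. As an alternative packaging that avoids the explicit case split, one may phrase the whole argument as a rank statement: the $ r $ vectors $ \{ \mathbf{u}_{I,i} : i \neq i_t \} $ already form a basis of $ \mathbb{F}_2^r $, so the full collection of $ r+1 $ vectors admits a one-dimensional space of $ \mathbb{F}_2 $-linear relations; its unique nonzero element corresponds to a unique nonempty $ J $ with zero sum, which by the first computation is $ J = I $.
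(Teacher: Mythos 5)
Your proof is correct and follows essentially the same route as the paper's: both arguments split on whether $ i_t \in J $ and then use that the vectors $ \mathbf{u}_{I,i} $ for $ i \neq i_t $ are distinct standard basis vectors (the paper checks individual components where you compare index sets via the bijection $ \sigma $, but this is the same idea). Your closing remark that the relation space is one-dimensional is a nice, slightly slicker packaging, but it is not a substantively different argument.
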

\begin{proof}
First $ \sum_{i \in J} \mathbf{u}_{I,i} = 2 \sum_{j=1}^{t-1} \mathbf{e}_{i_j} = \mathbf{0} $ if $ J = I $. Now assume that $ J \neq I $. If $ i_t \notin J $, then all the components of $ \sum_{i \in J} \mathbf{u}_{I,i} $ corresponding to $ J \neq \varnothing $ equal $ 1 $, hence $ \sum_{i \in J} \mathbf{u}_{I,i} \neq \mathbf{0} $. If $ i_t \in J $, then there exists $ j \in I \setminus J $ or $ j \in J \setminus I $ with $ j \neq i_t $. If $ j < i_t $, then the $ j $th component of $ \sum_{i \in J} \mathbf{u}_{I,i} $ must be $ 1 $, and if $ j > i_t $, then the $ (j-1) $th component of $ \sum_{i \in J} \mathbf{u}_{I,i} $ must be $ 1 $, thus $ \sum_{i \in J} \mathbf{u}_{I,i} \neq \mathbf{0} $ and we are done.
\end{proof}

\begin{construction} \label{const long qmds 2^(r+1)-1}
Let $ r $ be a positive integer and enumerate all the nonempty subsets of $ [r+1] $ as $ I_1, I_2, \ldots, I_n $, where $ n = 2^{r+1}-1 $. Let $ \mathcal{C} \subseteq \mathbb{F}_2^n $ be the linear code with generator matrix
$$ G = \left( \begin{array}{c|c|c|c}
\mathbf{u}_{I_1,1} & \mathbf{u}_{I_2,1} & \ldots & \mathbf{u}_{I_n,1} \\
\vdots & \vdots & \ddots & \vdots \\
\mathbf{u}_{I_1,r+1} & \mathbf{u}_{I_2,r+1} & \ldots & \mathbf{u}_{I_n,r+1}
\end{array} \right), $$
where $ \mathbf{u}_{I_i,j} $ is as in the previous lemma, for $ i \in [n] $ and $ j \in [r+1] $.
\end{construction}

\begin{theorem} \label{th construction long QMDS}
The linear code in Construction \ref{const long qmds 2^(r+1)-1} is dually QMDS of type $ [2^{r+1}-1,r,r+1,2^{r+1}-2] $. 
\end{theorem}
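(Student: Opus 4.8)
The plan is to determine the full weight distribution of $\mathcal{C}$ directly from its generator matrix, using Lemma \ref{lemma for construction} as the single combinatorial input. Since we work over $\mathbb{F}_2$, every codeword is the sum of the rows of $G$ indexed by some subset $J \subseteq [r+1]$, so the nonzero codewords are in bijection with the nonempty subsets $J$: the codeword $\mathbf{c}_J$ has $i$th block $\sum_{j \in J} \mathbf{u}_{I_i,j}$. By Lemma \ref{lemma for construction}, this block equals $\mathbf{0}$ precisely when $J = I_i$. As $I_1, \ldots, I_n$ list every nonempty subset of $[r+1]$ exactly once, for each nonempty $J$ there is a \emph{unique} index $i$ with $I_i = J$; hence $\mathbf{c}_J$ has exactly one zero block.

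From this I would read off both the dimension and the minimum distance at once. Because $n = 2^{r+1}-1 \geq 3$, a codeword with a single zero block is nonzero, so $\mathbf{c}_J \neq \mathbf{0}$ for every nonempty $J$; thus the rows of $G$ are linearly independent and $\dim(\mathcal{C}) = r+1 = k$. Simultaneously, every nonzero codeword has exactly one zero block, i.e.\ folded weight $n-1 = 2^{r+1}-2$, so $d = n-1$. Since $\lceil k/r \rceil = \lceil (r+1)/r \rceil = 2$, we have $d = n - \lceil k/r\rceil + 1$, and $\mathcal{C}$ is QMDS of the announced type $[2^{r+1}-1, r, r+1, 2^{r+1}-2]$.

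It then remains to show that $\mathcal{C}^\perp$ is QMDS as well. For $r \geq 2$ we have $\lfloor k/r\rfloor = 1$, and since $\mathcal{C}$ is already QMDS, Corollary \ref{cor gen matrices QMDS} reduces the dually QMDS property to the single requirement that every column block of $G$ have rank $r$; equivalently, that no nonzero $\mathbf{x}_i \in \mathbb{F}_2^r$ be orthogonal to all the rows $\mathbf{u}_{I_i, j}$, $j \in [r+1]$, of that block, which forbids weight-one dual codewords and forces $d^\perp = \lfloor k/r\rfloor + 1 = 2$. Writing $I_i = \{i_1 < \cdots < i_t\}$ with $i_t = \max I_i$, the defining formulas give that the rows $\mathbf{u}_{I_i,j}$ with $j \neq i_t$ are exactly $\mathbf{e}_1, \ldots, \mathbf{e}_{i_t-1}$ (from the indices $j < i_t$) together with $\mathbf{e}_{i_t}, \ldots, \mathbf{e}_r$ (from the indices $j > i_t$); that is, they already exhaust the standard basis of $\mathbb{F}_2^r$. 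Hence the rows of each block span $\mathbb{F}_2^r$, each block has rank $r$, and $\mathcal{C}$ is dually QMDS. The remaining case $r = 1$ is degenerate: there $r \mid k$, so $\mathcal{C}$ is in fact MDS, and its dual is MDS (hence QMDS) by Proposition \ref{prop dual of MDS}.

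The genuine combinatorial content is entirely contained in Lemma \ref{lemma for construction}, which is already proved, so the rest is bookkeeping rather than a real obstacle. The two points demanding care are the \emph{uniqueness} of the zero block — which needs both that a block vanishes exactly for $J = I_i$ and that the enumeration $I_1, \ldots, I_n$ has no repetitions — and the observation that, no matter which coordinate $i_t$ is singled out, dropping the row indexed by $i_t$ leaves precisely the full standard basis among the remaining rows of that block.
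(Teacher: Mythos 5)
Your proof is correct and follows essentially the same route as the paper's: use Lemma \ref{lemma for construction} to show every nonzero codeword has exactly one vanishing block (hence weight $n-1$ and full rank of $G$), then observe that the rows of each block span $\mathbb{F}_2^r$ to rule out weight-one dual codewords. You supply slightly more detail than the paper at two points the paper leaves implicit --- the explicit verification that the rows other than the one indexed by $\max I_i$ already exhaust the standard basis, and the separate treatment of $r=1$ (where $d^\perp \geq 2$ alone would not suffice for the QMDS property of the dual) --- both of which are welcome refinements rather than deviations.
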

\begin{proof}
Consider a nonzero codeword $ \mathbf{c} = (\mathbf{c}_1, \ldots, \mathbf{c}_n) \in \mathcal{C} $, where $ \mathbf{c}_i \in \mathbb{F}_2^r $, for $ i \in [n] $. There exists a nonzero $ \mathbf{x} = (x_1, \ldots, x_{r+1}) \in \mathbb{F}_2^{r+1} $ such that $ \mathbf{c} = \mathbf{x}G $. Define $ J = \{ j \in [r+1] : x_j \neq 0 \} $. Since $ J \neq \varnothing $, there exists $ i \in [n] $ such that $ J = I_i $. By Lemma \ref{lemma for construction}, we deduce that $ \mathbf{c}_i = \sum_{j \in J} \mathbf{u}_{J,j} = \mathbf{0} $, whereas if $ \ell \in [n] \setminus \{ i \} $, then $ \mathbf{c}_\ell = \sum_{j \in J} \mathbf{u}_{I_\ell,j} \neq \mathbf{0} $ since $ J \neq I_\ell $. In other words, $ \w_F(\mathbf{c}) = n-1 $, and therefore $ \dd(\mathcal{C}) = n-1 $. Since the generator matrix $ G $ is clearly of full rank $ r+1 $, we conclude that $ \mathcal{C} $ is QMDS of type $ [2^{r+1}-1,r,r+1,2^{r+1}-2] $. Finally, for every $ i \in [n] $, the rows in the $ i $th block of $ G $, $ \mathbf{u}_{I_i,1} , \ldots, \mathbf{u}_{I_i,r+1} $, span the whole space $ \mathbb{F}_2^r $, thus there cannot be a codeword in $ \mathcal{C}^\perp $ of folded weight $ 1 $. We conclude that $ \dd(\mathcal{C}) \geq 2 $, which means that $ \mathcal{C}^\perp $ is also QMDS and $ \mathcal{C} $ is dually QMDS.
\end{proof}

\begin{remark}
The distance $ d = 2^{r+1} - 2 $ of the code in Construction \ref{const long qmds 2^(r+1)-1} attains the bound in Item 1 in Theorem \ref{439}. Observe that in that theorem, $ \varepsilon = 1 $, thus such a bound is
$$ q^r -1 + \frac{q^r-1}{q^{\varepsilon}-1} = 2^r - 1 + \frac{2^r-1}{2-1} = 2^{r+1}-2 = d. $$
Moreover, for an arbitrary MDS code over $ \mathbb{F}_q $ of distance $ d $ and size $ q^{rk} $, it is known \cite{singleton} that $ n \leq 2q^r-2 $ if $ r < k < r(n-1) $. In the case $ q=2 $, we obtain $ n \leq 2^{r+1}-2 $, but the code in Construction \ref{const long qmds 2^(r+1)-1} attains the length $ n = 2^{r+1}-1 $, thus is longer than any MDS code. Tighter bounds for MDS codes hold for prime fields \cite{ball}, coinciding with the MDS conjecture. 
\end{remark}

From the proof of Theorem \ref{th construction long QMDS} we also conclude the following property. 

\begin{corollary}
The linear code in Construction \ref{const long qmds 2^(r+1)-1} is a one-weight or constant-weight code, that is, all of its nonzero codewords have the same weight.
\end{corollary}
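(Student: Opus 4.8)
The plan is to read off the final corollary directly from the proof of Theorem~\ref{th construction long QMDS}. The claim is that the code $\mathcal{C}$ in Construction~\ref{const long qmds 2^(r+1)-1} is a one-weight code, so I need to show that every nonzero codeword has the same folded Hamming weight. But this is exactly what the argument establishing $\dd(\mathcal{C}) = n-1$ already gives, if one inspects it carefully rather than merely extracting the minimum.

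First I would recall the structure of that argument. An arbitrary nonzero codeword has the form $\mathbf{c} = \mathbf{x}G$ for some nonzero $\mathbf{x} = (x_1,\ldots,x_{r+1}) \in \mathbb{F}_2^{r+1}$, and one sets $J = \{ j \in [r+1] : x_j \neq 0\}$, which is nonempty. Since $I_1,\ldots,I_n$ enumerate \emph{all} nonempty subsets of $[r+1]$, there is a unique index $i \in [n]$ with $J = I_i$. By Lemma~\ref{lemma for construction}, the $i$th block satisfies $\mathbf{c}_i = \sum_{j\in J}\mathbf{u}_{I_i,j} = \mathbf{0}$, and for every other index $\ell \neq i$ the block satisfies $\mathbf{c}_\ell = \sum_{j\in J}\mathbf{u}_{I_\ell,j} \neq \mathbf{0}$ because $J \neq I_\ell$. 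The key observation is that this conclusion is independent of the particular $\mathbf{x}$: no matter which nonzero $\mathbf{x}$ (equivalently, which nonempty $J$) one starts from, exactly one block vanishes and all the remaining $n-1$ blocks are nonzero.

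Therefore I would conclude that every nonzero codeword has folded weight precisely $\w_F(\mathbf{c}) = n-1 = 2^{r+1}-2$, which is the definition of a one-weight (constant-weight) code. The statement follows immediately.

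There is essentially no obstacle here, since all the work was already done inside Theorem~\ref{th construction long QMDS}; the only subtlety is to emphasize that the proof there computes the weight of \emph{every} nonzero codeword and not just a lower bound on the minimum. Concretely, the corollary admits the following short proof.

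\begin{proof}
In the proof of Theorem~\ref{th construction long QMDS} it is shown that an arbitrary nonzero codeword $ \mathbf{c} = \mathbf{x}G $, with $ \mathbf{x} = (x_1,\ldots,x_{r+1}) \in \mathbb{F}_2^{r+1} \setminus \{ \mathbf{0} \} $ and $ J = \{ j \in [r+1] : x_j \neq 0 \} = I_i $, satisfies $ \mathbf{c}_i = \mathbf{0} $ and $ \mathbf{c}_\ell \neq \mathbf{0} $ for all $ \ell \in [n] \setminus \{ i \} $. This computation holds for every choice of nonzero $ \mathbf{x} $, hence $ \w_F(\mathbf{c}) = n-1 = 2^{r+1}-2 $ for all nonzero $ \mathbf{c} \in \mathcal{C} $. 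Thus $ \mathcal{C} $ is a one-weight code.
\end{proof}
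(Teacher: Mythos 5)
Your proposal is correct and matches the paper exactly: the paper introduces this corollary with the remark that it follows from the proof of Theorem \ref{th construction long QMDS}, which indeed computes the folded weight of every nonzero codeword to be $n-1$, not just the minimum. Nothing is missing.
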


Furthermore, these are the longest QMDS codes of dimension $ k $ with $ r+1 \leq k \leq 2r $ over $ \mathbb{F}_2 $.

\begin{proposition} \label{prop longest qmds F_2 r+1}
Let $ r $ and $ k $ be positive integers such that $ r+1 \leq k \leq 2r $. If there exists a QMDS code of type $ [n,r,k,n-1] $ over $ \mathbb{F}_2 $, then $ n \leq 2^{r+1}-1 $.
\end{proposition}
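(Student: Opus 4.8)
The plan is to translate the QMDS hypothesis into a statement about subspaces of $\mathbb{F}_2^k$ and then count nonzero vectors. Fix a generator matrix $G = (G_1 | \cdots | G_n)$ of $\mathcal{C}$, where $G_i$ is the $i$th $k \times r$ column block, and for each $i$ let $V_i \subseteq \mathbb{F}_2^k$ be the column space of $G_i$ and $W_i = V_i^{\perp}$ its orthogonal complement with respect to the standard inner product. Since $G_i$ has $r$ columns, $\dim V_i \le r$, hence $\dim W_i = k - \dim V_i \ge k - r \ge 1$ (using $k \ge r+1$). The key observation is that a codeword $\mathbf{x}G$ (for $\mathbf{x} \in \mathbb{F}_2^k$) has its $i$th block equal to $\mathbf{0}$ if and only if $\mathbf{x}$ is orthogonal to every column of $G_i$, i.e. $\mathbf{x} \in W_i$.

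Next I would use the distance. Because $\lceil \frac{k}{r} \rceil = 2$ for $r+1 \le k \le 2r$, a QMDS code of type $[n,r,k,n-1]$ indeed has $d = n-1$, so every nonzero codeword has at most one zero block. In terms of the previous paragraph this says precisely that no nonzero vector $\mathbf{x}$ can lie in two distinct $W_i$; equivalently $W_i \cap W_j = \{\mathbf{0}\}$ for all $i \ne j$ (this is also the content of Corollary \ref{cor gen matrices QMDS} in this special case). Hence the sets of nonzero vectors $W_i \setminus \{\mathbf{0}\}$ are pairwise disjoint subsets of $\mathbb{F}_2^k \setminus \{\mathbf{0}\}$, which yields the counting inequality $\sum_{i=1}^n (2^{\dim W_i} - 1) \le 2^k - 1$. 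Bounding each term from below by $2^{k-r} - 1 > 0$ gives $n(2^{k-r}-1) \le 2^k - 1$, that is, $n \le \frac{2^k - 1}{2^{k-r}-1}$.

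The final step is the elementary estimate $\frac{2^k-1}{2^{k-r}-1} \le 2^{r+1}-1$. Writing $m = k - r \in [1,r]$, this is equivalent to $2^{m+r} - 1 \le (2^{r+1}-1)(2^m - 1)$, which after expanding reduces to $(2^m - 2)(2^r - 1) \ge 0$; this holds since $m \ge 1$. I expect the only real subtlety to be the bookkeeping in this chain of inequalities together with checking that $2^{k-r}-1$ is strictly positive so that the division is legitimate; the geometric translation and the counting are routine once set up. It is worth noting that equality holds throughout when $m = 1$, i.e. $k = r+1$, which is exactly the regime realized by Construction \ref{const long qmds 2^(r+1)-1}, so the bound is tight.
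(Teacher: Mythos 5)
Your proof is correct, but it takes a genuinely different route from the paper's. The paper first passes to an $(r+1)$-dimensional subcode (using Proposition \ref{prop qmds subcode is qmds}) so that each column block $G_i$ is an $(r+1)\times r$ binary matrix; its $r+1$ rows are then linearly dependent, which associates to block $i$ a nonempty subset $I_i \subseteq [r+1]$ (the support of a row dependency), and the distance $n-1$ forces the $I_i$ to be pairwise distinct, giving $n \leq 2^{r+1}-1$ directly. You instead keep $k$ general, attach to each block the left kernel $W_i = V_i^{\perp}$ of $G_i$ (of dimension at least $k-r \geq 1$), observe that $d = n-1$ forces $W_i \cap W_j = \{\mathbf{0}\}$ for $i \neq j$, and count nonzero vectors to get the sharper intermediate bound $n \leq \frac{2^k-1}{2^{k-r}-1} = 2^r + \frac{2^r-1}{2^{k-r}-1}$ before relaxing to $2^{r+1}-1$. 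The paper's argument is shorter and purely combinatorial once the subcode reduction is made; yours buys something extra: for $k \geq r+2$ your intermediate bound is strictly stronger and in fact coincides with the bound $n \leq \frac{4}{3}(2^r-1)+1$ that the paper only establishes in its final proposition of Section \ref{sec long qmds} for \emph{dually} QMDS codes via the weight distribution (Theorem \ref{439}) — so your counting argument recovers that refinement in the $d = n-1$ case without any duality hypothesis. Your observation that equality throughout requires $k = r+1$, matching Construction \ref{const long qmds 2^(r+1)-1}, is also consistent with the paper's subsequent discussion. All the individual steps check out: the identification of zero blocks with membership in $W_i$, the full row rank of $G$ guaranteeing $\mathbf{x}G \neq \mathbf{0}$ for $\mathbf{x} \neq \mathbf{0}$, and the algebraic identity $2^k - 2^{r+1} - 2^{k-r} + 2 = (2^{k-r}-2)(2^r-1) \geq 0$.
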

\begin{proof}
By choosing an $ (r+1) $-dimensional subcode, we may assume that there exists a QMDS code of type $ [n,r,r+1,n-1] $ over $ \mathbb{F}_2 $ by Proposition \ref{prop qmds subcode is qmds}. Consider a generator matrix of the code $ G = (G_1 | \ldots |G_n) $, where $ G_i $ is a binary $ (r+1) \times r $ matrix. If $ \mathbf{v}_{i,1} , \ldots, \mathbf{v}_{i,r+1} \in \mathbb{F}_2^r $ denote the rows of $ G_i $, then there must exist a nonempty $ I_i \subseteq [r+1] $ such that $ \sum_{j \in I_i} \mathbf{v}_{i,j} = \mathbf{0} $. Therefore, given a nonzero $ \mathbf{x} = (x_1, \ldots, x_{r+1}) \in \mathbb{F}_2^{r+1} $, the codeword $ \mathbf{c} = \mathbf{x}G $ is zero at least in the $ i $th block if it holds that $ I_i = \{ j \in [r+1] : x_j = 0 \} $ ($ I_i $ is the support of $ \mathbf{x} $). Since the code has distance $ n - 1 $, then the nonempty sets $ I_1, \ldots , I_n $ must be all distinct, and thus $ n \leq 2^{r+1}-1 $.
\end{proof}

In fact, for $ r+2 \leq k \leq 2r $, lengths $ n = 2^{r+1} -1 $ are not achievable over $ \mathbb{F}_2 $ by dually QMDS codes.

\begin{proposition}
Assume that $ r \geq 2 $ and $ r+2 \leq k \leq 2r $. If there exists a dually QMDS code of type $ [n,r,k,n-1] $ over $ \mathbb{F}_2 $, then 
$$ n \leq \frac{4}{3} (2^r-1) + 1 < 2^{r+1} - 2 . $$
\end{proposition}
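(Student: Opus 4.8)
The plan is to run a packing (partial-spread) argument on the null subcodes that vanish in a single block. First I would note that $ r+2 \leq k \leq 2r $ forces $ \lceil \frac{k}{r} \rceil = 2 $, so that the type $ [n,r,k,n-1] $ is consistent: the distance $ n-1 = n - \lceil \frac{k}{r} \rceil + 1 $ is exactly the QMDS condition, and moreover $ k - r \geq 2 $. In fact only the QMDS property (equivalently, $ \dd(\mathcal{C}) = n-1 $) will be used; the dual QMDS hypothesis enters solely through the fact that it implies QMDS.

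For each $ i \in [n] $, consider the null subcode $ \mathcal{C}(\{i\}) = \mathcal{C} \cap \ker(\pi_{\{i\}}) $ of codewords that are zero in the $ i $th block. Since $ \dd(\mathcal{C}) = n-1 $, no nonzero codeword can vanish in two distinct blocks (it would have folded weight $ \leq n-2 < d $), so $ \mathcal{C}(\{i\}) \cap \mathcal{C}(\{j\}) = \mathcal{C}(\{i,j\}) = \{ \mathbf{0} \} $ whenever $ i \neq j $. Consequently the punctured sets $ \mathcal{C}(\{i\}) \setminus \{ \mathbf{0} \} $ are pairwise disjoint subsets of $ \mathcal{C} \setminus \{ \mathbf{0} \} $, which gives the argument the flavour of a spread of subspaces meeting only at the origin.

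Next I would bound dimensions. The $ i $th block accounts for $ r $ coordinates, so $ \mathcal{C}(\{i\}) $ is cut out inside $ \mathcal{C} $ by at most $ r $ linear conditions; equivalently $ \dim \mathcal{C}(\{i\}) = k - \dim \mathcal{C}^{\{i\}} $ with $ \dim \mathcal{C}^{\{i\}} \leq r $, so $ \dim \mathcal{C}(\{i\}) \geq k - r $ and $ |\mathcal{C}(\{i\}) \setminus \{ \mathbf{0}\}| \geq 2^{k-r} - 1 $. Summing the disjointness bound over $ i $ yields
$$ n \left( 2^{k-r} - 1 \right) \leq \sum_{i=1}^n \left( |\mathcal{C}(\{i\})| - 1 \right) \leq |\mathcal{C}| - 1 = 2^k - 1, $$
so that $ n \leq \frac{2^k-1}{2^{k-r}-1} = 2^r + \frac{2^r-1}{2^{k-r}-1} $. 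Using $ k - r \geq 2 $, i.e.\ $ 2^{k-r}-1 \geq 3 $, this gives $ n \leq 2^r + \frac{2^r-1}{3} = \frac{4}{3}(2^r-1) + 1 $. The final strict inequality $ \frac{4}{3}(2^r-1)+1 < 2^{r+1}-2 $ rearranges to $ 2^{r+1} > 5 $, which holds precisely because $ r \geq 2 $.

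I do not expect a serious obstacle: the argument is essentially the classical subspace-packing (anticode) bound adapted to $ r $-dimensional blocks. The only points needing care are the pairwise-trivial intersection extracted from the distance condition and the estimate $ \dim \mathcal{C}(\{i\}) \geq k - r $; the rest is the counting inequality and the elementary simplification from $ k - r \geq 2 $. It is worth remarking that the quantity $ \frac{2^k-1}{2^{k-r}-1} $ is decreasing in $ k $, so the worst case—and hence the stated uniform bound—occurs at $ k = r+2 $, while larger $ k $ in the range impose strictly stronger constraints on $ n $.
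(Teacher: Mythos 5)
Your proof is correct, but it follows a genuinely different (and more elementary) route than the paper. The paper deduces the bound from its Theorem \ref{439}, Item 1, i.e.\ from the inequality $d \leq q^r - 1 + \lfloor \frac{q^r-1}{q^{\varepsilon}-1}\rfloor$ with $\varepsilon = k-r \geq 2$, which in turn is obtained from the nonnegativity of $A_{d+1}$ in the closed-form weight distribution of dually QMDS codes (Theorem \ref{dist}); that machinery genuinely uses the dually QMDS hypothesis, via the exact dimensions of the shortened codes in Theorem \ref{th restricted dually qmds}. Your argument packs the null subcodes $\mathcal{C}(\{i\})$ directly: since $d=n-1$ they meet pairwise only in $\mathbf{0}$, and $\dim \mathcal{C}(\{i\}) = k - \dim \mathcal{C}^{\{i\}} \geq k-r$, giving $n(2^{k-r}-1) \leq 2^k-1$ and hence $n \leq 2^r + \frac{2^r-1}{2^{k-r}-1} \leq \frac{4}{3}(2^r-1)+1$; all steps check out, including the algebraic identities and the final strict inequality for $r \geq 2$. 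In substance your counting is the combinatorial content hiding behind the paper's $A_{d+1}\geq 0$ condition (for a dually QMDS code one even has $|\mathcal{C}(\{i\})| = 2^{k-r}$ exactly), but your derivation is self-contained, avoids MacWilliams-type identities, and — as you note — uses only the primal QMDS property $d = n-\lceil k/r\rceil+1 = n-1$ together with $k \geq r+2$, so it actually proves a slightly stronger statement in which the dual QMDS hypothesis can be dropped. What the paper's route buys in exchange is generality: Theorem \ref{439} bounds $d$ for arbitrary $q$, $r$, and $k>r$ (not only the regime $\lceil k/r\rceil = 2$ over $\mathbb{F}_2$), and the present proposition is just one instantiation of it.
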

\begin{proof}
By Theorem \ref{439}, we have that
$$ n-1 = d \leq 2^r - 1 + \frac{2^r-1}{2^\varepsilon -1}, $$
where $ \varepsilon = r - (r \lceil \frac{k}{r} \rceil - k ) = r - (2r - k) = k-r \geq 2 $. Therefore, we conclude that
$$ n-1 \leq 2^r - 1 + \frac{2^r-1}{4-1} = \frac{4}{3} (2^r-1), $$
and we are done.
%
%
\end{proof}

We conclude the section with some examples of binary linear QMDS codes that are longer than any possible MDS code.

\begin{example}
The linear code over $ \mathbb{F}_2 $ with the following generator matrix is QMDS of type $ [9,2,13,3] $:
$$ G = \left( \begin{array}{cccccccccccc|cc|cc|cc}
1 & 0 & 0 & 0 & 0 & 0 & 0 & 0 & 0 & 0 & 0 & 0 & 1 & 0 & 1 & 0 & 0 & 0 \\
0 & 1 & 0 & 0 & 0 & 0 & 0 & 0 & 0 & 0 & 0 & 0 & 0 & 1 & 0 & 0 & 0 & 1 \\
0 & 0 & 1 & 0 & 0 & 0 & 0 & 0 & 0 & 0 & 0 & 0 & 0 & 0 & 1 & 0 & 0 & 1 \\
0 & 0 & 0 & 1 & 0 & 0 & 0 & 0 & 0 & 0 & 0 & 0 & 1 & 0 & 1 & 1 & 0 & 0 \\
0 & 0 & 0 & 0 & 1 & 0 & 0 & 0 & 0 & 0 & 0 & 0 & 0 & 1 & 1 & 0 & 0 & 0 \\
0 & 0 & 0 & 0 & 0 & 1 & 0 & 0 & 0 & 0 & 0 & 0 & 0 & 0 & 0 & 1 & 0 & 1 \\
0 & 0 & 0 & 0 & 0 & 0 & 1 & 0 & 0 & 0 & 0 & 0 & 1 & 0 & 1 & 0 & 1 & 0 \\
0 & 0 & 0 & 0 & 0 & 0 & 0 & 1 & 0 & 0 & 0 & 0 & 1 & 1 & 0 & 0 & 0 & 1 \\
0 & 0 & 0 & 0 & 0 & 0 & 0 & 0 & 1 & 0 & 0 & 0 & 0 & 1 & 1 & 0 & 0 & 1 \\
0 & 0 & 0 & 0 & 0 & 0 & 0 & 0 & 0 & 1 & 0 & 0 & 0 & 0 & 0 & 1 & 1 & 1 \\
0 & 0 & 0 & 0 & 0 & 0 & 0 & 0 & 0 & 0 & 1 & 0 & 1 & 0 & 0 & 0 & 1 & 1 \\
0 & 0 & 0 & 0 & 0 & 0 & 0 & 0 & 0 & 0 & 0 & 1 & 1 & 1 & 0 & 1 & 0 & 1 \\
0 & 0 & 0 & 0 & 0 & 0 & 0 & 0 & 0 & 0 & 0 & 0 & 1 & 1 & 1 & 1 & 1 & 1 
\end{array} \right). $$
Notice that a (linear or nonlinear) MDS code of distance $ d = 3 $ must satisfy $ n \leq q^r+1 $ by Theorem \ref{th bound on n by d}. In the case $ q=r=2 $ as in this example, it must satisfy $ n \leq 5 $. However, the QMDS code of distance $ d=3 $ in this example satisfies $ n = 9 $. Note also that, for this code, $ \lceil \frac{k+1}{r} \rceil = 7 > 6 \geq 2^r - 1 + \lfloor \frac{2^r-1}{2^\delta -1} \rfloor $, thus the bound on $ k $ in Theorem \ref{439}, Item 2, does not hold for general QMDS codes. 
\end{example}

\begin{example}
The linear code over $ \mathbb{F}_2 $ with the following generator matrix is dually QMDS of type $ [6,2,5,4] $:
$$ G = \left( \begin{array}{cc|cc|cc|cc|cc|cc}
1 & 0 & 0 & 0 & 0 & 1 & 0 & 0 & 0 & 1 & 1 & 1 \\
0 & 1 & 0 & 0 & 0 & 0 & 1 & 1 & 0 & 1 & 0 & 1 \\
0 & 0 & 1 & 0 & 1 & 1 & 0 & 0 & 0 & 1 & 0 & 1 \\
0 & 0 & 0 & 1 & 0 & 0 & 1 & 1 & 1 & 0 & 1 & 0 \\
0 & 0 & 0 & 0 & 1 & 0 & 0 & 1 & 1 & 0 & 0 & 1 
\end{array} \right). $$
According to the MDS conjecture, a (linear or nonlinear) MDS code in $ \mathbb{F}_2^{rn} $ with $ r = 2 $ must satisfy $ n \leq 2^r+1 = 5 $. However, the dually QMDS code in this examples satisfies $ n = 6 $.
\end{example}

\section{Pseudo arcs: A geometric description} \label{sec pseudo arcs}

In this section, we provide a finite-geometry counterpart of linear codes in the folded Hamming distance, which coincides with what is called arcs or pseudo arcs in the finite-geometry literature \cite{ball-additive}. They generalize projective systems associated to linear codes in the classical Hamming distance \cite{tsfasman} and partial spreads \cite{beutel}. Furthermore, pseudo arcs are the building blocks of recent general families of MSRD and PMDS codes with small field sizes \cite{liu, generalMSRD}.

\begin{definition} \label{def pseudo arcs}
A pseudo arc of type $ [n,r,m,t] $ is a tuple $ \mathcal{H} = ( \mathcal{H}_i )_{i=1}^n $ such that
\begin{enumerate}
\item
$ \mathcal{H}_i \subseteq \mathbb{F}_q^m $ is an ($ \mathbb{F}_q $-linear) subspace of $ \dim(\mathcal{H}_i) = r $, for all $ i \in [n] $.
\item
$ t $ is the maximum positive integer such that $ \mathcal{H}_i \cap (\sum_{j \in J} \mathcal{H}_j ) = 0 $, for all $ i \in [n] $ and all $ J \subseteq [n] \setminus \{ i \} $ with $ |J| = t-1 $ (i.e., any $ t $ of the subspaces are in direct sum).
\end{enumerate}
\end{definition}

\begin{remark}
A partial spread \cite{beutel} is a pseudo arc of type $ [n,r,m, t] $ with $ t \geq 2 $.
\end{remark}

\begin{remark}
Definition \ref{def pseudo arcs} coincides with the definition at the beginning of \cite[Sec. 4]{ball-additive} after projectivization, except the parameter $ t $ is not considered to be maximum in \cite{ball-additive}.
\end{remark}

We need to consider nondegenerate pseudo arcs in order to associate them to linear codes in the folded Hamming distance.

\begin{definition} \label{def degenerate pseudo arc}
A pseudo arc $ \mathcal{H} = ( \mathcal{H}_i )_{i=1}^n $ of type $ [n,r,m,t] $ is nondegenerate if $ \sum_{i=1}^n \mathcal{H}_i = \mathbb{F}_q^m $. 
\end{definition}

We will now define a correspondence between pseudo arcs and linear codes in the folded Hamming distance.

\begin{definition} \label{def corresp pseudo arcs}
Let $ \mathcal{H} = ( \mathcal{H}_i )_{i=1}^n $ be a pseudo arc of type $ [n,r,m,t] $. We say that $ \mathbf{h} = (\mathbf{h}_{i,j})_{i=1,j=1}^{n,r} $ is a basis of $ \mathcal{H} $ if $ \mathbf{h}_{i,1},\ldots, \mathbf{h}_{i,r} \in \mathbb{F}_q^m $ are $ m \times 1 $ column vectors forming a basis of $ \mathcal{H}_i $, for $ i \in [n] $. Next, define the $ m \times (rn) $ matrix
$$ H_\mathbf{h} = (\mathbf{h}_{1,1}, \ldots, \mathbf{h}_{1,r} | \ldots | \mathbf{h}_{n,1} , \ldots , \mathbf{h}_{n,r}). $$
Finally, we define the linear code $ \mathcal{C}_\mathbf{h} \subseteq \mathbb{F}_q^{rn} $ as that with parity-check matrix $ H_\mathbf{h} $.

Conversely, given a linear code $ \mathcal{C} \subseteq \mathbb{F}_q^{rn} $ of dimension $ k = rn - m $ with a (full rank) parity-check matrix $ H $, if $ \mathbf{h}_{i,j} \in \mathbb{F}_q^{m \times 1} $ is the $ ((i-1)r + j) $th column of $ H $, then we define $ \mathcal{H}_H = (\mathcal{H}_i)_{i=1}^n $, where $ \mathcal{H}_i $ is the subspace generated by $ \mathbf{h}_{i,1}, \ldots, \mathbf{h}_{i,r} $, for $ i \in [n] $.
\end{definition}

We have the following exact correspondence between parameters.

\begin{theorem} \label{th parameters pseudo arcs}
\begin{enumerate}
\item
If $ \mathcal{H} = (\mathcal{H}_i)_{i=1}^n $ is a nondegenerate pseudo arc of type $ [n,r,m,t] $ with basis $ \mathbf{h} $, then $ \mathcal{C}_\mathbf{h} $ is a linear code of type $ [n,r,k,d] $ with $ k = rn - m $ and $ d = t+1 $.
\item
If $ \mathcal{C} $ is a linear code of type $ [n,r,k,d] $ and $ H $ is one of its parity-check matrices, then $ \mathcal{H}_H $ is a nondegenerate pseudo arc of type $ [n,r,m,t] $ with $ m = rn - k $ and $ t = d-1 $.
\end{enumerate}
\end{theorem}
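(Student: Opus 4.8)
The plan is to handle both items through a single dictionary between the column blocks of a parity-check matrix and the subspaces of the pseudo arc. The key identity is that for any $ S \subseteq [n] $, the submatrix of $ H_\mathbf{h} $ formed by the column blocks indexed by $ S $ has its column space equal to $ \sum_{i \in S} \mathcal{H}_i $, so its rank is $ \dim(\sum_{i \in S} \mathcal{H}_i) $. Since each $ \mathcal{H}_i $ has dimension at most $ r $, this rank is at most $ r|S| $, with equality precisely when the subspaces $ \{ \mathcal{H}_i : i \in S \} $ are in direct sum. This equivalence is the engine that converts the geometric condition defining the pseudo-arc parameter $ t $ into the rank condition on $ H $ that Proposition \ref{prop d from generator and parity matrices} attaches to the minimum distance.

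For Item 1, I would first note that nondegeneracy, $ \sum_{i=1}^n \mathcal{H}_i = \mathbb{F}_q^m $, says exactly that the columns of $ H_\mathbf{h} $ span $ \mathbb{F}_q^m $, i.e.\ $ {\rm rank}(H_\mathbf{h}) = m $. Hence $ H_\mathbf{h} $ is a genuine full-rank parity-check matrix and $ \mathcal{C}_\mathbf{h} $ has dimension $ k = rn - m $. For the distance I apply the dictionary: for any $ S \subseteq [n] $ the $ S $-block submatrix of $ H_\mathbf{h} $ has rank $ r|S| $ if and only if the subspaces indexed by $ S $ are in direct sum. Unwinding Definition \ref{def pseudo arcs}, the condition $ \mathcal{H}_i \cap (\sum_{j \in J} \mathcal{H}_j) = 0 $ quantified over all $ i $ and all $ (t-1) $-element sets $ J \subseteq [n] \setminus \{ i \} $ is exactly the statement that every $ t $-subset of the subspaces sums directly, so the pseudo-arc parameter $ t $ is the largest $ s $ for which any $ s $ column blocks of $ H_\mathbf{h} $ have rank $ rs $. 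By Item 2 of Proposition \ref{prop d from generator and parity matrices}, this largest $ s $ equals $ d-1 $, giving $ d = t+1 $.

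Item 2 is the inverse translation and I would run the same argument backwards. Given $ \mathcal{C} $ of type $ [n,r,k,d] $ with full-rank parity-check matrix $ H $ (so $ H $ has $ m = rn-k $ rows), I define $ \mathcal{H}_i $ as the span of the $ i $th column block and verify the three defining properties: full rank of $ H $ forces $ \sum_{i=1}^n \mathcal{H}_i = \mathbb{F}_q^m $, which is nondegeneracy; the case $ |S| = 1 $ of the dictionary together with $ t = d-1 \geq 1 $ shows each column block has rank $ r $, so $ \dim(\mathcal{H}_i) = r $; and the same equivalence between the rank condition and the direct-sum condition, combined again with Proposition \ref{prop d from generator and parity matrices}(2), yields $ t = d-1 $.

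The dimension bookkeeping ($ m = rn-k $ and nondegeneracy $ \Leftrightarrow $ full rank) is routine. The one step that needs care is matching the two notions of maximality, namely that the maximal $ t $ in Definition \ref{def pseudo arcs} and the maximal $ d-1 $ in Proposition \ref{prop d from generator and parity matrices} name the same integer; this rests on the equivalence "$ \dim(\sum_{i \in S} \mathcal{H}_i) = r|S| \Leftrightarrow $ the sum over $ S $ is direct $ \Leftrightarrow \mathcal{H}_i \cap (\sum_{j \in S \setminus \{ i \}} \mathcal{H}_j) = 0 $ for each $ i \in S $" holding uniformly across all subsets of a fixed size. I would also flag that the requirement $ \dim(\mathcal{H}_i) = r $ in Item 2 presumes $ d \geq 2 $: if $ d = 1 $ then some block of $ H $ is rank-deficient and $ \mathcal{H}_H $ fails condition 1 of Definition \ref{def pseudo arcs}, so the correspondence is understood for $ d \geq 2 $ (equivalently $ t \geq 1 $).
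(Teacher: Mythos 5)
Your proof is correct and follows essentially the same route as the paper's: nondegeneracy is identified with $H_{\mathbf{h}}$ having full rank $m$ (giving $k = rn-m$), and $d = t+1$ is obtained by matching Item 2 of Proposition \ref{prop d from generator and parity matrices} with Item 2 of Definition \ref{def pseudo arcs} via the rank/direct-sum dictionary. The paper states this only as a two-line sketch, so your expanded version (including the observation that $t\geq 1$, i.e.\ $d\geq 2$, is needed for each $\mathcal{H}_i$ to have dimension $r$) simply fills in the details of the same argument.
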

\begin{proof}
Item 2 is proven similarly, thus we only prove Item 1. The fact that $ \mathcal{H} $ is nondegenerate is equivalent to $ H_\mathbf{h} $ having rank $ m $. Thus $ k = \dim(\mathcal{C}_\mathbf{h}) = rn - m $. Finally $ d = t+1 $ follows by combining Proposition \ref{prop d from generator and parity matrices} (Item 2) and Definition \ref{def pseudo arcs} (Item 2).
\end{proof}

\begin{remark}
A similar connection is made in \cite{liu}. However, the notion of degenerateness of pseudo arcs is not considered there. In particular, the relations between the corresponding parameters could only be given in \cite{liu} as bounds (which are not always tight) instead of as exact equalities.
\end{remark}

However the correspondence in Definition \ref{def corresp pseudo arcs} is not bijective. In fact, for every pseudo arc, we may obtain multiple codes, and viceversa. In order to obtain a bijection, we need to consider equivalent codes (Definition \ref{def equivalence}) and equivalent pseudo arcs, which we now define.

\begin{definition} \label{def equivalence pseudo arcs}
Given pseudo arcs $ \mathcal{H} = (\mathcal{H}_i)_{i=1}^n $ and $ \mathcal{H}^\prime = (\mathcal{H}^\prime_i )_{i=1}^n $, both of type $ [n,r,m,t] $, an equivalence between them is a pair $ (\varphi, \sigma) $, where $ \sigma : [n] \longrightarrow [n] $ is a permutation and $ \varphi : \mathbb{F}_q^m \longrightarrow \mathbb{F}_q^m $ is a vector space isomorphism such that $ \mathcal{H}^\prime_{\sigma(i)} = \varphi(\mathcal{H}_i) $, for all $ i \in [n] $. If it exists, we say that $ \mathcal{H} $ and $ \mathcal{H}^\prime $ are equivalent (and clearly one is degenerate if, and only if, so is the other).
\end{definition}

We may now obtain a bijection between equivalence classes of linear codes (Definition \ref{def equivalence}) and pseudo arcs (Definition \ref{def equivalence pseudo arcs}). 

\begin{theorem} \label{th bijection pseudo arcs}
\begin{enumerate}
\item
Let $ \mathcal{H} $ and $ \mathcal{H}^\prime $ be equivalent nondegenerate pseudo arcs with bases $ \mathbf{h} $ and $ \mathbf{h}^\prime $, respectively. Then $ \mathcal{C}_\mathbf{h} $ and $ \mathcal{C}_{\mathbf{h}^\prime} $ are equivalent.
\item
Let $ \mathcal{C} $ and $ \mathcal{C}^\prime $ be equivalent linear codes in $ \mathbb{F}_q^{rn} $ with parity-check matrices $ H $ and $ H^\prime $, respectively. Then $ \mathcal{H}_H $ and $ \mathcal{H}_{H^\prime} $ are equivalent.
\end{enumerate}
\end{theorem}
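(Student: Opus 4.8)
The plan is to show that the two kinds of ambiguity built into the correspondence of Definition \ref{def corresp pseudo arcs} match exactly. A pseudo arc records the subspaces $ \mathcal{H}_i $ but not a choice of basis, so replacing a basis amounts to right-multiplying the $ i $th column block of $ H_\mathbf{h} $ by some $ B_i \in \mathrm{GL}_r(\mathbb{F}_q) $; dually, a code determines its parity-check matrix only up to left multiplication by an invertible $ m \times m $ matrix, where $ m = rn - k $. The engine of both proofs is that $ \mathcal{C}_\mathbf{h} $ is unchanged when $ H_\mathbf{h} $ is left-multiplied by an invertible matrix, combined with the explicit description of isometries in Proposition \ref{prop isometries}. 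Throughout I would write $ H_i $ (resp. $ H'_i $) for the $ i $th $ m \times r $ column block of the relevant parity-check matrix, so that $ \mathbf{c} = (\mathbf{c}_1, \ldots, \mathbf{c}_n) \in \mathcal{C}_\mathbf{h} $ if and only if $ \sum_{i=1}^n H_i \mathbf{c}_i^\intercal = 0 $.

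For Item 1, let $ (\varphi, \sigma) $ be the given equivalence, so $ \mathcal{H}'_{\sigma(i)} = \varphi(\mathcal{H}_i) $, and let $ P \in \mathrm{GL}_m(\mathbb{F}_q) $ represent $ \varphi $. Then both $ P H_i $ and $ H'_{\sigma(i)} $ have column span $ \mathcal{H}'_{\sigma(i)} $, so there is a unique $ B_i \in \mathrm{GL}_r(\mathbb{F}_q) $ with $ P H_i = H'_{\sigma(i)} B_i $. Since $ \mathcal{C}_\mathbf{h} $ also has parity-check matrix $ P H_\mathbf{h} $, a word $ \mathbf{c} $ lies in $ \mathcal{C}_\mathbf{h} $ iff $ \sum_i H'_{\sigma(i)} (B_i \mathbf{c}_i^\intercal) = 0 $. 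Setting $ \mathbf{d}_{\sigma(i)} = \mathbf{c}_i B_i^\intercal $ and reindexing, this reads $ \sum_j H'_j \mathbf{d}_j^\intercal = 0 $, i.e.\ $ \mathbf{d} \in \mathcal{C}_{\mathbf{h}'} $. The assignment $ \mathbf{c} \mapsto \mathbf{d} $ has the form $ \phi(\mathbf{c}_1, \ldots, \mathbf{c}_n) = (\mathbf{c}_{\sigma^{-1}(1)} A_1, \ldots, \mathbf{c}_{\sigma^{-1}(n)} A_n) $ with $ A_j = B_{\sigma^{-1}(j)}^\intercal \in \mathrm{GL}_r(\mathbb{F}_q) $, so $ \phi $ is an isometry by Proposition \ref{prop isometries} and $ \mathcal{C}_{\mathbf{h}'} = \phi(\mathcal{C}_\mathbf{h}) $, giving equivalence.

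For Item 2, let $ \phi(\mathbf{c}_1, \ldots, \mathbf{c}_n) = (\mathbf{c}_{\sigma(1)} A_1, \ldots, \mathbf{c}_{\sigma(n)} A_n) $ be an isometry with $ \mathcal{C}' = \phi(\mathcal{C}) $, where $ A_j \in \mathrm{GL}_r(\mathbb{F}_q) $ by Proposition \ref{prop isometries}. The first step is to produce a parity-check matrix of $ \mathcal{C}' $ from $ H $: writing $ \mathbf{d} = \phi(\mathbf{c}) $, so $ \mathbf{c}_i = \mathbf{d}_{\sigma^{-1}(i)} A_{\sigma^{-1}(i)}^{-1} $, and substituting into $ \sum_i H_i \mathbf{c}_i^\intercal = 0 $ shows that the matrix $ \tilde H $ with $ j $th block $ \tilde H_j = H_{\sigma(j)} (A_j^{-1})^\intercal $ is a parity-check matrix of $ \mathcal{C}' $. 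Since $ H' $ is another full-rank parity-check matrix of $ \mathcal{C}' $, there is $ P \in \mathrm{GL}_m(\mathbb{F}_q) $ with $ H' = P \tilde H $, hence the $ j $th block of $ H' $ is $ P H_{\sigma(j)} (A_j^{-1})^\intercal $. As $ (A_j^{-1})^\intercal $ is invertible, the column span $ \mathcal{H}'_j $ of this block equals $ P(\mathcal{H}_{\sigma(j)}) $. Letting $ \varphi $ be the isomorphism of $ \mathbb{F}_q^m $ given by $ P $, we obtain $ \mathcal{H}'_{\sigma^{-1}(i)} = \varphi(\mathcal{H}_i) $, so $ (\varphi, \sigma^{-1}) $ is an equivalence of $ \mathcal{H}_H $ and $ \mathcal{H}_{H'} $.

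Both arguments are elementary linear algebra, and the only point requiring genuine care is bookkeeping: keeping the transposes straight, tracking the appearance of $ \sigma $ versus $ \sigma^{-1} $, and checking at each stage that the matrices $ B_i $, $ A_j $ and $ P $ are invertible, which is precisely what guarantees that column spans are preserved and that the block maps built in Item 1 fall under Proposition \ref{prop isometries}. I expect this bookkeeping, rather than any conceptual difficulty, to be the main obstacle. An alternative and essentially equivalent route would be to observe that $ H $ is a generator matrix of $ \mathcal{C}^\perp $ and to invoke Corollary \ref{cor dual equivalence}, reducing both items to the classical correspondence between equivalences of codes and equivalences of their projective systems; but carrying out the direct matrix computation above is cleaner and self-contained.
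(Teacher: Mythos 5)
Your proof is correct and takes essentially the same route as the paper's: both translate the pseudo-arc equivalence into the matrix identity $ P H_i = H^\prime_{\sigma(i)} B_i $ (the paper's equation $ B H_{\mathbf{h}} = H_{\mathbf{h}^\prime} P_\sigma \,{\rm Diag}(A_1,\ldots,A_n) $) and then recognize the permutation-plus-block-diagonal transformation as a folded-Hamming isometry via Proposition \ref{prop isometries}. The only difference is cosmetic: the paper reads that identity as an equivalence of the dual codes generated by $ H_{\mathbf{h}} $ and $ H_{\mathbf{h}^\prime} $ and then applies Corollary \ref{cor dual equivalence} --- precisely the alternative you sketch in your last paragraph --- whereas you build the isometry between $ \mathcal{C}_{\mathbf{h}} $ and $ \mathcal{C}_{\mathbf{h}^\prime} $ directly from the parity-check equations, and you also write out Item 2 in full, which the paper dismisses as ``proven similarly.''
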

\begin{proof}
Item 2 is proven similarly, thus we only prove Item 1. Let $ \mathcal{H} = (\mathcal{H}_i)_{i=1}^n $ and $ \mathcal{H}^\prime = (\mathcal{H}^\prime_i)_{i=1}^n $, both of type $ [n,r,m,t] $, and let $ (\varphi,\sigma) $ be the equivalence between them. There exists an invertible matrix $ B \in {\rm GL}_m(\mathbb{F}_q) $ such that $ \varphi(\mathbf{x}) = B \mathbf{x} $, for every $ m \times 1 $ column vector $ \mathbf{x} \in \mathbb{F}_q^m $. Now the equality $ \mathcal{H}^\prime_{\sigma(i)} = \varphi(\mathcal{H}_i) $ means that there exists an invertible matrix $ A_i \in {\rm GL}_r(\mathbb{F}_q) $ such that 
\begin{equation}
B (\mathbf{h}_{i,1}, \ldots, \mathbf{h}_{i,r}) = (\mathbf{h}^\prime_{\sigma(i),1}, \ldots, \mathbf{h}^\prime_{\sigma(i),r}) A_i,
\label{eq proof bijection pseudo arcs}
\end{equation}
for every $ i \in [n] $. Let $ P_\sigma $ be the only $ (rn) \times (rn) $ matrix over $ \mathbb{F}_q $ such that $ (\mathbf{c}_1, \ldots, \mathbf{c}_n) P_\sigma = (\mathbf{c}_{\sigma(1)}, \ldots, \mathbf{c}_{\sigma(n)}) $, for all $ \mathbf{c}_1, \ldots, \mathbf{c}_n \in \mathbb{F}_q^r $. Then (\ref{eq proof bijection pseudo arcs}) is equivalent to
$$ B H_\mathbf{h} = H_{\mathbf{h}^\prime} P_\sigma {\rm Diag}(A_1, \ldots, A_n), $$
where $ {\rm Diag}(A_1, \ldots, A_n) $ denotes the block diagonal matrix with $ A_1, \ldots, A_n $ in the main diagonal. Thus we deduce that $ \mathcal{C}_\mathbf{h}^\perp $ and $ \mathcal{C}_{\mathbf{h}^\prime}^\perp $ are equivalent by Proposition \ref{prop isometries}. Hence $ \mathcal{C}_\mathbf{h} $ and $ \mathcal{C}_{\mathbf{h}^\prime} $ are equivalent by Corollary \ref{cor dual equivalence}, and we are done.
\end{proof}

Combining Theorems \ref{th parameters pseudo arcs} and \ref{th bijection pseudo arcs}, we conclude the following.

\begin{corollary}
The correspondence in Definition \ref{def corresp pseudo arcs} induces a bijection between equivalent classes of linear codes of type $ [n,r,k,d] $ and equivalent classes of nondegenerate pseudo arcs of type $ [n,r,m,t] $ with $ m = rn - k $ and $ t = d-1 $.
\end{corollary}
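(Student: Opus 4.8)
The plan is to assemble the corollary from the two preceding theorems, treating separately (i) the well-definedness of the two assignments on equivalence classes and (ii) the fact that, so defined, they are mutually inverse. The parameter bookkeeping is already handled by Theorem \ref{th parameters pseudo arcs}, which guarantees that a nondegenerate pseudo arc of type $[n,r,m,t]$ produces a linear code of type $[n,r,k,d]$ with $k = rn-m$ and $d = t+1$, and conversely; hence the two constructions of Definition \ref{def corresp pseudo arcs} really do map between the two families named in the statement.

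First I would check that $[\mathcal{H}] \mapsto [\mathcal{C}_\mathbf{h}]$ is a well-defined map on equivalence classes. There are two sources of ambiguity: the choice of a basis $\mathbf{h}$ of $\mathcal{H}$, and the choice of a representative $\mathcal{H}$ within its class. Both are resolved by Theorem \ref{th bijection pseudo arcs}, Item 1: applying it to $\mathcal{H} = \mathcal{H}^\prime$ with the identity equivalence $(\mathrm{id},\mathrm{id})$ shows that two bases of the same pseudo arc yield equivalent codes, while applying it to genuinely equivalent pseudo arcs shows that the class of $\mathcal{C}_\mathbf{h}$ does not depend on the representative. The symmetric argument, using Theorem \ref{th bijection pseudo arcs}, Item 2, shows that $[\mathcal{C}] \mapsto [\mathcal{H}_H]$ is well-defined independently of the chosen parity-check matrix $H$.

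The heart of the proof is then to verify that the two maps are mutually inverse, and here I would argue that the round trips are in fact the identity on the nose for the natural choices, so that no appeal to equivalence is even needed. Starting from a pseudo arc $\mathcal{H}$ with basis $\mathbf{h}$, the code $\mathcal{C}_\mathbf{h}$ has parity-check matrix $H_\mathbf{h}$, whose $i$th column block is exactly $(\mathbf{h}_{i,1},\ldots,\mathbf{h}_{i,r})$; feeding $H_\mathbf{h}$ into the converse construction recovers the subspace generated by $\mathbf{h}_{i,1},\ldots,\mathbf{h}_{i,r}$, which is $\mathcal{H}_i$ by the definition of a basis, so $\mathcal{H}_{H_\mathbf{h}} = \mathcal{H}$. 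Conversely, starting from a code $\mathcal{C}$ with parity-check matrix $H$, the tuple $\mathcal{H}_H$ is a pseudo arc of type $[n,r,m,t]$ by Theorem \ref{th parameters pseudo arcs}, Item 2, so each $\mathcal{H}_i$ has dimension exactly $r$ and is generated by precisely the $r$ columns of the $i$th block of $H$; these columns therefore form a basis $\mathbf{h}$ of $\mathcal{H}_H$ with $H_\mathbf{h} = H$, whence $\mathcal{C}_\mathbf{h} = \mathcal{C}$. On equivalence classes these identities give $[\mathcal{H}_{H_\mathbf{h}}] = [\mathcal{H}]$ and $[\mathcal{C}_\mathbf{h}] = [\mathcal{C}]$, so the two well-defined maps compose to the identity in both directions and are inverse bijections.

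I expect the only genuine subtlety to be the observation that the columns within each block of a (full-rank) parity-check matrix are automatically linearly independent, i.e.\ that $\dim(\mathcal{H}_i) = r$; this is exactly what lets them serve as a basis in the second round trip. Rather than verify it by hand, I would simply invoke Theorem \ref{th parameters pseudo arcs}, Item 2, which already asserts that $\mathcal{H}_H$ has pseudo-arc type $[n,r,m,t]$ and in particular that every $\mathcal{H}_i$ has dimension $r$. With that in place the argument is pure bookkeeping: the real content lives in Theorems \ref{th parameters pseudo arcs} and \ref{th bijection pseudo arcs}, and the corollary only packages them into the statement that the correspondence of Definition \ref{def corresp pseudo arcs} descends to a bijection of equivalence classes.
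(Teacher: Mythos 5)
Your proposal is correct and follows the same route as the paper, which states the corollary as an immediate consequence of Theorems \ref{th parameters pseudo arcs} and \ref{th bijection pseudo arcs} without writing out the details; you have simply made explicit the well-definedness on equivalence classes (via Theorem \ref{th bijection pseudo arcs}) and the fact that the two round trips are identities on representatives. The parameter matching and the observation that each column block of a full-rank parity-check matrix spans an $r$-dimensional space are handled exactly as the paper intends, by citing Theorem \ref{th parameters pseudo arcs}.
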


As a consequence, all of the results in this paper concerning linear codes in the folded Hamming distance can be immediately translated to results for nondegenerate pseudo arcs. We leave the details to the reader.

\section*{Acknowledgements}

This work has been supported by MCIN/AEI/10.13039/501100011033 and the European Union NextGenerationEU/PRTR (Grant no. TED2021-130358B-I00), and by MICIU/AEI/ 10.13039/501100011033 and ERDF/EU (Grant no. PID2022-138906NB-C21).

\bibliographystyle{plain}

\end{document}